\newcommand{\mypara}[1]{\vspace{-2mm}\paragraph*{#1}}
\newcommand{\maths}[1]{{\mathbb #1}}  
\newcommand{\RR}{\maths{R}}
\newcommand{\NN}{\maths{N}}
\newcommand{\CC}{\maths{C}}
\newtheorem{Theorem}{Theorem}[section]
\newtheorem{Lemma}{Lemma}[section]    
\newtheorem{Def sec}{Definition}[section]   
\newtheorem{Definition}{Definition}[section]
\newcommand{\SE}{\mathrm{SE}}
\DeclareRobustCommand{\Adrien}[1]{ {\begingroup\sethlcolor{orange}\hl{(Adrien:) #1}\endgroup} }
\DeclareRobustCommand{\maks}[1]{ {\begingroup\sethlcolor{green}\hl{(Maks:) #1}\endgroup} }
\title{Equivalence Between SE(3) Equivariant Networks via Steerable Kernels and Group Convolution}
\author{%
  Adrien Poulenard \\
  Stanford University\\
  \texttt{adrien.poulenard@gmail.com} \\
  \And
  Maks Ovsjanikov \\
  LIX, Ecole Polytechnique, IP Paris\\
  \texttt{maks@lix.polytechnique.fr} \\
  \And
  Leonidas J.~Guibas \\
  Stanford University\\
  \texttt{guibas@cs.stanford.edu} \\
}
\begin{document}

\maketitle

\begin{abstract}
	A wide range of techniques have been proposed in recent years for designing neural networks for 3D data that are equivariant under rotation and translation of the input. Most approaches for equivariance under the Euclidean group $\mathrm{SE}(3)$ of rotations and translations fall within one of the two major categories. The first category consists of methods that use  $\mathrm{SE}(3)$-convolution which generalizes classical $\mathbb{R}^3$-convolution on signals over $\mathrm{SE}(3)$. Alternatively, it is possible to use \textit{steerable convolution} which achieves $\mathrm{SE}(3)$-equivariance by imposing constraints on $\mathbb{R}^3$-convolution of tensor fields. It is known by specialists in the field that the two approaches are equivalent, with steerable convolution being the Fourier transform of $\mathrm{SE}(3)$ convolution. Unfortunately, these results are not widely known and moreover the exact relations between deep learning architectures built upon these two approaches have not been precisely described in the literature on equivariant deep learning. In this work we provide an in-depth analysis of both methods and their equivalence and relate the two constructions to multiview convolutional networks.  Furthermore, we provide theoretical justifications of separability of $\mathrm{SE}(3)$ group convolution, which explain the applicability and success of some recent approaches. Finally, we express different methods using a single coherent formalism and provide explicit formulas that relate the kernels learned by different methods. In this way, our work helps to unify different previously-proposed techniques for achieving roto-translational equivariance, and helps to shed light on both the utility and precise differences between various alternatives. We also derive new TFN non-linearities from our equivalence principle and test them on practical benchmark datasets.

\end{abstract}

\section{Introduction}

The recent development of 3D deep learning has raised new challenges in manipulating and analyzing 3D data. The pose of 3D objects or the choice coordinate frame used to describe them is critical to the performance of many algorithms. Initially, algorithms were tested on datasets of aligned objects with consistent pose \cite{chang2015shapenet,wu20153d}. The observation of a generalization gap to arbitrary poses motivated the development of algorithms which are robust to change of position and orientation. Another motivation comes from physics. Physical laws are invariant to the choice of coordinates used to represent the physical phenomena; therefore, algorithms for learning physics should share the same behavior across different coordinate systems. This is formalized by the notion of equivariance. Intuitively a map is equivariant if its output changes in a predictable manner given a transformation of its input. In practice, this allows to obtain the same behavior under a transformation of the input, greatly improving the efficiency of learning. 


Given a group $G$ acting on two sets $A,B$, we say that a map $F:A \rightarrow B$ is equivariant to the action of $G$ ($G$-equivariant) if for all $g \in G$ acting on the input $x \in A$ of $F$ there is a corresponding action of $g$ on the output $F(x) \in B$ which is independent from $x$, i.e. for all $g \in G,x \in A$ we have $F(g.x) = g.F(x)$.

We are typically interested in the case where $F$ is a neural network with learnable parameters. A practical way to achieve equivariance is through convolution. For instance convolutional layers for image processing are equivariant to translation, this is also known as the weight sharing property. Image CNNs processes all regions of the image in the same way, and this property is often attributed as fundamental to their success. 


In addition to translation equivariance, as mentioned above, for 3D data, one is often interested in developing methods that would be equivariant under changes of orientation in 3D space. To address this challenge, in recent years many approaches have been developed for rotation equivariance of signals in 3D
\cite{thomas2018tensor, kondor2018n, cohen2018spherical, kondor2018clebsch, weiler20183d, worrall2018cubenet, esteves2019equivariant, anderson2019cormorant, fuchs2020se, deng2021vector, chen2021equivariant}.

Broadly, there exist two main classes of methods for designing rotationally equivariant neural networks for 3D data. The first option, 3D Steerable Convolution, which we review in \cref{sec:tfn}, is to operate directly over 3D signals in a convolution fashion relying on the concept of steerable kernel bases \cite{freeman1991design}. The idea is to achieve $\mathrm{SE}(3)$-equivariance (translation and rotation) by inheriting translation equivariance from convolution and rotation equivariance from the steerable basis. This has been investigated in \cite{weiler20183d} for voxel grids and in \cite{thomas2018tensor, kondor2018n} for point-clouds.

 Alternatively rather than manipulating signals over the 3D data it is also possible to consider functions defined over the group $G$ to which we want to be equivariant. In this context, a natural way to achieve $G$-equivariance, used in several recent works 
\cite{cohen2018spherical, esteves2019equivariant, chen2021equivariant} is through \textit{group convolution} which we review in \cref{sec:g_conv}. This operation generalizes convolution to arbitrary groups and is deeply related to equivariance. 


Remarkably, these two approaches are known to be equivalent by specialists of equivariant deep learning. Namely, steerable convolution can be seen as a Fourier transform of group convolution. For example, the equivalence in the $\mathrm{SE}(3)$ case is stated without proof in the slides by E. Bekkers \cite{Bekkers21slides} (Slide 94) and discussed briefly in the associated courses notes by \cite{Bekkers21} (Section 5.2). 
Nevertheless, despite the existence of such results, to the best of our knowledge 
there is a lack of dedicated treatment with explicit proofs regarding the exact relation between $\mathrm{SE}(3)$-equivariant architectures. Furthermore, these equivalence results do not seem to be currently known to the broader audience, and especially practitioners that have designed a wide range of alternative equivariant deep learning architectures for 3D point clouds. In particular recent works proposing $\mathrm{SE}(3)$-group convolution architecture do not mention this equivalence \cite{worrall2018cubenet, chen2021equivariant}. Our main goal, therefore, is to make these results both more explicit and more accessible to a broader audience interested in equivariant deep learning.


In this work we provide a detailed proof and analysis of the equivalence between steerable convolution and group convolution in the case of $\SE(3)$, as well as explicit formulas to translate expressions and filters between the two representations in \cref{th:tfn_vs_se3_conv_2}. We have not seen this translation formulas explicitly stated in equivariant networks literature. These results may not be surprising to specialists. Nevertheless we aim to establish and convey these theoretical observations to a broader audience, interested designing equivariant networks, as they show the relations and equivalence of different approaches.

Our discussion is general and applies to both the continuous setting of signals defined over $\mathbb{R}^3$ or $\mathrm{SE}(3)$ but also to the discrete setting when dealing with data represented as point clouds. Furthermore, our analysis allows to express different alternative approaches using a single coherent formalism and provide explicit formulas relating different methods. Finally, this theoretical equivalence allows us to define and interpret practical choices of non-linearities (\cref{sec:non_linearities}) within existing frameworks that we evaluate on real benchmark datasets in \cref{sec:results}.

\section{Background and Overview}
\label{sec:overview}

\mypara{Equivariance via multi-view networks}
Any non rotation equivariant network $F$ operating on signals over $f:\RR^3 \rightarrow \RR$ can be made rotation equivariant by considering the associated multi-view network defined by $\Tilde{F}(f)(x,R) := F(R.f)(x)$ where $R \in \mathrm{SO}(3)$ acts of $f$ by $(R.f)(x) := f(R^{-1}x)$. We immediately have the equivariance relation $\Tilde{F}(R.f)(x,H) = \Tilde{F}(f)(x,RH)$. Variants of this approach have been considered in \cite{qi2016volumetric,mehr2018manifold}. Such approaches however are limited as there is no information aggregation across multiple views at every layer, in practice pose information is aggregated at a later stage (at the last layer). In \cref{sec:multiview} we show that multi-view-CNNs are special cases of $\mathrm{SE}(3)$-CNNs.  

\mypara{Equivariance via steerable filters} Filters used by standard CNNs do not have predictable behavior under rotation. However it is possible do design filters which are equivariant to rotations \cite{freeman1991design}. Such filters are based on so-called steerable kernels bases which are vector valued maps $\kappa: \RR^d \rightarrow \RR^K$ ($d=2,3$) such that any rotation $R \in \mathrm{SO}(d)$ of the input $x \in \RR^d$ induces a rotation $D(R) \in \mathrm{SO}(d)$ of the output $\kappa(Rx) = D(R)\kappa(x)$. In 2D such bases can be obtained using the Fourier basis, while in 3D it can be obtained using Spherical Harmonics. This property of steerable kernels transfers to convolution features, for any scalar signal $f:\RR^d \rightarrow \RR$ the vector field $v(f) := f \ast \kappa$ satisfies $v(R.f)(x) = D(R)v(f)(R^{-1}x)$ where $R$ acts of $f$ by $(R.f)(x) := f(R^{-1}x)$. Steerable CNNs build on this idea to produce and operate on equivaiant fields satisfying the aforementioned equivariance relation. Several other works \cite{thomas2018tensor, kondor2018n} introduced equivalent constructions on point cloud domains, and we present the construction of \cite{thomas2018tensor} in \cref{sec:tfn}.

\mypara{Equivariance via group convolution} Another way to approach equivariance is to consider functions defined over the group $G$ to which we want to be equivariant. The action of $G$ on functions $f:G\rightarrow \RR$ is given by $(g.f)(x) := f(g^{-1}x)$. A natural way to achieve $G$-equivariance is through group convolution ($G$-convolution) which generalizes the usual notion of convolution to arbitrary groups \cite{kondor2018generalization}, group convolution takes a function and a kernel $f,\kappa:G \rightarrow \RR$ and produces a function $f \ast_G \kappa: G \rightarrow \RR$ satisfying the equivariance property $(g.f) \ast_G \kappa = g.(f \ast_G \kappa)$ for all $g \in G$. In \cite{kondor2018generalization}, Kondor and Trivedi show that \textit{any} equivariant continuous linear map between continuous functions on a compact group $G$ must be given by a group convolution. A similar result follows from \cite{cohen2018intertwiners, cohen2019general} when the group is unimodular. Group CNNs based on group convolutions inherit the same equivariance property. Multiple designs have been proposed, in 2D $\mathrm{SE}(2)$ \cite{cohen2016group}
and for various subgroups of $\mathrm{SE}(3)$: $\mathrm{SO}(3)$ \cite{cohen2018spherical}, $\RR^3 \rtimes C$ \cite{worrall2018cubenet}, $H$ \cite{esteves2019equivariant}, $\RR^3 \rtimes H$  \cite{chen2021equivariant} where $C,H \subset \mathrm{SO}(3)$ are the symmetry group of the cube and the icosahedral (symmetries of the icosahedron) groups respectively. We present the general definition of group convolution and CNNs in \cref{sec:g_conv}.

\mypara{Relations across different approaches (our contribution)} Thus we have reviewed three  approaches to $\mathrm{SE}(3)$-equivariance: the first is through multi-view augmentation a second is through 3D steerable CNNs relying on usual $\RR^3$-convolution of equivariant vector/tensor fields over $\RR^3$ with $\mathrm{SO}(3)$ steerable kernels and the other is through $\mathrm{SE}(3)$ convolution of scalar fields over $\mathrm{SE}(3)$.

Our main focus in this work is to provide a detailed proof and analysis of the equivalence between 3D steerable CNNs and $\mathrm{SE}(3)$ group CNNs stated in  \cite{Bekkers21} and to provide an explicit formulas to translate expressions and filters between the two representations (\cref{th:tfn_vs_se3_conv_2}) which has not been done in prior works.

The equivalence principle described in \cite{Bekkers21}  says that equivariant tensor field representations arise as harmonic decompositions of $\mathrm{SE}(3)$ signals and convolutions. An equivariant vector obtained via 3D steerable CNNs $v(R.f)(x) = D(R)v(f)(x)$ can be thought as a scalar function on $\mathrm{SE}(3) = \RR^3 \rtimes \mathrm{SO}(3)$,
$w: (x,R) \mapsto \sum_{ij} D_{ij}(R)v(f)_j(x) \in \RR$ and we show that performing 3D steerable convolution on $v$ is equivalent to a $\mathrm{SE}(3)$ convolution on $w$ (\cref{th:tfn_vs_se3_conv_1}).

As a part of our analysis we also analyze $\mathrm{SE}(3)$-convolution separability used in \cite{chen2021equivariant} to propose an efficient implementation of $\mathrm{SE}(3)$-CNN. Moreover, we provide additional insights showing that multi-view CNNs can be realized as $\mathrm{SE}(3)$ CNNs (\cref{th:multiview_cnn}).

While there are general results regarding equivalence of different approaches for $\mathrm{SE}(3)$ equivariance  \cite{Bekkers21, Bekkers21slides}, to the best of our knowledge, our detailed analysis has not been explicitly performed in existing literature \cite{kondor2018generalization, weiler20183d, cohen2018intertwiners, cohen2019general, lang2020wigner,weiler2021coordinate}. Furthermore the existing theories are stated in the continuous case while actual works like \cite{thomas2018tensor, kondor2018n, chen2021equivariant} operate on irregular domains like point clouds which are not invariant by rotation nor translation. In contrast our analysis covers both cases. Finally, we investigate $\mathrm{SE}(3)$-CNNs non-linearities in the harmonic domain and derive corresponding non linearities for Tensor Field Networks, \cite{thomas2018tensor} and similar constructions \cite{weiler20183d, kondor2018n} by equivalence. We show experimental results for \cite{thomas2018tensor} in \cref{sec:non_linearities}.

\section{Convolution and CNNs}

\label{sec:g_conv}
To avoid any integrability problems in our definitions and theorems, we will assume functions and kernels are measurable, bounded and compactly supported in the rest of this work. 

In this section we recall the general definition of group convolution (\cref{def:g_conv}), show how it specializes to practical cases and introduce notations for convolutional layers and networks (\cref{def:gcnn}). Convolution is the cornerstone of Convolutional Neural Networks for image analysis popularized by the famous AlexNet \cite{krizhevsky2012imagenet}. A key property of Euclidean convolution is translation equivariance, i.e. equivariance to the translation group $\RR^d$. Indeed, it is well-known that convolution is \textit{the only} linear operation that is translation-equivariant  \cite{hormander1960estimates}. This can be generalized to other groups by the notion of group convolution, and in particular the rotation groups $\mathrm{SO}(3)$ and roto-translation group $\mathrm{SE}(3)$. Group convolution offers a general  methodology to design neural networks which are equivariant to a group action, e.g., for $\mathrm{SO}(3)$ \cite{cohen2018spherical} or $\mathrm{SE}(3)$ \cite{chen2021equivariant}, among others. 
\begin{Definition}[Group convolution]
\label{def:g_conv}
For a group $G$, given an integrable function $f$ and kernel $\kappa$, where both $f, \kappa:G\rightarrow \RR$, convolution on $G$ for a measure $\nu$ is defined as:
    \[
    (f \ast_{\nu} \kappa)(x) := \int_G f(y) \kappa(y^{-1} \cdot x) \mathrm{d}\nu(y).
    \]
\end{Definition}
The classical definition of $G$-conv considers a uniform measure (also called Haar measure) which is invariant by the action of $G$ (see the survey of Esteves \cite{esteves2020theoretical} for a quick introduction and to Haar measures and \cite{hall2003lie, nachbin1976haar} for more details). We explicitly highlight the role of the measure $\nu$, as it becomes important in our discussion of discretization and specifically, point cloud convolution. We recall that the action of the group $G$ on a function $\kappa: G \rightarrow \RR$ is defined via the action of $G$ on its input: $(y.\kappa)(x) := \kappa(y^{-1} \cdot x)$. Similarly the action of $h \in G$ on a measure $\nu$ over $G$ is defined for any measurable subset $S \subseteq G$ by $(h.\nu)(S) := \nu(h^{-1} \cdot S)$. It is well-known (see \cref{sec:g_conv_eq_proof} in the supplementary material for a proof) that $G$-convolution is equivariant to these actions.

 
\begin{Theorem}[Equivariance of group convolution]
\label{th:gconv_equivariance}
Group convolution satisfies the following equiavriance property for any group element $h \in G$ and functions $f,g : G \rightarrow \RR$:
$
    (h.f) \ast_{h.\nu} g = h.(f \ast_{\nu} g)
$
\end{Theorem}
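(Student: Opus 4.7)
The plan is a direct calculation: expand the left-hand side via the definition of group convolution, push the group element through both the integrand and the measure via a change of variables, and then recognize the result as the shifted convolution on the right.

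First I would write
\[
((h.f) \ast_{h.\nu} g)(x) = \int_G (h.f)(y)\, g(y^{-1} x)\, \mathrm{d}(h.\nu)(y) = \int_G f(h^{-1} y)\, g(y^{-1} x)\, \mathrm{d}(h.\nu)(y),
\]
using only the definition of the action on $f$ and of group convolution. The next step is the change of variable $y \mapsto h y$. For this I would first establish (as a short lemma, or by checking on indicator functions) the standard pushforward identity $\int_G F(y)\, \mathrm{d}(h.\nu)(y) = \int_G F(h y)\, \mathrm{d}\nu(y)$, which follows directly from $(h.\nu)(S) = \nu(h^{-1} S)$ applied to $F = \indicatrice_S$ and then extended by linearity and monotone convergence to the measurable bounded functions considered here.

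Applying this identity with $F(y) = f(h^{-1} y)\, g(y^{-1} x)$ yields
\[
((h.f) \ast_{h.\nu} g)(x) = \int_G f(h^{-1}(h y))\, g((h y)^{-1} x)\, \mathrm{d}\nu(y) = \int_G f(y)\, g(y^{-1} h^{-1} x)\, \mathrm{d}\nu(y),
\]
where I used the group axioms $h^{-1}(h y) = y$ and $(hy)^{-1} = y^{-1} h^{-1}$. The final expression is exactly $(f \ast_{\nu} g)(h^{-1} x)$ by the definition of convolution, which equals $(h.(f \ast_{\nu} g))(x)$ by the definition of the action on functions. This closes the chain of equalities.

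The only subtle point is the pushforward identity for $h.\nu$; everything else is bookkeeping with the group operation. Integrability is not an issue under the standing assumption that $f, g$ are measurable, bounded, and compactly supported, so Fubini-type concerns do not arise. I would relegate the pushforward identity to one or two lines (or cite it as standard from any reference on Haar measure such as \cite{hall2003lie, nachbin1976haar}) and present the main calculation as a single displayed chain of equalities, making clear at each step whether we are using a definition, the group axioms, or the change of variables.
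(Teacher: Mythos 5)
Your proof is correct and follows essentially the same route as the paper's: both expand the definition and perform the change of variables $y \mapsto hy$ (justified by the pushforward identity for $h.\nu$), then identify the result as the translated convolution. The only difference is that you spell out the pushforward identity explicitly, which the paper leaves implicit.
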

In other words, acting on the input of the convolution and the underlying measure by a group element $h$ is equivalent to acting on the output by the same element $h$. 
When $\nu$ is Haar the equivariance property simplifies and we can ignore the action on the measure as it is invariant. In \cite{kondor2018generalization} Kondor and Trivedi show that a linear map between functions on a compact group is equivariant iff it is convolutional. Cohen \textit{et al} show this results for uni-modular groups under reasonable assumptions \cite{cohen2018intertwiners, cohen2019general}.

\mypara{Euclidean convolution} Setting $G = \RR^d$ and $\nu$ to be the Lebesgue measure of $\RR^d$ we recover the classical notion of convolution $f \ast_{\RR^d,\nu} \kappa(x) = \int_{\RR^d} f(t)g(x-t) dt$.

\mypara{Point cloud convolution} 
Euclidean convolution can be adapted to the case of discrete point clouds where the input signal is only defined on the given set of points $X \subset \RR^3$. This can be formalized by setting $G = \RR^d$ and $\nu = \delta_X := \sum_i \delta_{X_i}$ the sum of Dirac measures at points of $X$. We then have $f \ast_{\RR^3, \delta_X} \kappa(X_i) = \sum_j f(X_j)\kappa(X_i - X_j)$. Note that a rotation of $\delta_X$ is equivalent to a rotation of $X$, for all $R \in \mathrm{SO}(3)$ we have $R.\delta_X = \delta_{R.X}$ (see
\cref{sec:dirac_measure_eq_proof} for a proof). This is a practical approach as the convolution only depends of values of $f$ and $\kappa$ on $X$ which can be stored in matrices. Multiple works investigate this idea \cite{atzmon2018point, hermosilla2018monte, hua2018pointwise, li2018pointcnn, xu2018spidercnn, thomas2019kpconv} (see also Sec. 3.3.2. in \cite{guo2020deep}). The main variations consist in developing methods to compensate biases induced by non-uniform or uneven sampling. Some works like \cite{atzmon2018point} normalize the signals, others use Monte Carlo integration with re-sampling \cite{hermosilla2018monte} and others learn the kernel functions themselves in addition to the weights \cite{thomas2019kpconv}.

\mypara{$\mathrm{SO}(3)$-convolution} $\mathrm{SO}(3)$-convolution is generally defined using the Haar measure of $\mathrm{SO}(3)$ (the volume measure of $\mathrm{SO}(3)$) which we denote by $\mu$ (see \cref{sec:so3_haar_measure} of the appendix of \cite{nachbin1976haar, chirikjian2000engineering} for more details). We have $f \ast_{\mathrm{SO}(3),\nu} \theta(R) := \int_{\mathrm{SO}(3)} f(H)\theta(H^{-1}R)d\mu(H)$.

\mypara{$\mathrm{SE}(3)$-convolution} Recall that the roto-translation group $\mathrm{SE}(3) = \RR^3 \rtimes \mathrm{SO}(3)$ is the semi-direct product of the translation group $\RR^3$ and the rotation group $\mathrm{SO}(3)$, and its composition law is given by: 
$(x, U).(y, V) = (Uy + x, UV), \ \ (t, H)^{-1} := (H^{-1}, -H^{-1}t)$. We set $\nu = \lambda \otimes \mu$ where $\lambda$ is either the Lebesgue measure of $\RR^3$ in the continuous case or $\lambda = \delta_X$ for a point cloud $X$, extending the notion of point cloud convolution to $\mathrm{SE}(3)$. We have: \newline $(f \ast_{\mathrm{SE}(3), \lambda \otimes \mu} g) (x, R) :=
  \int_{\RR^3} \int_{\mathrm{SO}(3)} f(t,H)g(H^{-1}(x-t), H^{-1}R) d\mu(H) d\lambda(t)$.

Note that in practice \textit{cross-correlation} is often preferred over convolution for Euclidean domains. Cross-correlation can be obtained from convolution by negating the input of the kernel, replacing $\kappa$ by $:x \mapsto \kappa(-x)$. In this work we refer to both as convolution as our analysis applies to both notions.

The notion of Group Convolutional Neural Network naturally extends the classical definition by replacing usual convolution by group convolution:
\begin{Definition}[Group convolutional neural network]
\label{def:gcnn}
Given a group $G$ endowed with a measure $\lambda$, a $G$-convolutional layer ($G$-conv) of kernels $(\kappa_j: G \rightarrow \RR)_j$, weights tensor $W$ and bias vector $b$ takes a family of functions $(f_i: G \rightarrow \RR)_i$ and a measure $\nu$ and outputs the family of functions on $G$ defined by: 
\[
\mathrm{Conv}_{G}(f,\nu, W, b)_i := \sum_{jk} W_{i,jk} f_j \ast_{\nu} \kappa_k + b_i
\]
A $G$-Convolutional Neural Network ($G$-CNN) consists in stacking $G$-conv layers. It takes a family of functions $f$ over $G$ and is defined recursively by:
\[
\mathrm{CNN}_G(f,\nu,W,b) := y^n, \ y^{k+1} := \xi(\mathrm{Conv}_{G}(y^{k}(f), \nu_k, W^{k+1}, b^{k+1})), \ y^0 := f
\]
where $\xi: \RR \rightarrow \RR$ is an activation function.
\end{Definition}

$G$-CNNs satisfy the same equivariance property as their layers, we provide a proof in \cref{sec:gcnn_eq_proof}: 
\begin{Theorem}[Equivariance of $G$-CNNs]
\label{th:gcnn_equivariance}
A $G$-CNN (or $G$-conv layer) satisfies the equivariance property:
$
\mathrm{CNN}_{G}(g.f,g.\nu,W, b) =
g.\mathrm{CNN}_{G}(f,\nu, W, b)
$
for all $g \in G$.
\end{Theorem}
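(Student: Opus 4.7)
The plan is to establish the theorem in two stages: first showing that a single $G$-convolutional layer is equivariant (again under joint transformation of input and measure), and then lifting this to the full network by induction on depth.

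For a single layer, I would fix $g \in G$ and expand the definition of $\mathrm{Conv}_G$ applied to $(g.f, g.\nu, W, b)$. The sum and scalar multiplication commute with the group action $h \mapsto g.h$ on functions, since $(g.h)(x) := h(g^{-1}x)$ is linear in $h$. Hence
\[
\mathrm{Conv}_G(g.f, g.\nu, W, b)_i = \sum_{jk} W_{i,jk}\bigl((g.f_j) \ast_{g.\nu} \kappa_k\bigr) + b_i.
\]
Applying the equivariance of group convolution (\cref{th:gconv_equivariance}) term by term gives $(g.f_j) \ast_{g.\nu} \kappa_k = g.(f_j \ast_\nu \kappa_k)$. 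Linearity of the action then lets me pull $g.\,$ outside the sum, and the bias $b_i$, viewed as a constant function on $G$, is invariant under the action (since $(g.b_i)(x) = b_i(g^{-1}x) = b_i$), so it can also be placed inside $g.\,$. This yields $\mathrm{Conv}_G(g.f, g.\nu, W, b) = g.\mathrm{Conv}_G(f, \nu, W, b)$, establishing equivariance of one layer.

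Next I would handle the pointwise activation $\xi$. For any function $h:G\to\mathbb{R}$,
\[
\bigl(g.\xi(h)\bigr)(x) = \xi(h)(g^{-1}x) = \xi\bigl(h(g^{-1}x)\bigr) = \xi\bigl((g.h)(x)\bigr),
\]
so $g.\xi(h) = \xi(g.h)$; the action commutes with any pointwise nonlinearity. Combining this with the equivariance of one linear layer, a full layer of the form $\xi \circ \mathrm{Conv}_G$ also commutes with the joint action on input and measure.

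Finally, I would conclude by induction on the depth. Writing $y^k(f,\nu)$ for the $k$-th intermediate layer (with $\nu$ now denoting the tuple of per-layer measures, transformed componentwise by $g$), the base case $y^0(g.f, g.\nu) = g.f = g.y^0(f,\nu)$ is immediate. The inductive step follows directly from the single-layer result applied to $y^k$ together with the commutation of $\xi$ with the action. Since the output of the CNN is $y^n$, this yields the claimed equivariance. The bookkeeping here is entirely routine; the only subtlety to flag is that each layer's measure must be transported alongside the input, which is precisely why $g.\nu$ (rather than $\nu$) appears on the left-hand side of the statement, and this requirement vanishes whenever the $\nu_k$ are Haar measures.
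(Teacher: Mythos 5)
Your proposal is correct and follows essentially the same route as the paper: single-layer equivariance via \cref{th:gconv_equivariance} combined with the fact that the pointwise activation and the (constant) bias commute with the group action, then induction on depth. If anything, you are slightly more explicit than the paper about why $\xi$ and $b$ pose no obstruction, but the argument is the same.
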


\section{Our contributions}
Given the context above, our main goal is two-fold: firstly we aim to establish equivalence relations between existing works, and secondly, we aim to provide theoretical analysis for various notions of equivariance that have been proposed and used in recent works. In \cref{subsec:separability} we first analyze the separability of $\SE(3)$ group convolution and provide a theoretical justification for a recent practical approach, introduced in \cite{chen2021equivariant}. We then provide an interpretation of multiview convolution networks through the lens of $\SE(3)$ convolution in \cref{sec:multiview}. All proofs are given in the appendix which can be found in the supplementary material.

Our core result is the explicit equivalence relation between Tensor Field Networks (TFN) and $\SE(3)$ convolution, which we provide in \cref{sec:tfn_vs_se3_conv}, after briefly introducing TFN in \cref{sec:tfn}. Finally, with this analysis in hand, in \cref{sec:non_linearities} we propose a ReLu non-linearity in the harmonic domain, by formulating it for TFNs and interpreting it as ReLu non linearity for $\mathrm{SE}(3)$ conv, for which we provide practical evaluation in the following section.

\subsection{Separability of $\mathrm{SE}(3)$ convolution}
\label{subsec:separability}

In a recent work \cite{chen2021equivariant} Chen \textit{et al} consider a discretization of $\mathrm{SE}(3)$ convolution by using $\lambda = \delta_X$ and $\mu = \delta_H := \sum_{h \in H} \delta_{h}$ where $H \subset \mathrm{SO}(3)$ is the symmetry group of the icosahedron. Note that with this choice $\mathrm{SE}(3)$-convolution reformulates as a double sum over a sampling of $\mathrm{SE}(3)$ which is $6$ dimensional. Chen \textit{et al.} remark that $\mathrm{SE}(3)$ convolution can be computed by composing two 3D convolutions (a $\RR^3$ component and a $\mathrm{SO}(3)$ component) greatly reducing the cost. However, no theoretical justification for this approach was given. Below we provide such a justification and establish a result, which is useful in our further analysis below.

The key idea is that $\mathrm{SE}(3)$-convolution decomposes for so called separable functions. A function $f:\mathrm{SE}(3) = \RR^3 \rtimes \mathrm{SO}(3) \rightarrow \RR$ is called separable if it is a tensor product $f = \kappa \otimes \theta$ where $\kappa \otimes \theta(t,R) := \kappa(t)\theta(R)$. It is well known that general functions can be approximated by linear combination of separable functions in the L2 sense (see \cite{reed2012methods} on tensor product Hilbert spaces for more details). We formalize the two components of $\mathrm{SE}(3)$ convolution as follows:

\begin{Definition}[$\mathrm{SE}(3)$ convolution components]
\label{def:se3_conv_components}
The $\RR^3$ and $\mathrm{SO}(3)$ components of $\mathrm{SE}(3)$ convolution are defined for any function $f:\mathrm{SE}(3) \rightarrow \RR$ and kernels $\kappa:\RR^3 \rightarrow \RR$ and $\theta:\mathrm{SO}(3) \rightarrow \RR$ respectively by:
\begin{equation}
\begin{aligned}
f \ast_{\mu} (\delta_0 \otimes \theta)(x,R)
&
:= 
\int_{\mathrm{SO}(3)} \hspace{-3mm} f(x,H)\theta(H^{-1}R) d\mu(H)
\end{aligned}
\label{eq:sep_conv_1}
\end{equation}
\begin{equation}
\begin{aligned}
f \ast_{\lambda} (\kappa \otimes \delta_I)(x,R)
&
:= 
\int_{\RR^3} f(t,R)\kappa(R^{-1}(x-t)) d\lambda(t)
\end{aligned}
\label{eq:sep_conv_2}
\end{equation}
\end{Definition}

Separability of $\mathrm{SE}(3)$ convolution can be stated as follows:

\begin{Lemma}[Separable convolution factorization]
\label{lemma:separable_conv}
For any function $f:\mathrm{SE}(3) \rightarrow \RR$
and kernels $\kappa: \RR^3 \rightarrow \RR$ and $\theta: \mathrm{SO}(3) \rightarrow \RR$ with compact support we have:
$
(f \ast (\kappa \otimes \delta_I) ) \ast (\delta_0 \otimes \theta)
=
f \ast (\kappa_1 \otimes \kappa_2)(x, R).
$
\end{Lemma}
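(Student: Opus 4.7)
The plan is to unwind both sides via the explicit integral definitions and then swap the order of integration using Fubini's theorem. The assumption that $f$, $\kappa$, and $\theta$ are measurable, bounded, and compactly supported (stated at the top of \cref{sec:g_conv}) is precisely what licenses the Fubini step without worrying about integrability: the double integrand is bounded by a product of compactly supported functions, hence absolutely integrable for the product measure $\lambda \otimes \mu$.

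First I would set $h := f \ast_\lambda (\kappa \otimes \delta_I)$ and substitute the definition \eqref{eq:sep_conv_2}, so that
\[
h(x,H) \;=\; \int_{\RR^3} f(t,H)\,\kappa\bigl(H^{-1}(x-t)\bigr)\,d\lambda(t).
\]
Then I would plug this into the outer $\mathrm{SO}(3)$-component convolution \eqref{eq:sep_conv_1}, obtaining
\[
\bigl(h \ast_\mu (\delta_0 \otimes \theta)\bigr)(x,R)
\;=\; \int_{\mathrm{SO}(3)} \!\!\left[\int_{\RR^3} f(t,H)\,\kappa\bigl(H^{-1}(x-t)\bigr)\,d\lambda(t)\right]\!\theta(H^{-1}R)\,d\mu(H).
\]

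Next I would apply Fubini's theorem to exchange the order of the two integrals and pull $\theta(H^{-1}R)$ inside, yielding
\[
\int_{\RR^3}\!\int_{\mathrm{SO}(3)} f(t,H)\,\kappa\bigl(H^{-1}(x-t)\bigr)\,\theta(H^{-1}R)\,d\mu(H)\,d\lambda(t).
\]
Finally I would observe that the integrand factors as $f(t,H)\,(\kappa\otimes\theta)\bigl(H^{-1}(x-t),\,H^{-1}R\bigr)$, which by the general $\mathrm{SE}(3)$ convolution formula recalled just before \cref{def:se3_conv_components} is exactly $\bigl(f \ast_{\lambda\otimes\mu}(\kappa\otimes\theta)\bigr)(x,R)$, giving the claimed identity.

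There is essentially no obstacle: the only non-trivial ingredient is the Fubini exchange, and this is immediate from the compact-support and boundedness assumptions. The compact support of $\kappa$ and $\theta$, together with the boundedness of $f$ on the (also compact) support region that matters, makes the double integrand a bounded function supported on a set of finite $\lambda\otimes\mu$-measure, so the classical Fubini--Tonelli theorem applies in one line. The only notational care needed is to make sure that the inverse rotation $H^{-1}$ appears in both the spatial and rotational arguments of the product kernel after the swap, which follows directly from how $\delta_I$ collapses the $\mathrm{SO}(3)$-integration to $H = R$ in the first factor and how $\delta_0$ collapses the $\RR^3$-integration in the second.
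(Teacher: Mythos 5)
Your proposal is correct and is essentially the paper's own proof read in the opposite direction: the paper starts from $f \ast_{\lambda\otimes\mu}(\kappa\otimes\theta)$, swaps the order of integration, and recognizes the nested integral as the composition, whereas you start from the composition and arrive at the product-kernel convolution. The only addition is your explicit justification of the Fubini step via boundedness and compact support, which the paper leaves implicit.
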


We can expand \cref{lemma:separable_conv} to convolution layers showing that $\mathrm{Conv}_{\mathrm{SE}(3)}$ layers can be computed as composition of $\RR^3$ and $\mathrm{SO}(3)$ separable components layers:
\begin{Theorem}[Separability $\mathrm{SE}(3)$-convolution layer]
\label{th:conv_approx}
Let $(\kappa_j:\RR^3 \rightarrow \RR)_j$ and $(\theta_k: \mathrm{SO}(3) \rightarrow \RR)_k$ two finite kernel bases. We define the $\RR^3 \times I$ and $0 \times \mathrm{SO}(3) $ convolution layers respectively for any finite family of functions $f_j: \RR^3 \rightarrow \RR$ and any weight tensors $A, B$ and bias vectors $a,b$ of compatible size  by:
$
\mathrm{Conv}_{\RR^3 \times I}(f, \lambda, A, a)_i := \sum_{jk} A_{ijk} f_k \ast_{\lambda} (\kappa_j \otimes \delta_I) + a_i,
\ \
\mathrm{Conv}_{0 \times \mathrm{SO}(3)}(f, \mu, B, b)_i := \sum_{jk} B_{ijk} f_k \ast_{\mu} (\delta_0 \otimes \theta_j)+b_i
$ the $\mathrm{SE}(3)$-convolution decompose as follows:
\[
\begin{aligned}
\mathrm{Conv}_{\mathrm{SE}(3)}(f, B.A, \lambda \otimes \mu ,b)_i
&
:=
\sum_{jkl} C_{ijkl} f_l \ast_{\lambda \otimes \mu} (\kappa_j \otimes \theta_k) + b_i
\\
&
=
\mathrm{Conv}_{0 \times \mathrm{SO}(3)}(\mathrm{Conv}_{\RR^3 \times I}(f, \lambda, A, 0), \mu, B, b) 
\end{aligned}
\]
where $B.A$ is the product tensor defined by $(B.A)_{ijkl} := \sum_m B_{ijm}A_{mkl}$. In particular a $\mathrm{SE}(3)$-convolution layer with weight tensor $C$ can be recovered by setting $A_{ni+j,kl} = C_{ijkl}$ and $B_{ij,na+b} = \delta_{ia}\delta_{jb}$ where $n$ is the dimension of $C$'s second axis.
\end{Theorem}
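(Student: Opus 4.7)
The proof rests on \cref{lemma:separable_conv}, which already gives the pointwise factorization
\[
(f \ast_{\lambda} (\kappa \otimes \delta_I)) \ast_{\mu} (\delta_0 \otimes \theta) = f \ast_{\lambda \otimes \mu} (\kappa \otimes \theta)
\]
for a single pair of kernels $(\kappa, \theta)$. The remaining work is linear-algebraic bookkeeping to lift this pointwise identity from single kernels to full layers parameterized by the weight tensors $A$ and $B$.

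First I would unfold the composition one layer at a time. Expanding the inner layer gives
\[
\mathrm{Conv}_{\RR^3 \times I}(f, \lambda, A, 0)_m = \sum_{p,q} A_{m p q}\, f_q \ast_{\lambda} (\kappa_p \otimes \delta_I),
\]
and substituting this into the outer $\mathrm{SO}(3)$-layer, while using bilinearity of convolution in its first argument to commute the sum $\sum_{m,p,q}$ past $\ast_{\mu}(\delta_0 \otimes \theta_j)$, yields
\[
\mathrm{RHS}_i \;=\; \sum_{j,m,p,q} B_{i j m}\, A_{m p q}\, \bigl((f_q \ast_{\lambda}(\kappa_p \otimes \delta_I)) \ast_{\mu}(\delta_0 \otimes \theta_j)\bigr) \;+\; b_i.
\]
Now \cref{lemma:separable_conv} may be applied pointwise inside the sum to collapse each bracketed expression into $f_q \ast_{\lambda \otimes \mu} (\kappa_p \otimes \theta_j)$.

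Next I would carry out the inner contraction over $m$ so as to recognize the product tensor $(B.A)_{i j p q} = \sum_m B_{i j m} A_{m p q}$. After relabelling dummy indices to match the convention of $\mathrm{Conv}_{\mathrm{SE}(3)}$ applied to the separable kernel basis $\{\kappa_p \otimes \theta_j\}_{p,j}$, the expression becomes exactly $\mathrm{Conv}_{\mathrm{SE}(3)}(f, B.A, \lambda \otimes \mu, b)_i$. Since the inner layer is applied with bias $0$ and the outer with bias $b$, the constant term is correctly $b_i$ without any extra contribution.

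For the ``in particular'' clause, I would verify the recovery formula by direct Kronecker-delta evaluation: with $A_{n i + j,\, k,\, l} := C_{i j k l}$ and $B_{i j,\, n a + b} := \delta_{i a}\,\delta_{j b}$, only the index $m = n i + j$ contributes to $(B.A)_{i j k l} = \sum_m B_{i j, m} A_{m, k l}$, giving $(B.A)_{i j k l} = A_{n i + j,\, k, l} = C_{i j k l}$. The compact-support assumption on $\kappa$ and $\theta$, together with the boundedness hypothesis stated at the start of \cref{sec:g_conv}, guarantees finiteness of every integral encountered, so no analytic subtleties arise. The only real obstacle is keeping the four tensor indices aligned between the $A$-layer convention (output, $\kappa$-index, input) and the $B$-layer convention (output, $\theta$-index, input); once the two conventions are reconciled with the index ordering in the definition of $\mathrm{Conv}_{\mathrm{SE}(3)}$, the identity is immediate.
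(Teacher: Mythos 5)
Your proposal is correct and follows essentially the same route as the paper: expand the inner layer, use linearity of the outer convolution to pull the sums through, apply \cref{lemma:separable_conv} termwise to collapse the composed convolutions into $f_l \ast_{\lambda\otimes\mu}(\kappa_j\otimes\theta_k)$, and contract over the intermediate index to recognize $B.A$. Your explicit Kronecker-delta check of the recovery formula is a small addition the paper leaves implicit, but it introduces nothing new.
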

\subsection{Relation between multiview convolutional networks and $\mathrm{SE}(3)$-convolution.}
\label{sec:multiview}

While group convolution relies on both the signals and learned filters to be defined on the group $G$, a simpler approach is to maintain signals defined on $\RR^3$, and use \textit{rotation augmentation} to introduce a richer, rotation-aware network \cite{qi2016volumetric, mehr2018manifold}.



Specifically, given a $\RR^3$-CNN $\mathrm{CNN}_{\RR^3}$ its rotation augmented (multiview) version is given by:
\begin{equation}
    \label{eq:rot_augmented_cnn}
    \widetilde{\mathrm{CNN}}_{\RR^3}(f, \lambda, W)(x, R) := \mathrm{CNN}_{\RR^3}(R.f,R.\lambda, W)(x)
\end{equation}
It is relatively easy to see that rotation augmented $\RR^3$ CNNs are particular case of $\mathrm{SE}(3)$-CNNs where only the $\RR^3$ component is used (\cref{def:se3_conv_components}), this is consistent with the fact that multi-view networks do not share information across views by construction.
\begin{Theorem}[Multi-view CNNs as $\mathrm{SE}(3)$-CNNs]
\label{th:multiview_cnn}
For any $\RR^3$ CNN $\mathrm{CNN}_{\RR^3}$, denoting $\tilde{f}(x,R) := f(x)$ we have:
$
\widetilde{\mathrm{CNN}}_{\RR^3}(f, \lambda, W)(x,R) =
\mathrm{CNN}_{\RR^3 \times I}(\tilde{f}, \lambda, W)(R^{-1}x,R^{-1}).
$ where $\mathrm{CNN}_{\RR^3 \times I}$ is defined by replacing the $\mathrm{Conv}_{\RR^3}$ layers by $\mathrm{Conv}_{\RR^3 \times I}$ layers from \cref{th:conv_approx} in $\mathrm{CNN}_{\RR^3}$.
\end{Theorem}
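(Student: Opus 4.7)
The plan is to prove the identity inductively, layer by layer. The key observation is that the statement compares values at the specific point $(R^{-1}x, R^{-1})$, but to push the induction through I will need to track the form of the $\mathrm{CNN}_{\RR^3 \times I}$ output at an \emph{arbitrary} point $(y, T)$, since the output of layer $n$ feeds into the convolution at layer $n+1$ against a whole range of arguments.

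First I would verify the one-layer version at the convolutional level. Starting from
\[
\mathrm{Conv}_{\RR^3 \times I}(\tilde{f},\lambda,A,a)_i(y,T) = \sum_{jk} A_{ijk}\!\int_{\RR^3} \tilde{f}_k(t,T)\,\kappa_j\!\bigl(T^{-1}(y-t)\bigr)\,\mathrm{d}\lambda(t) + a_i,
\]
I use $\tilde{f}_k(t,T)=f_k(t)$ and the change of variable $s = T^{-1}t$ together with the identity $\int g(t)\,\mathrm{d}\lambda(t) = \int g(Ts)\,\mathrm{d}(T^{-1}.\lambda)(s)$ (from $(h.\nu)(S)=\nu(h^{-1}S)$). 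This rewrites the integral as $\int (T^{-1}.f)_k(s)\,\kappa_j(T^{-1}y-s)\,\mathrm{d}(T^{-1}.\lambda)(s)$, i.e.\ an ordinary $\RR^3$-convolution of $T^{-1}.f$ with $\kappa_j$ against $T^{-1}.\lambda$, evaluated at $T^{-1}y$. Thus the single convolution layer obeys
\[
\mathrm{Conv}_{\RR^3 \times I}(\tilde{f},\lambda,A,a)(y,T) = \mathrm{Conv}_{\RR^3}(T^{-1}.f,\, T^{-1}.\lambda,\, A,\, a)(T^{-1}y).
\]

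Second, I would strengthen the theorem into an inductive invariant suited to a multilayer network. Let $\tilde{F}^n$ denote the $n$-th layer output of $\mathrm{CNN}_{\RR^3 \times I}(\tilde f,\lambda,W)$ and $F^n[g,\nu]$ the $n$-th layer output of $\mathrm{CNN}_{\RR^3}(g,\nu,W)$. My invariant is
\[
\tilde{F}^n(y,T) \;=\; F^n\bigl[T^{-1}.f,\,T^{-1}.\lambda\bigr](T^{-1}y) \qquad \forall\,(y,T)\in\RR^3\times\mathrm{SO}(3).
\]
The base case $n=0$ is immediate from $\tilde f(y,T)=f(y)=(T^{-1}.f)(T^{-1}y)$. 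For the inductive step, substitute the invariant at level $n$ into the integrand of the layer-$(n{+}1)$ convolution, apply the same change of variable $s=T^{-1}t$ as above, and observe that the activation $\xi$ passes through pointwise; this yields the invariant at level $n+1$ with exactly the same layer weights $A^{n+1},a^{n+1}$.

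Third, I specialize the invariant at the final layer to $(y,T)=(R^{-1}x,R^{-1})$, which gives $T^{-1}y = R\,R^{-1}x = x$ and reduces the right-hand side to $\mathrm{CNN}_{\RR^3}(R.f,R.\lambda,W)(x) = \widetilde{\mathrm{CNN}}_{\RR^3}(f,\lambda,W)(x,R)$, proving the theorem.

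The main subtlety is purely bookkeeping: correctly applying the measure-action identity $\int g(t)\,\mathrm{d}(h.\nu)(t)=\int g(hu)\,\mathrm{d}\nu(u)$ in the right direction at each layer, and recognizing that the naive claim ``$\tilde F^n(R^{-1}x,R^{-1})=F^n(R.f,R.\lambda)(x)$'' is not by itself a strong enough induction hypothesis. Strengthening to an identity valid for all $(y,T)$ is the crux, after which the argument reduces to iterating the one-layer change of variables.
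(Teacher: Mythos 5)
Your proof is correct and follows essentially the same route as the paper's: a layer-by-layer induction whose step is the one-layer change of variables $s = T^{-1}t$ combined with the measure-action identity, with the pointwise activation commuting through. The only cosmetic difference is that your invariant at an arbitrary point $(y,T)$ is exactly the paper's induction hypothesis $\tilde{y}^n(f,\lambda_{<n})(x,R) = y^n_{\RR^3\times I}(\tilde{f},\lambda_{<n})(R^{-1}x,R^{-1})$ reparametrized via $(y,T)=(R^{-1}x,R^{-1})$, so the ``strengthening'' you describe as the crux is not actually a strengthening---both statements are universally quantified over the same set.
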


\subsection{Overview of 3D steerable CNNs and Tensor Field Networks}
\label{sec:tfn}
In this section we recall the construction of 3D steerable CNNs \cite{weiler20183d} and Tensor Field Networks \cite{thomas2018tensor} although presented differently the two methods are quite similar, the difference is the input domain. 3D steerable CNNs \cite{weiler20183d} operate on voxel grids and are formulated in terms of continuous $\RR^3$ convolution (set $\lambda$ as the Lebesgue measure in the following description) while TFN operates on pointclouds and is based on pointcloud convolution (set $\lambda = \delta_X$ for a pointcloud $X$). We adopt the TFN terminology as we are more interested in pointcloud analysis. TFN can be viewed as an extension of pointcloud CNNs ($\lambda = \delta_X$) (or $\RR^3$ CNNs) to achieve $\mathrm{SE}(3)$-equivariance. The TFN design is a case of point-cloud convolution CNN as defined in \cref{sec:g_conv} while \cite{weiler20183d} is an equivalent formulation for 3D voxel based convolution. 

The key observation behind the TFN design is that Spherical Harmonics can be used to construct equivariant kernel bases called steerable bases and that the equivariance properties of such bases transfer to the associated convolution features. We recall that Spherical harmonics are homogeneous polynomial functions, over $\RR^3$, for each $\ell \in \NN^*$ there are $2\ell + 1$ degree $\ell$ spherical harmonics, a key property of spherical harmonics is their rotation equivariance, denoting by $Y_{\ell}:\RR^3 \rightarrow \RR^{2\ell+1}$ the vector of degree $\ell$ spherical harmonics Then for all  and $R \in \mathrm{SO}(3)$ there exist a matrix $D^{\ell}(R) \in \mathrm{SO}(2\ell+1)$ called the associated (type $\ell$) Wigner matrix, such that for all $x\in\RR^3$ we have
$
Y_{\ell}(Rx) = D^{\ell}(R)Y_{\ell}(x)
$
(see \cref{th:rotation_spherical_harmonics}). Spherical harmonics are used to build steerable kernel bases defined for all $x \in \RR^3$ by:
\begin{equation}
    \label{eq:steerable kernels}
    \kappa^{\ell}_{rm}(x) := \varphi_r(\Vert x \Vert_2) Y_{\ell m}(x)
\end{equation}
where $(\varphi_r: \RR_+ \rightarrow \RR)_r$ are radial functions. In practice in \cite{thomas2018tensor, weiler20183d} the kernels are fixed and the network learns coefficients in these kernel bases. A more recent work \cite{fuchs2020se} learns rotation invariant attention weights introducing a non linear deformation of the kernels but this goes beyond our analysis of linear convolution layers. We discuss possible choices of radial functions in \cref{sec:steerable_bases}. 
	
A key observation of \cite{thomas2018tensor, kondor2018n, weiler20183d}  is that given a function $f:\RR^3 \rightarrow \RR$ and a steerable basis $\kappa$ we have: 
$
(R.f) \ast_{\RR^3} \kappa_{r,:}^{\ell}(x)
=
D^{\ell}(R) f \ast_{\RR^3} \kappa_{r,:}^{\ell}(R^{-1}x)
$. This property can be formalized with the concept of equivariant features maps. A type $\ell$ equivariant feature map is a map $v^{\ell}:\RR^3 \rightarrow \RR^{2\ell+1}$ equipped with the $\mathrm{SO}(3)$ action, defined as $(R.v^{\ell})(x) := D^{\ell}(R)v^{\ell}(R^{-1}x)$. A TFN layer takes a collection of equivariant feature maps and transforms it linearly to another collection of equivariant features maps via standard $\RR^3$ convolution with a steerable basis. A second observation is that
\begin{equation}
\label{eq:composite_feature}
(R.v)^{\ell} \ast_{\RR^3} \kappa_{r,:}^{\ell'}(x) = D^{\ell}(R) \otimes D^{\ell'}(R) \left(v^{\ell} \ast_{\RR^3} \kappa_{r,:}^{\ell'}(R^{-1}x)\right).
\end{equation}
 where $\otimes$ is the standard tensor (Kronecker) product.
We see that convolution between equivariant features and a steerable basis introduces more complex equivariance properties. However it turns out that tensor products of Wigner matrices can be decomposed: For any $L, \ell, \ell' \in \NN$ with $|\ell - \ell'| \leqslant L \leqslant \ell + \ell'$ there exits a ``Clebsch-Gordan'' tensor $Q^{L,(\ell,\ell')} \in \RR^{(2L+1,2\ell+1,2\ell'+1)}$ (see \cref{sec:clebsch_gordan_coefficients} and \cite{chirikjian2000engineering} for more details) such that: 
$
D^L Q^{L,(\ell,\ell')} = Q^{L,(\ell,\ell')} D^{\ell} \otimes D^{\ell'}
$ (\cref{th:clebsh_gordan_decomposition}). The Clebsch-Gordan tensor $Q^{L,(\ell,\ell')}$ is given by the Clebsch Gordan coefficients and is used to decompose ``composite'' features from \cref{eq:composite_feature} into simpler features giving the general definition of TFN layer:
\begin{Definition}[TFN Layer]
A $\mathrm{TFN}$ layer of weight tensors $W$ and bias vector $b$ takes a collection $(f^{\ell}_{:,c}: \RR^3 \rightarrow \RR^{2\ell+1})_{\ell c}$ of equivariant vector features maps as input and outputs a new collection defined by:
\[
\mathrm{TFN}(f, W, \lambda, b)^L_{:,d} := \sum_{\ell,\ell',c,r} W^{(\ell, \ell'),L}_{d,cr} Q^{L,(\ell, \ell')}  f^{\ell}_{:,c} \ast_{\RR^3, \lambda} \kappa^{\ell'}_{r,:} + b^{L}
\]
where $b^L$ is non zero only for $L=0$ and the sum ranges over indices for which the main term is defined, that is $|\ell-\ell'| \leqslant L \leqslant \ell + \ell'$. 
\label{def:tfn_layer}
\end{Definition}
By construction a TFN layer satisfies the equivariance property $\mathrm{TFN}(R.f, R.\lambda, W, b)^L_{:,d}(x) = D^{\ell}(R)\mathrm{TFN}(f, \lambda, W, b)^L_{:,d}(R^{-1}x)$.

\subsection{Relation between 3D Steerable convolution / TFN and $\mathrm{SE}(3)$-convolution}
\label{sec:tfn_vs_se3_conv}
In this section we investigate the relation between TFN and $\mathrm{SE}(3)$-CNNs, and show that a TFN layer can be viewed as a harmonic decomposition (in the basis of Wigner coefficient functions) of an $\mathrm{SE}(3)$ convolution layer. Moreover, we establish a one-to-one correspondence between TFN and $\mathrm{SE}(3)$ CNN weights. We start by recalling some facts about harmonic analysis on $\mathrm{SO}(3)$, (we refer to \cref{sec:wigner_matrix} or \cite{chirikjian2000engineering} for more details). The Wigner matrices coefficients form a Hilbert basis of $L^{2}(\mathrm{SO}(3))$, in particular any square integrable function $f:\mathrm{SO}(3) \rightarrow \RR$ uniquely decomposes in the Wigner basis, i.e. for all $R \in \mathrm{SO}(3)$:
$
f(R) = \sum_{\ell \geqslant 0} \langle f^{\ell}, D^{\ell}(R)\rangle
$
where for each $\ell \in \NN$, $f^{\ell} \in \RR^{(2\ell+1, 2\ell+1)}$ is the matrix of coefficients of $f$ associated with $D^{\ell}$. Furthermore we have the equivariance property $(R.f)^{\ell} = D^{\ell}(R)f^{\ell}$ where $(R.f)(x) := f(R^{-1}x)$. 

Thus given a function $f:\mathrm{SE}(3) \rightarrow \RR$ we have a decomposition $f(x,R) = \sum_{\ell \geqslant 0} \langle f^{\ell}(x) , D^{\ell}(R)\rangle$ for all $(x, R) \in \mathrm{SE}(3)$. Moreover the coefficient matrices $f^{\ell}$, which can be thought of as functions $f^{\ell}:\RR^3 \rightarrow \RR^{(2\ell+1, 2\ell+1)}$, are equivariant feature maps as defined in \cref{sec:tfn} since $(R.f)^{\ell}(x) = D^{\ell}(R)f^{\ell}(R^{-1}x)$ where $(R.f)(x, H) := f(R^{-1}x, R^{-1}H)$. It therefore follows that a collection of $\mathrm{SE}(3)$ functions, $f_i: \mathrm{SE}(3) \rightarrow \RR$, can be seen a TFN layer input, with doubly indexed channel axis, by identifying each function $f_i$ with its feature maps  $(f_i^{\ell})_{\ell}$. 

Conversely we can interpret TFN features as coefficients of functions over $\mathrm{SE}(3)$ in the Wigner basis. A typical TFN layer takes a single indexed collection $(f^{\ell}_{:,c})_{\ell,c}$ of equivariant features. We can always complete such collection by zeros and reshape it into a doubly indexed collection of $2\ell+1$ by $2\ell+1$ matrices $(f^{\ell}_{:, :, c})_{\ell,c}$. The same operation can be done on the output and the weights can be adapted accordingly (extend with zeros and reshape) essentially obtaining a sparse representation of the original TFN layer. Now that we see how TFN features can be associated with $\mathrm{SE}(3)$ functions decomposed in the Wigner basis we describe the action $\mathrm{SE}(3)$ convolution layers on these decomposition. Essentially, we express the Wigner coefficients of $\mathrm{Conv}_{\mathrm{SE}(3)}(f, W)$ as a function of the Wigner coefficients of the input $f$ starting with the separable components from \cref{def:se3_conv_components}:
\begin{Lemma}[Wigner decomposition of separable $\mathrm{SE}(3)$ convolution]
\label{lemma:harmonic_sep_se3_conv}
Let $f: \mathrm{SE}(3) \rightarrow \RR$, the Wigner decomposition at $x \in \RR^3$ of the $\RR^3$ (resp. $\mathrm{SO}(3)$) component of $\mathrm{SE}(3)$ convolution (\cref{def:se3_conv_components}) between a function $f:\mathrm{SE}(3) \rightarrow \RR$ and steerable kernel basis over $\RR^3$ $(\kappa^{\ell'}_{rm'})_{r\ell'm'}$ (resp. a kernel $\theta:\mathrm{SO}(3) \rightarrow \RR$) is given by:
\[
\begin{aligned}
&
(f \ast_{\lambda} (\kappa^{\ell'}_{rm'} \otimes \delta_I))^L_{ij}(x)
=
\sum_{\ell}  Q^{L, (\ell, \ell')}_{i,:,:}  \left(f^{\ell} \ast_{\RR^3, \lambda} \kappa^{\ell'}_{r}\right)(x) Q_{:,m',j}^{ (\ell, \ell'), L}
\\
&
(f \ast_{\mu} (\delta_0 \otimes \theta))^{L}(x) =  f^{L}(x)\theta^{L}
\end{aligned}
\]
where the sum ranges over indices $\ell$ such that $|\ell - \ell'| \leqslant L \leqslant \ell + \ell'$.
\end{Lemma}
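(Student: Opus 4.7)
The plan is to prove the two claims independently, both by directly expanding in the Wigner basis and identifying the resulting coefficients. In both cases the strategy is the same: write out each factor of the integrand as a Wigner series in $R$, use an algebraic identity to rewrite the product of Wigner entries as a sum over a single $D^L$, and then read off the coefficient of $D^L(R)_{ij}$.

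For the $\mathrm{SO}(3)$-component, I would start from the definition $(f \ast_{\mu} (\delta_0 \otimes \theta))(x,R) = \int_{\mathrm{SO}(3)} f(x,H)\,\theta(H^{-1}R)\,d\mu(H)$, expand $f(x,H) = \sum_{\ell} \langle f^{\ell}(x), D^{\ell}(H)\rangle$ and $\theta(H^{-1}R) = \sum_{L} \langle \theta^{L}, D^{L}(H^{-1}R)\rangle$, and use the representation identity $D^{L}(H^{-1}R) = D^{L}(H)^{T} D^{L}(R)$ (since Wigner matrices are orthogonal). The $H$-integral is then evaluated by the Peter--Weyl / Schur orthogonality relation on $\mathrm{SO}(3)$, which kills all cross-terms with $\ell \neq L$ and leaves a matrix product $f^{L}(x)\theta^{L}$ (up to the normalization convention for the Wigner coefficients recalled in \cref{sec:wigner_matrix}). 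The coefficient of $D^{L}(R)_{ij}$ is then precisely $(f^{L}(x)\theta^{L})_{ij}$.

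For the $\RR^3$-component, starting from $(f \ast_{\lambda} (\kappa^{\ell'}_{rm'} \otimes \delta_I))(x,R) = \int_{\RR^3} f(t,R)\,\kappa^{\ell'}_{rm'}(R^{-1}(x-t))\,d\lambda(t)$, I would first apply equivariance of the steerable kernel, which by \cref{th:rotation_spherical_harmonics} and the form \eqref{eq:steerable kernels} gives $\kappa^{\ell'}_{rm'}(R^{-1}y) = \sum_{n'} D^{\ell'}(R)_{n'm'}\,\kappa^{\ell'}_{r,n'}(y)$. Expanding $f(t,R) = \sum_{\ell,i,j} f^{\ell}_{ij}(t) D^{\ell}(R)_{ij}$ and pulling the $R$-dependence outside the $t$-integral, the integral collapses to the ordinary $\RR^3$-convolutions $(f^{\ell} \ast_{\RR^3,\lambda} \kappa^{\ell'}_{r,n'})_{ij}(x)$. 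What remains is a sum over products $D^{\ell}(R)_{ij} D^{\ell'}(R)_{n'm'}$, which are entries of the tensor-product representation $D^{\ell} \otimes D^{\ell'}$. Applying the Clebsch--Gordan identity $D^{L} Q^{L,(\ell,\ell')} = Q^{L,(\ell,\ell')} D^{\ell} \otimes D^{\ell'}$ from \cref{th:clebsh_gordan_decomposition}, together with the fact that the $Q$'s are (block-)orthogonal, lets me express each such product as $\sum_{L,a,b} Q^{L,(\ell,\ell')}_{a,i,n'} D^{L}(R)_{ab} Q^{L,(\ell,\ell')}_{b,j,m'}$. Substituting, reordering the sums, and identifying the coefficient of $D^{L}(R)_{ij}$ yields exactly the claimed triple contraction $\sum_{\ell} Q^{L,(\ell,\ell')}_{i,:,:}\,(f^{\ell} \ast_{\RR^3,\lambda} \kappa^{\ell'}_{r})(x)\,Q^{(\ell,\ell'),L}_{:,m',j}$, with the summation range $|\ell-\ell'| \leqslant L \leqslant \ell+\ell'$ arising naturally from the support of the Clebsch--Gordan coefficients.

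The main obstacle I expect is purely bookkeeping: keeping the three axes of each $Q^{L,(\ell,\ell')}$ consistent (which index is contracted against $f^{\ell}$, which against $\kappa^{\ell'}$, and which survives as the output $L$-index), and confirming that the convention $Q^{(\ell,\ell'),L}$ in the statement is just the transpose reindexing of $Q^{L,(\ell,\ell')}$ so that the triple product is a well-formed scalar. Once the Clebsch--Gordan orthogonality/intertwining relation is applied and one verifies that the conventions used in \cref{sec:clebsch_gordan_coefficients} agree with the Wigner normalization from \cref{sec:wigner_matrix}, the rest of the computation is mechanical.
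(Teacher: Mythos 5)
Your proposal is correct and follows essentially the same route as the paper's proof: for the $\RR^3$ component, Wigner expansion of $f$ plus steerability of $\kappa^{\ell'}_{rm'}$ reduces the integral to ordinary $\RR^3$-convolutions multiplied by entries of $D^{\ell}(R)\otimes D^{\ell'}(R)$, which are then decomposed via the Clebsch--Gordan intertwining relation and identified by uniqueness of the Wigner coefficients; for the $\mathrm{SO}(3)$ component, the paper likewise expands both factors, uses $D^{L}(H^{-1}R)=D^{L}(H)^{\top}D^{L}(R)$, and applies Schur orthogonality to obtain the matrix product $f^{L}(x)\theta^{L}$. The only cosmetic difference is that the paper carries out the $\RR^3$ computation in inner-product (rather than componentwise index) notation.
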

The Clebsch-Gordan coefficients appear in the expression of the $\RR^3$ component because it involves convolution between Wigner coefficient matrices $f^{\ell}$ steerable kernels $\kappa^{\ell'}$ that must be decomposed like in the TFN setting. The expression for the $\mathrm{SO}(3)$ component follows from the expression of $\mathrm{SO}(3)$ convolution in the Wigner basis which is used in \cite{cohen2018spherical}. Combining the results of \cref{lemma:harmonic_sep_se3_conv} and \cref{th:conv_approx} we obtain the expression of $\mathrm{Conv}_{\mathrm{SE}(3)}$ layers in the Wigner basis. We write the expression of the extended TFN layers described above side by side for easy visual comparison:
\begin{Theorem}[Wigner decomposition of $\mathrm{SE}(3)$ convolution and TFN layers]
 \label{th:tfn_vs_se3_conv_1}
 
For a kernel basis of the form $(\kappa_{rm'}^{\ell'} \otimes D^{L}_{Mn})_{r\ell'm',LMn}$ the $\mathrm{Conv}_{\mathrm{SE}(3)}$ and $\mathrm{TFN}$ layers have the following Wigner basis expansion for any collection of $\mathrm{SE}(3)$ functions $(f_c:\mathrm{SE}(3) \rightarrow \RR)_c$: 
\[
\begin{aligned}
\mathrm{Conv}_{\mathrm{SE}(3)}(f, W, b)^L_{ij,d}
&
:=
\sum_{\ell\ell'crm}
 Q^{L, (\ell, \ell')}_{i,:,:}  \left(f^{\ell}_{:,m,c} \ast_{\RR^3, \lambda} \kappa^{\ell'}_{r}\right) 
 \sum_{m'M} Q_{m,m',M}^{ (\ell, \ell'), L} W^{\ell', L}_{jd,crm'M} + b^L_d
 \\
 \mathrm{TFN}(f, V, b)^L_{ij,d}
 &
 := 
 \sum_{\ell\ell'crm}  Q^{L,(\ell, \ell')}_{i,:,:}  \left(f^{\ell}_{:,m,c} \ast_{\RR^3, \lambda} \kappa^{\ell'}_{r}\right) V^{(\ell, \ell'),L}_{jd,crm} + b^L_d
\end{aligned} 
\]
where $b^L_d := b_d$ iff $L = 0$ and $0$ otherwise.
\end{Theorem}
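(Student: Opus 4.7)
The plan is to obtain the Wigner expansion of $\mathrm{Conv}_{\mathrm{SE}(3)}$ by chaining together two results already in hand: the separability decomposition of \cref{th:conv_approx} and the per-component Wigner formulas of \cref{lemma:harmonic_sep_se3_conv}. The TFN part of the statement is then essentially a restatement of \cref{def:tfn_layer} after the ``complete with zeros and reshape'' identification described just before the theorem, so the bulk of the work is on the $\mathrm{Conv}_{\mathrm{SE}(3)}$ side, and the two formulas are then written side by side so that the correspondence $V \leftrightarrow W$ can be read off.

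First I would unfold a single $\mathrm{Conv}_{\mathrm{SE}(3)}$ layer with the separable kernel family $(\kappa^{\ell'}_{rm'} \otimes D^L_{Mn})_{r\ell'm',LMn}$ and apply \cref{th:conv_approx} to factor it as a $\mathrm{Conv}_{\RR^3 \times I}$ stage composed with a $\mathrm{Conv}_{0 \times \mathrm{SO}(3)}$ stage. Then I would apply the first formula of \cref{lemma:harmonic_sep_se3_conv} to the spatial stage: each term $(f_c \ast_\lambda (\kappa^{\ell'}_{rm'} \otimes \delta_I))^L_{ij}(x)$ becomes a sum over the admissible source types $\ell$ of the Clebsch--Gordan contraction $Q^{L,(\ell,\ell')}_{i,:,:}\,(f^\ell_{:, :, c} \ast_{\RR^3,\lambda} \kappa^{\ell'}_r)(x)\,Q^{(\ell,\ell'),L}_{:,m',j}$. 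At this point the index $m$ of the theorem arises naturally as the column index of the matrix-valued Wigner coefficient $f^\ell_{:,m,c}$ that survives the first CG contraction.

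Next I would apply the second formula of \cref{lemma:harmonic_sep_se3_conv} to the SO(3) stage with kernel $D^L_{Mn}$. The key computation is the Wigner expansion of the single matrix entry $D^L_{Mn}$ itself: by the orthogonality relations of the Wigner basis, $(D^L_{Mn})^{L'}$ is nonzero only for $L'=L$ and equals, up to the standard normalization factor, the sparse rank-one matrix $E_{Mn}$. Consequently post-convolving the intermediate by $\delta_0 \otimes D^L_{Mn}$ selects its $L$-th Wigner slice and multiplies on the right by $E_{Mn}$, so the SO(3) stage only affects the last two indices of the intermediate and introduces the explicit dependence on $(M,n)$ that shows up in the $\mathrm{SE}(3)$ weight tensor. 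Summing these contributions against the $\mathrm{SE}(3)$ weights $W^{\ell',L}_{jd,crm'M}$ and collecting the sums over $m'$ and $M$ against $Q^{(\ell,\ell'),L}_{m,m',M}$ produces precisely the claimed $\mathrm{Conv}_{\mathrm{SE}(3)}$ formula.

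Finally, to obtain the TFN formula I would take \cref{def:tfn_layer} for the equivariant features $(f^\ell_{:,c})$, extend each input channel to a matrix-valued feature $f^\ell_{:,m,c}$ by completing the missing columns with zeros, and promote the TFN weight $W^{(\ell,\ell'),L}_{d,cr}$ to $V^{(\ell,\ell'),L}_{jd,crm}$ carrying the free row--column pair $(j,m)$; the expression in the theorem follows directly. The main obstacle I expect is the careful index bookkeeping: tracking which axis of the Clebsch--Gordan tensor contracts with which axis of $f^\ell$ and of the sparse factor $E_{Mn}$, and ensuring that any normalization constant coming from Wigner orthonormality is absorbed into the identification between the two weight tensors. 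Once that accounting is done, placing the two expressions in parallel makes the one-to-one correspondence between $\mathrm{Conv}_{\mathrm{SE}(3)}$ weights and TFN weights transparent.
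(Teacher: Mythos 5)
Your proposal is correct and follows essentially the same route as the paper: separate the $\mathrm{SE}(3)$ convolution into its $\RR^3$ and $\mathrm{SO}(3)$ stages (the paper invokes \cref{lemma:separable_conv} term by term rather than the layer-level \cref{th:conv_approx}, a cosmetic difference), apply \cref{lemma:harmonic_sep_se3_conv} to each stage, observe that the Wigner coefficients of $D^L_{Mn}$ are concentrated in degree $L$ and given by the sparse matrix $e^L_{Mn}$, and collect the resulting Clebsch--Gordan contractions over $m',M$ into the weight tensor. The only minor point is that with the synthesis convention $f(R)=\sum_\ell \langle f^\ell, D^\ell(R)\rangle$ the coefficient matrix of $D^L_{Mn}$ is exactly $e^L_{Mn}$ with no normalization factor to absorb, which is what the paper uses.
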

We immediately see from \cref{th:tfn_vs_se3_conv_1} that an $\mathrm{SE}(3)$-conv layer is a particular case of TFN layers as $\mathrm{SE}(3)$-conv weights linearly transform into TFN weights via the Clebsch Gordan tensors. Conversely we can show an equivalence between $\mathrm{SE}(3)$-conv and TFN layers as this transform is invertible, more precisely we have:
\begin{Theorem}[Equivalence between TFN and $\mathrm{SE}(3)$ convolution]
\label{th:tfn_vs_se3_conv_2}
We have a bijective linear map $\iota$ mapping $\mathrm{SE}(3)$-conv weights to $\mathrm{TFN}$ weights and its inverse mapping $\mathrm{TFN}$ weights $\mathrm{SE}(3)$-conv weights defined by:
\[
\iota(W)^{L,(\ell, \ell')}_{jd,crm}
:=
\sum_{m'M} Q_{m,m',M}^{ (\ell, \ell'), L} W^{\ell', L}_{jd,crm'M} \ \ \ \
\iota^{-1}(V)^{\ell',L}_{jd,crm'M}
:=
 \sum_{\ell,m}\frac{2\ell+1}{2L+1}Q^{L,(\ell,\ell')}_{M,m,m'} V^{L, (\ell, \ell')}_{jd,crm}
\]
for any collection of functions $(f_c:\mathrm{SE}(3)\rightarrow \RR)_c$ we have:
\[
\mathrm{Conv}_{\mathrm{SE}(3)}(f, \nu, W, b)
 = 
\mathrm{TFN}(f, \nu, \iota(W), b),
\ \ \ \ \
\mathrm{TFN}(f, \nu, V, b)
=
\mathrm{Conv}_{\mathrm{SE}(3)}(f,\nu, \iota^{-1}(V), b). 
\]
\end{Theorem}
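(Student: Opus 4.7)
The first identity $\mathrm{Conv}_{\mathrm{SE}(3)}(f,\nu,W,b) = \mathrm{TFN}(f,\nu,\iota(W),b)$ is immediate from \cref{th:tfn_vs_se3_conv_1}: comparing the two displayed formulas there, the only difference is whether the inner contraction against the Clebsch--Gordan tensor has been absorbed into a reindexed weight tensor. Setting $V^{(\ell,\ell'),L}_{jd,crm} := \sum_{m'M} Q^{(\ell,\ell'),L}_{m,m',M} W^{\ell',L}_{jd,crm'M}$, which is exactly $\iota(W)$, renders the $\mathrm{TFN}$ expression term-by-term identical to the $\mathrm{Conv}_{\mathrm{SE}(3)}$ expression, with no further computation.

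The second identity $\mathrm{TFN}(f,\nu,V,b) = \mathrm{Conv}_{\mathrm{SE}(3)}(f,\nu,\iota^{-1}(V),b)$ then follows by applying the first with $W := \iota^{-1}(V)$, once I establish that $\iota$ is a bijection whose inverse is the stated formula. As a preliminary sanity check I would count dimensions: for each fixed $L$ and $\ell'$ (and fixed multi-index $(j,d,c,r)$), the $W$-block is indexed by $(m',M)$ with $(2\ell'+1)(2L+1)$ entries, while the $V$-block ranges over $\ell \in \{|L-\ell'|,\ldots,L+\ell'\}$ and $m \in \{-\ell,\ldots,\ell\}$, giving $\sum_{\ell}(2\ell+1) = (2L+1)(2\ell'+1)$ entries by the standard arithmetic identity underlying the Clebsch--Gordan decomposition.

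The substantive step is verifying $\iota^{-1}\circ\iota = \mathrm{id}$. Substituting $V = \iota(W)$ into the definition of $\iota^{-1}$ and exchanging the order of summation, the claim reduces to the Clebsch--Gordan orthogonality
\begin{equation*}
\sum_{\ell,m} \frac{2\ell+1}{2L+1}\, Q^{L,(\ell,\ell')}_{M,m,m'}\, Q^{(\ell,\ell'),L}_{m,\tilde{m}',\tilde{M}} \;=\; \delta_{m',\tilde{m}'}\,\delta_{M,\tilde{M}}.
\end{equation*}
This is the main obstacle. The standard CG orthogonality fixes $(\ell,\ell')$ and sums over the output indices $(L,M)$, whereas here I must fix $(\ell',L)$ and sum over $(\ell,m)$ with the non-trivial weight $(2\ell+1)/(2L+1)$. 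I would derive it from the symmetry relating the couplings $\ell\otimes\ell'\to L$ and $L\otimes\ell'\to\ell$, which (in the Condon--Shortley convention) expresses $\sqrt{(2\ell+1)/(2L+1)}\,Q^{L,(\ell,\ell')}_{M,m,m'}$ as $\pm\, Q^{\ell,(L,\ell')}_{m,M,-m'}$; the weight factor is precisely the square of this normalization. Substituting then converts the displayed sum into the standard CG completeness relation for the permuted coupling $L\otimes\ell'\to\ell$ (which sums over the output indices $(\ell,m)$ at fixed inputs $(L,\ell')$), and the two sign factors cancel once the resulting Kronecker deltas are enforced. The verification of $\iota\circ\iota^{-1}=\mathrm{id}$ is entirely analogous, completing the proof.
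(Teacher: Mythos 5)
Your proposal is correct and follows essentially the same route as the paper: the first identity is read off from \cref{th:tfn_vs_se3_conv_1} by absorbing the Clebsch--Gordan contraction into the weights, and the inversion is reduced to the ``permuted'' orthogonality relation $\sum_{\ell,m}\frac{2\ell+1}{2L+1}Q^{L,(\ell,\ell')}_{M,m,m'}Q^{(\ell,\ell'),L}_{m,\tilde m',\tilde M}=\delta_{m'\tilde m'}\delta_{M\tilde M}$, which the paper likewise establishes (\cref{lemma:clebsch_gordan_orthogonality}) by combining the symmetry exchanging the couplings $\ell\otimes\ell'\to L$ and $L\otimes\ell'\to\ell$ (\cref{th:clebsch_gordan_symmetries}) with the standard completeness relation (\cref{th:clebsch_gordan_orthogonality}). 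The only detail you elide that the paper spells out is the passage from the complex to the real Clebsch--Gordan tensors via the unitary transition matrices, which cancel and do not affect the argument.
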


A consequence of \cref{th:tfn_vs_se3_conv_1} and \cref{th:tfn_vs_se3_conv_2} is that TFN is a practical way to implement $\mathrm{SE}(3)$ CNNs. Discretizing $\mathrm{SE}(3)$-convolution is challenging especially the $\mathrm{SO}(3)$ component. A benefit of the TFN representation compared to other discretizations like \cite{chen2021equivariant} is full $\mathrm{SO}(3)$-equivariance (instead of a discrete subgroup). TFN is also a relatively efficient way to implement $\mathrm{SE}(3)$-convolution which is 6 dimensional, in practice however TFN layers are band-limited to limit the number of coefficients which limits its ability to model high frequency directional signals.

\subsection{$\mathrm{SE}(3)$ activations in the harmonic domain}
\label{sec:non_linearities}

An important aspect of designing equivariant networks is the exact choice of non-linearity. Our analysis above (\cref{th:tfn_vs_se3_conv_2}) shows the equivalence between TFNs which operate in the harmonic domain and SE(3) convolution, which is defined directly on the roto-translational group elements. Inspired by this analysis, we propose a simple pointwise non-linearity, which is based on navigating between these two representations: evaluating Wigner coefficients via inverse Wigner Transform (WT) $\mathcal{W}^{-1}(F) := \sum_{\ell} \langle F^{\ell}, D^{\ell}(R)\rangle$, applying a point-wise non linearity $\xi$ (ReLU) and converting back to Wigner coefficients with forward Wigner Transform $\mathcal{W}(f)^{\ell} := \int_{\mathrm{SO}(3)} f(R) D^{\ell}(R) d\mu(R)$ and its dicretized version $\mathcal{W}_G(f)^{\ell} := \frac{1}{|G|}\sum_{R\in G} f(R) D^{\ell}(R)$ for a finite subset $G \subset \mathrm{SO}(3)$. A very similar non linearity has been used for $\mathrm{SO}(3)$ convolution in $\cite{cohen2018spherical}$. The novelty here is to use it in the context of TFN while interpreting it as a ReLU activation of $\mathrm{SE}(3)$-CNNs. Also in the context of TFN \cite{poulenard2021functional} interprets TFN vector features as coefficients of function on the unit sphere $\mathcal{S}_2 \subset \RR^3$ and derives a similar non-linearity navigating between harmonic and pointwise domains on $\mathcal{S}_2$ with Spherical Harmonics Transforms (SHT). We refer to \cite{poulenard2021functional} for more discussion on TFN non linearities.

\begin{Theorem}[Equivariance of WT non linearities]
\label{th:activations_equivariance}
Denoting $\mathbf{Conv}_{\mathrm{SE}(3)}$ the Wigner decomposition of $\mathrm{Conv}_{\mathrm{SE}(3)}$ described in \cref{th:tfn_vs_se3_conv_1} the network defined by: 
\[
\mathbf{CNN}_{\mathrm{SE}(3)}(f, \nu, W, b) := y^n, \ y^{k+1} := \mathcal{F} \circ \xi \circ \mathcal{F}^{-1}(\mathbf{Conv}_{\mathrm{SE}(3)}(y^{k}, \nu_k, W^{k+1}, b^{k+1})), \ y^0 := f
\]
is a standard $\mathrm{SE}(3)$-CNN if $\mathcal{F} = \mathcal{W}$ is the Wigner transform. If $\mathcal{F} = \mathcal{W}_G$ where $G \subset \mathrm{SO}(3)$ is a finite subgroup then $F$ is equivariant to $G$, i.e. for all $(t,R) \in \RR^3 \times G$ and all $x \in \RR^3$ we have: 
\[
\mathbf{CNN}_{\mathrm{SE}(3)}((t,R).f, (t,R).\nu, W, b)^{L}(x) = D^{L}(R)\mathbf{CNN}_{\mathrm{SE}(3)}(f, \nu, W, b)^{L}(R^{-1}x-t).
\]
\end{Theorem}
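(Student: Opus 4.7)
The plan is to factor the network through the $\SE(3)$-function domain using the Wigner transform, reducing each claim to the known equivariance of $\mathrm{CNN}_{\mathrm{SE}(3)}$ (\cref{th:gcnn_equivariance}) together with an equivariance property of the activation block $\mathcal{F} \circ \xi \circ \mathcal{F}^{-1}$. The critical ingredient is \cref{th:tfn_vs_se3_conv_1}, which already identifies $\mathbf{Conv}_{\mathrm{SE}(3)}$ as the Wigner-space conjugate of $\mathrm{Conv}_{\mathrm{SE}(3)}$, i.e.\ $\mathbf{Conv}_{\mathrm{SE}(3)} = \mathcal{W} \circ \mathrm{Conv}_{\mathrm{SE}(3)} \circ \mathcal{W}^{-1}$ applied fiberwise at each $x \in \RR^3$.

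For the first claim ($\mathcal{F} = \mathcal{W}$), since $\mathcal{W}$ is a unitary isomorphism between $L^2(\mathrm{SO}(3))$ and its Wigner-coefficient representation (Peter--Weyl), the block $\mathcal{W} \circ \xi \circ \mathcal{W}^{-1}$ is exactly the pointwise activation $\xi$ expressed in the Wigner basis. Set $z^k := \mathcal{W}^{-1}(y^k)$ fiberwise in $x$ and prove by induction on $k$ that $y^{k+1} = \mathcal{W}(z^{k+1})$ where $z^{k+1} = \xi(\mathrm{Conv}_{\mathrm{SE}(3)}(z^k, \nu_k, W^{k+1}, b^{k+1}))$. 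This gives $\mathbf{CNN}_{\mathrm{SE}(3)}(f, \nu, W, b) = \mathcal{W}(\mathrm{CNN}_{\mathrm{SE}(3)}(\mathcal{W}^{-1}(f), \nu, W, b))$, so $\mathbf{CNN}_{\mathrm{SE}(3)}$ is a standard $\SE(3)$-CNN in the Wigner basis and equivariance transfers verbatim from \cref{th:gcnn_equivariance}.

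For the second claim ($\mathcal{F} = \mathcal{W}_G$), exact inversion is lost but $G$-equivariance survives through two short computations. First, a change of variable $R \mapsto hR$ in the sum defining $\mathcal{W}_G$ (legitimate because $G$ is a subgroup, so $hG = G$) yields $\mathcal{W}_G(h.f)^\ell = D^\ell(h)\,\mathcal{W}_G(f)^\ell$ for $h \in G$. Second, the Frobenius-inner-product identity $\langle D^\ell(h) F^\ell, D^\ell(R)\rangle = \langle F^\ell, D^\ell(h^{-1}R)\rangle$ (valid because $D^\ell$ is orthogonal) gives $\mathcal{W}^{-1}(D^\ell(h) F)(R) = \mathcal{W}^{-1}(F)(h^{-1}R)$. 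Chaining these and using that $\xi$ commutes with precomposition by $h^{-1}$, one obtains $\mathcal{W}_G(\xi(\mathcal{W}^{-1}(D^\ell(h) F)))^\ell = D^\ell(h)\,\mathcal{W}_G(\xi(\mathcal{W}^{-1}(F)))^\ell$, i.e.\ the activation block is $G$-equivariant on Wigner coefficients. Combining with the full $\SE(3)$-equivariance of $\mathbf{Conv}_{\mathrm{SE}(3)}$ (translation-equivariance from $\RR^3$ convolution, plus $G \subset \mathrm{SO}(3)$-equivariance via the Wigner identities) and inducting on $k$ yields the stated equivariance under $\RR^3 \rtimes G$.

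The main obstacle will be bookkeeping the action on the underlying measure $\nu_k$ through the layers, since in the point-cloud case translations and rotations change the support of $\delta_X$, and one must consistently track both the rotation-on-coefficients action $F^\ell \mapsto D^\ell(h) F^\ell$ and the induced $\SE(3)$-action on $\nu$. A secondary subtlety in the discrete case is that $\mathcal{W}^{-1}$ is being applied to finitely-supported (band-limited) coefficients produced by $\mathbf{Conv}_{\mathrm{SE}(3)}$, so the sample-and-resynthesize block is well-defined even though $\mathcal{W}_G \circ \mathcal{W}^{-1}$ is not the identity; the two equivariance identities above are the only facts actually needed, not invertibility.
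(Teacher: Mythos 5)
Your proposal is correct and follows essentially the same route as the paper: the continuous case is handled by observing that $\mathcal{W}$ conjugates the network into a standard $\mathrm{SE}(3)$-CNN so that \cref{th:gcnn_equivariance} applies directly, and the discrete case rests on re-indexing the finite sum over the subgroup $G$ (your change of variable $R \mapsto hR$ is exactly the paper's permutation $\sigma$ with $R_{\sigma(i)} = R^{-1}R_i$) combined with the representation property $D^L(RR_j) = D^L(R)D^L(R_j)$ to factor out $D^L(R)$. Your splitting of the discrete argument into the two named identities $\mathcal{W}_G(h.f)^{\ell} = D^{\ell}(h)\mathcal{W}_G(f)^{\ell}$ and $\mathcal{W}^{-1}(D^{\ell}(h)F) = h.\mathcal{W}^{-1}(F)$ is just a cleaner packaging of the same single chain of equalities the paper writes out.
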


\section{Experimental validation}
\label{sec:results}
Using our equivalence principle from \cref{th:tfn_vs_se3_conv_2} we can implement the Wigner decomposition of $\mathrm{SE}(3)$-CNNs with ReLU non linearities as described in \cref{sec:non_linearities} within a TFN pipeline. We adapt the TFN pipeline from \cite{poulenard2021functional}, by reshaping the tensors to match the $\mathrm{SE}(3)$ convolution format as described in \cref{sec:tfn_vs_se3_conv} while approximately keeping the same tensor size as the original architecture. We evaluate our modified TFN model on the ModelNet40 3D objects classification benchmark \cite{wu20153d} and report results in \cref{table:modelnet40}.  We consider two variants of our architecture, one with 60 $\mathrm{SO}(3)$ samples where $G$ is the icosahedral group (ours (ico)) and an other where $G$ consists 256 samples of $\mathrm{SO}(3)$ obtained with farthest point sampling to evaluate the impact of the density of the sampling. We compare our model to its discrete $\mathrm{SE}(3)$-CNN counterpart introduced in \cite{chen2021equivariant} and to other TFN non linearities implemented within the same TFN model retraining models from \cite{poulenard2021functional}. Results are reported in \cref{table:modelnet40}. We refer to \cite{poulenard2021functional} for more discussion about TFN non linearities and additional baselines for this benchmark.  We see in \cref{table:modelnet40} that our results are consistent with \cite{chen2021equivariant} and comparable to other TFN non linearities, although SHT non linearity from \cite{poulenard2021functional} seems to perform better. A possible reason might be that SHT can be approximated with fewer samples compared to WT as the unit sphere $\mathcal{S}_2$ is 2 dimensional while $\mathrm{SO}(3)$ is 3 dimensional. We also notice the impact of sampling density for computing WT, lower number of samples degrades equivariance. This has an important effect on generalization when training on aligned data and testing on rotated data as can be seen with the much wider generalization gap for ours(ico) (60 samples) compared to ours(256) (256 samples).

\begin{table}[ht]
\caption{Test classification accuracy on ModelNet40 \cite{wu20153d} in two train / test scenarios. $z$: aligned data augmented by random rotations around the z axis,  $\mathrm{SO}(3)$: augmentation by random $\mathrm{SO}(3)$ rotations. In Ours (ico) we set the $G \subset \mathrm{SO}(3)$ to be the 60 elements of the icosahedral group. In ours (256) we select 256 $\mathrm{SO}(3)$ samples via farthest point sampling. For TFN we use the original non linearities of \cite{thomas2018tensor}, TFN[gated] uses gated non-linearities from \cite{weiler20183d}, TFN[ReLU] uses SHT
with ReLU activation with 64 $\mathcal{S}_2$ samples. All networks are trained for 150 epochs.
\label{table:modelnet40}}
\begin{center}
\scalebox{0.84}{
\begin{tabular}{|c||c|c|c|c|c|c|}
\hline
& ours (ico) & ours (256)& $\mathrm{SE}(3)$-CNN \cite{chen2021equivariant} & TFN \cite{poulenard2021functional}& TFN[gated] \cite{poulenard2021functional} & TFN[ReLU] \cite{poulenard2021functional} \\
\hline \hline
$z / \mathrm{SO}(3)$ & 67.2 & 81.1 & - & 87.1 & 87.3 & 88.2
\\
$\mathrm{SO}(3) / \mathrm{SO}(3)$ & 88.2 & 88.4 & 88.3 & 87.5 & 88.5 & 89.3
\\
\hline
\end{tabular}}
\end{center}
\end{table}

\vspace{-3mm}\section{Conclusion}

In this work we have shown an equivalence between two seemingly different approaches to $\mathrm{SE}(3)$ equivariance. Namely we have shown that 3D steerable convolution is the Fourier (or Wigner) transform of $\mathrm{SE}(3)$-convolution. We also observed that multiview CNNs are a particular case of $\mathrm{SE}(3)$-CCNs.
Together with other results showing that equivariant linear layers must be convolutional, our analysis suggests that different existing designs, while equivalent in the infinite resolution, achieve trade-offs primarily due difference in sampling, being band-limited (for TFN and Steerable CNNs) and limited by finite sampling of rotations (for $\SE(3)$-conv). Although some additional analysis might be required to draw a definitive conclusion on whether equivariant linear layers are always convolutional in discrete settings like point cloud data. Our observations also suggest to explore non-linear equivariant layer designs (e.g. polynomial) as a path forward for research on equivariant networks contrary to the linear setting. We believe this path is still not extensively explored. 
\bibliographystyle{unsrt}
\bibliography{neurips_2022}

\begin{thebibliography}{10}

\bibitem{chang2015shapenet}
Angel~X Chang, Thomas Funkhouser, Leonidas Guibas, Pat Hanrahan, Qixing Huang,
  Zimo Li, Silvio Savarese, Manolis Savva, Shuran Song, Hao Su, et~al.
\newblock Shapenet: An information-rich 3d model repository.
\newblock {\em arXiv preprint arXiv:1512.03012}, 2015.

\bibitem{wu20153d}
Zhirong Wu, Shuran Song, Aditya Khosla, Fisher Yu, Linguang Zhang, Xiaoou Tang,
  and Jianxiong Xiao.
\newblock 3d shapenets: A deep representation for volumetric shapes.
\newblock In {\em Proceedings of the IEEE conference on computer vision and
  pattern recognition}, pages 1912--1920, 2015.

\bibitem{thomas2018tensor}
Nathaniel Thomas, Tess Smidt, Steven Kearnes, Lusann Yang, Li~Li, Kai Kohlhoff,
  and Patrick Riley.
\newblock Tensor field networks: Rotation-and translation-equivariant neural
  networks for 3d point clouds.
\newblock {\em arXiv preprint arXiv:1802.08219}, 2018.

\bibitem{kondor2018n}
Risi Kondor.
\newblock N-body networks: a covariant hierarchical neural network architecture
  for learning atomic potentials.
\newblock {\em arXiv preprint arXiv:1803.01588}, 2018.

\bibitem{cohen2018spherical}
Taco~S Cohen, Mario Geiger, Jonas K{\"o}hler, and Max Welling.
\newblock Spherical cnns.
\newblock {\em arXiv preprint arXiv:1801.10130}, 2018.

\bibitem{kondor2018clebsch}
Risi Kondor, Zhen Lin, and Shubhendu Trivedi.
\newblock Clebsch--gordan nets: a fully fourier space spherical convolutional
  neural network.
\newblock {\em Advances in Neural Information Processing Systems}, 31, 2018.

\bibitem{weiler20183d}
Maurice Weiler, Mario Geiger, Max Welling, Wouter Boomsma, and Taco~S Cohen.
\newblock 3d steerable cnns: Learning rotationally equivariant features in
  volumetric data.
\newblock {\em Advances in Neural Information Processing Systems}, 31, 2018.

\bibitem{worrall2018cubenet}
Daniel Worrall and Gabriel Brostow.
\newblock Cubenet: Equivariance to 3d rotation and translation.
\newblock In {\em Proceedings of the European Conference on Computer Vision
  (ECCV)}, pages 567--584, 2018.

\bibitem{esteves2019equivariant}
Carlos Esteves, Yinshuang Xu, Christine Allen-Blanchette, and Kostas
  Daniilidis.
\newblock Equivariant multi-view networks.
\newblock In {\em Proceedings of the IEEE/CVF International Conference on
  Computer Vision}, pages 1568--1577, 2019.

\bibitem{anderson2019cormorant}
Brandon Anderson, Truong~Son Hy, and Risi Kondor.
\newblock Cormorant: Covariant molecular neural networks.
\newblock {\em Advances in neural information processing systems}, 32, 2019.

\bibitem{fuchs2020se}
Fabian Fuchs, Daniel Worrall, Volker Fischer, and Max Welling.
\newblock Se (3)-transformers: 3d roto-translation equivariant attention
  networks.
\newblock {\em Advances in Neural Information Processing Systems},
  33:1970--1981, 2020.

\bibitem{deng2021vector}
Congyue Deng, Or~Litany, Yueqi Duan, Adrien Poulenard, Andrea Tagliasacchi, and
  Leonidas~J Guibas.
\newblock Vector neurons: A general framework for so (3)-equivariant networks.
\newblock In {\em Proceedings of the IEEE/CVF International Conference on
  Computer Vision}, pages 12200--12209, 2021.

\bibitem{chen2021equivariant}
Haiwei Chen, Shichen Liu, Weikai Chen, Hao Li, and Randall Hill.
\newblock Equivariant point network for 3d point cloud analysis.
\newblock In {\em Proceedings of the IEEE/CVF Conference on Computer Vision and
  Pattern Recognition}, pages 14514--14523, 2021.

\bibitem{freeman1991design}
William~T Freeman, Edward~H Adelson, et~al.
\newblock The design and use of steerable filters.
\newblock {\em IEEE Transactions on Pattern analysis and machine intelligence},
  13(9):891--906, 1991.

\bibitem{Bekkers21slides}
Erik~J. Bekkers.
\newblock Introduction to group equivariant deep learning (lecture slides),
  August 2021.

\bibitem{Bekkers21}
Erik~J. Bekkers.
\newblock An introduction to equivariant convolutional neural networks for
  continuous groups (lecture notes), August 2021.

\bibitem{qi2016volumetric}
Charles~R Qi, Hao Su, Matthias Nie{\ss}ner, Angela Dai, Mengyuan Yan, and
  Leonidas~J Guibas.
\newblock Volumetric and multi-view cnns for object classification on 3d data.
\newblock In {\em Proceedings of the IEEE conference on computer vision and
  pattern recognition}, pages 5648--5656, 2016.

\bibitem{mehr2018manifold}
Eloi Mehr, Andr{\'e} Lieutier, Fernando~Sanchez Bermudez, Vincent Guitteny,
  Nicolas Thome, and Matthieu Cord.
\newblock Manifold learning in quotient spaces.
\newblock In {\em Proceedings of the IEEE Conference on Computer Vision and
  Pattern Recognition}, pages 9165--9174, 2018.

\bibitem{kondor2018generalization}
Risi Kondor and Shubhendu Trivedi.
\newblock On the generalization of equivariance and convolution in neural
  networks to the action of compact groups.
\newblock In {\em International Conference on Machine Learning}, pages
  2747--2755. PMLR, 2018.

\bibitem{cohen2018intertwiners}
Taco~S Cohen, Mario Geiger, and Maurice Weiler.
\newblock Intertwiners between induced representations (with applications to
  the theory of equivariant neural networks).
\newblock {\em arXiv preprint arXiv:1803.10743}, 2018.

\bibitem{cohen2019general}
Taco~S Cohen, Mario Geiger, and Maurice Weiler.
\newblock A general theory of equivariant cnns on homogeneous spaces.
\newblock {\em Advances in neural information processing systems}, 32, 2019.

\bibitem{cohen2016group}
Taco Cohen and Max Welling.
\newblock Group equivariant convolutional networks.
\newblock In {\em International conference on machine learning}, pages
  2990--2999. PMLR, 2016.

\bibitem{lang2020wigner}
Leon Lang and Maurice Weiler.
\newblock A wigner-eckart theorem for group equivariant convolution kernels.
\newblock {\em arXiv preprint arXiv:2010.10952}, 2020.

\bibitem{weiler2021coordinate}
Maurice Weiler, Patrick Forr{\'e}, Erik Verlinde, and Max Welling.
\newblock Coordinate independent convolutional networks--isometry and gauge
  equivariant convolutions on riemannian manifolds.
\newblock {\em arXiv preprint arXiv:2106.06020}, 2021.

\bibitem{krizhevsky2012imagenet}
Alex Krizhevsky, Ilya Sutskever, and Geoffrey~E Hinton.
\newblock Imagenet classification with deep convolutional neural networks.
\newblock {\em Advances in neural information processing systems}, 25, 2012.

\bibitem{hormander1960estimates}
Lars H{\"o}rmander.
\newblock Estimates for translation invariant operators in lp spaces.
\newblock {\em Acta Mathematica}, 104(1-2):93--140, 1960.

\bibitem{esteves2020theoretical}
Carlos Esteves.
\newblock Theoretical aspects of group equivariant neural networks.
\newblock {\em arXiv preprint arXiv:2004.05154}, 2020.

\bibitem{hall2003lie}
Brian~C Hall et~al.
\newblock {\em Lie groups, Lie algebras, and representations: an elementary
  introduction}, volume~10.
\newblock Springer, 2003.

\bibitem{nachbin1976haar}
Leopoldo Nachbin.
\newblock {\em The haar integral}.
\newblock RE Krieger Publishing Company, 1976.

\bibitem{atzmon2018point}
Matan Atzmon, Haggai Maron, and Yaron Lipman.
\newblock Point convolutional neural networks by extension operators.
\newblock {\em arXiv preprint arXiv:1803.10091}, 2018.

\bibitem{hermosilla2018monte}
Pedro Hermosilla, Tobias Ritschel, Pere-Pau V{\'a}zquez, {\`A}lvar Vinacua, and
  Timo Ropinski.
\newblock Monte carlo convolution for learning on non-uniformly sampled point
  clouds.
\newblock {\em ACM Transactions on Graphics (TOG)}, 37(6):1--12, 2018.

\bibitem{hua2018pointwise}
Binh-Son Hua, Minh-Khoi Tran, and Sai-Kit Yeung.
\newblock Pointwise convolutional neural networks.
\newblock In {\em Proceedings of the IEEE Conference on Computer Vision and
  Pattern Recognition}, pages 984--993, 2018.

\bibitem{li2018pointcnn}
Yangyan Li, Rui Bu, Mingchao Sun, Wei Wu, Xinhan Di, and Baoquan Chen.
\newblock Pointcnn: Convolution on x-transformed points.
\newblock {\em Advances in neural information processing systems}, 31, 2018.

\bibitem{xu2018spidercnn}
Yifan Xu, Tianqi Fan, Mingye Xu, Long Zeng, and Yu~Qiao.
\newblock Spidercnn: Deep learning on point sets with parameterized
  convolutional filters.
\newblock In {\em Proceedings of the European Conference on Computer Vision
  (ECCV)}, pages 87--102, 2018.

\bibitem{thomas2019kpconv}
Hugues Thomas, Charles~R Qi, Jean-Emmanuel Deschaud, Beatriz Marcotegui,
  Fran{\c{c}}ois Goulette, and Leonidas~J Guibas.
\newblock Kpconv: Flexible and deformable convolution for point clouds.
\newblock In {\em Proceedings of the IEEE/CVF international conference on
  computer vision}, pages 6411--6420, 2019.

\bibitem{guo2020deep}
Yulan Guo, Hanyun Wang, Qingyong Hu, Hao Liu, Li~Liu, and Mohammed Bennamoun.
\newblock Deep learning for 3d point clouds: A survey.
\newblock {\em IEEE transactions on pattern analysis and machine intelligence},
  43(12):4338--4364, 2020.

\bibitem{chirikjian2000engineering}
Gregory~S Chirikjian.
\newblock {\em Engineering applications of noncommutative harmonic analysis:
  with emphasis on rotation and motion groups}.
\newblock CRC press, 2000.

\bibitem{reed2012methods}
Michael Reed.
\newblock {\em Methods of modern mathematical physics: Functional analysis}.
\newblock Elsevier, 2012.

\bibitem{poulenard2021functional}
Adrien Poulenard and Leonidas~J Guibas.
\newblock A functional approach to rotation equivariant non-linearities for
  tensor field networks.
\newblock In {\em Proceedings of the IEEE/CVF Conference on Computer Vision and
  Pattern Recognition}, pages 13174--13183, 2021.

\bibitem{lakshminarayanan2011zernike}
Vasudevan Lakshminarayanan and Andre Fleck.
\newblock Zernike polynomials: a guide.
\newblock {\em Journal of Modern Optics}, 58(7):545--561, 2011.

\end{thebibliography}



\appendix

\section{Background material on $\mathrm{SO}(3)$ harmonic analysis}

In this section we recall the definitions of harmonic bases we used in this work (Spherical Harmonics, Wigner Matrices, Clebsch-Gordan coefficients) as well as some of their fundamental properties. A general reference for this section is \cite{chirikjian2000engineering}.

\label{sec:harmonics}

\subsection{Spherical harmonics}
\label{sec:spherical_harmonics}
\begin{Definition}[Real Spherical harmonics]
Spherical harmonics are homogeneous polynomial functions, over $\RR^3$, for each $\ell \in \NN^*$ there are $2\ell + 1$ degree $\ell$ spherical harmonics $Y_{-m}^{\ell}, \dots, Y_{m}^{\ell}$ defined for any $X = (x,y,z) \in \RR^3$ by:
\[
\begin{aligned}
&
Y_{\ell,-m}(x,y,z) := \sqrt{\frac{2\ell+1}{2\pi}} \overline{\Pi}^{\ell}_m(x, y, z)B_m(x,y), \ \ m > 0
\\
&
Y_{\ell,0}(x,y,z):= \sqrt{\frac{2\ell+1}{4\pi}} \overline{\Pi}^{\ell}_0(x, y, z)
\\
&
Y_{\ell,m}(x,y,z) := \sqrt{\frac{2\ell+1}{2\pi}} \overline{\Pi}^{\ell}_m(x, y, z)A_m(x,y), \ \ \ m > 0
\end{aligned}
\]
where:
\[
\begin{aligned}
&
A_m(x,y) := \sum_{p=0}^m x^m y^{m-p} \cos((m-p)\frac{\pi}{2}),
\ \ \ 
B_m(x,y) := \sum_{p=0}^m x^m y^{m-p} \sin((m-p)\frac{\pi}{2})
\\
&
\overline{\Pi}^{\ell}_m(x, y, z) := \sqrt{\frac{(\ell-m)!}{(\ell+m)!}} \sum_{k=0}^{\lfloor (\ell - m)/2 \rfloor} (-1)^k 2^{\ell} \begin{pmatrix} \ell \\ k \end{pmatrix}\begin{pmatrix} 2\ell - 2k \\ \ell \end{pmatrix} \frac{(\ell - 2k)!}{(\ell - 2k - m)!} \Vert X \Vert^{2k}_2 z^{\ell - 2k - m}
\end{aligned}
\]
\end{Definition}

\begin{Theorem}[Hilbert basis of $L^2(\mathcal{S}_2)$]
The restriction of spherical harmonics to the unit sphere $\mathcal{S}_2 \subset \RR^3$ form a Hilbert basis of the space $L^2(\mathrm{S}_2)$ of square integrable functions on $\mathcal{S}_2$ for the surface measure of $\mathcal{S}_2$ whose density in the spherical coordinates $(\theta, \varphi) \in [0, 2\pi[ \times [0, \pi[$ is $\sin(\theta)$, that is, for any $\ell,m,\ell',m'$ we have:
\[
\int_{\varphi=0}^{2\pi} \int_{\theta=0}^{\pi} (Y_{\ell m}Y_{\ell' m'})(\cos(\varphi)\sin(\theta), \sin(\varphi)\sin(\theta), \cos(\theta)) \sin(\theta) \mathrm{d}\theta \mathrm{d}\varphi = \delta_{\ell,\ell'}\delta_{m,m'}.
\]
\end{Theorem}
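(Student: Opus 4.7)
The plan is to establish the two ingredients implicit in ``Hilbert basis'': (i) the explicit $L^2$-orthonormality integral that is displayed, and (ii) density of $\mathrm{span}\{Y_{\ell m}\}$ in $L^2(\mathcal{S}_2)$. The unifying reduction is the change of variables $(x,y,z) = (\cos\varphi\sin\theta,\sin\varphi\sin\theta,\cos\theta)$, on which $\Vert X\Vert_2 = 1$, which lets me separate the $\theta$ and $\varphi$ dependences of $Y_{\ell m}$ and reduce the problem to classical one-dimensional identities.

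For the orthonormality relation, I would first establish the factorizations $A_m(\cos\varphi\sin\theta,\sin\varphi\sin\theta) = \sin^m\theta\,\cos(m\varphi)$ and $B_m(\cos\varphi\sin\theta,\sin\varphi\sin\theta) = \sin^m\theta\,\sin(m\varphi)$ by writing $(x+iy)^m = \sin^m\theta\, e^{im\varphi}$ and matching real and imaginary parts with the defining sums of $A_m$ and $B_m$. Separately, restricting $\overline{\Pi}^\ell_m$ to the sphere (i.e.\ substituting $\Vert X\Vert_2^2 = 1$) produces a polynomial in $z = \cos\theta$ of degree $\ell-m$; comparing with the Rodrigues formula for the associated Legendre polynomial shows that $\sin^m\theta\,\overline{\Pi}^\ell_m = \overline{P}_\ell^m(\cos\theta)$ up to sign conventions, where $\overline{P}_\ell^m$ denotes the normalized associated Legendre function. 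On $\mathcal{S}_2$, $Y_{\ell m}$ then splits as a product of a $\theta$-factor proportional to $\overline{P}_\ell^m(\cos\theta)$ and a $\varphi$-factor proportional to $\cos(m\varphi)$ or $\sin(m\varphi)$, so the displayed integral factors as a product of $\int_0^{2\pi}\!\cos(m\varphi)\cos(m'\varphi)\,\mathrm{d}\varphi$ (and analogous sine and cross integrals), yielding $\pi\delta_{mm'}$ for $m>0$ and $2\pi$ at $m=0$, times $\int_{-1}^{1}\overline{P}_\ell^m(u)\overline{P}_{\ell'}^m(u)\,\mathrm{d}u = \delta_{\ell\ell'}$, a classical identity provable by repeated integration by parts on the Rodrigues formula, which I would cite from \cite{chirikjian2000engineering}. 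The explicit prefactors $\sqrt{(2\ell+1)/(2\pi)}$ and $\sqrt{(2\ell+1)/(4\pi)}$ (and the $\sqrt{(\ell-m)!/(\ell+m)!}$ inside $\overline{\Pi}^\ell_m$) are precisely those needed to turn this product into the Kronecker delta claim.

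For totality, I would apply the Stone--Weierstrass theorem to the subalgebra of $C(\mathcal{S}_2)$ generated by the coordinate functions $x,y,z$: it is unital, real, and separates points, hence is $\Vert\cdot\Vert_\infty$-dense in $C(\mathcal{S}_2)$ and a fortiori $L^2$-dense. It then suffices to show that each homogeneous polynomial of degree $n$ restricted to $\mathcal{S}_2$ lies in $\mathrm{span}\{Y_{\ell m} : \ell \leqslant n\}$. Using the classical harmonic decomposition $P_n = H_n + \Vert X\Vert_2^2 P_{n-2}$, which on the sphere reads $P_n = H_n + P_{n-2}$, this reduces inductively to showing that every harmonic homogeneous polynomial $H_\ell$ of degree $\ell$ lies in $\mathrm{span}\{Y_{\ell,-\ell},\dots,Y_{\ell,\ell}\}$. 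Since the space of such $H_\ell$ has dimension $2\ell+1$ and each $Y_{\ell m}$ is itself a homogeneous polynomial of degree $\ell$, it suffices to verify that $\Delta Y_{\ell m} = 0$ and that the $2\ell+1$ functions $Y_{\ell,-\ell},\dots,Y_{\ell,\ell}$ are linearly independent, the latter being immediate from the orthonormality already proved.

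The main obstacle I expect is the bookkeeping in matching $\overline{\Pi}^\ell_m$ to the standard associated Legendre function, since the paper's formula absorbs the $\sqrt{(\ell-m)!/(\ell+m)!}$ normalization and extracts the $\sin^m\theta$ factor into $A_m,B_m$, disguising the textbook form. A clean workaround is to sidestep the identification entirely: verify directly that $Y_{\ell m}$ is a harmonic homogeneous polynomial of degree $\ell$ (via $\Delta Y_{\ell m}=0$), then invoke Green's identity on the radial Euler operator to conclude $L^2(\mathcal{S}_2)$-orthogonality between distinct degrees $\ell \neq \ell'$ ``for free'', leaving only orthogonality within a fixed $\ell$ (varying $m$) and the normalization, each of which reduces to a single closed-form integral per $(\ell,m)$.
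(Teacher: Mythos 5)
The paper itself gives no proof of this statement: it appears in the appendix as classical background, with the whole section deferred to \cite{chirikjian2000engineering}. So there is nothing to match your argument against; I can only assess it on its own merits, and as an outline it is correct and is the standard textbook route. Both halves are handled properly: you correctly identify that ``Hilbert basis'' requires totality in addition to the displayed orthonormality integral, and your treatment of each is sound. For orthonormality, separation into a $\varphi$-factor ($\cos(m\varphi)$, $\sin(m\varphi)$, giving $\pi\delta_{mm'}$ or $2\pi$) and a $\theta$-factor (normalized associated Legendre functions, giving $\delta_{\ell\ell'}$ at fixed $m$) is exactly how the constants $\sqrt{(2\ell+1)/(2\pi)}$, $\sqrt{(2\ell+1)/(4\pi)}$ and $\sqrt{(\ell-m)!/(\ell+m)!}$ are designed to work out. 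For totality, Stone--Weierstrass on the coordinate algebra, the decomposition $P_n = H_n + \Vert X\Vert_2^2 P_{n-2}$ restricted to the sphere, and the dimension count $2\ell+1$ for harmonic homogeneous polynomials of degree $\ell$ together with linear independence from the already-proved orthogonality is a complete and clean argument. Your proposed workaround via Green's identity and Euler's relation $\partial_r H_\ell = \ell H_\ell$ on the sphere is also valid and does give cross-degree orthogonality for free once harmonicity is checked. One concrete caveat: the paper's printed formulas $A_m(x,y) = \sum_{p=0}^m x^m y^{m-p}\cos((m-p)\tfrac{\pi}{2})$ and the analogous $B_m$ do \emph{not} equal $\mathrm{Re}((x+iy)^m)$ and $\mathrm{Im}((x+iy)^m)$ as written (the binomial coefficient $\binom{m}{p}$ is missing and the exponent of $x$ should be $p$, not $m$), so the ``matching real and imaginary parts'' step in your first paragraph only goes through after correcting what is evidently a typo in the paper's definition; with the corrected formula your factorization $A_m = \sin^m\theta\cos(m\varphi)$, $B_m = \sin^m\theta\sin(m\varphi)$ is right and the rest follows.
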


We have described the real spherical harmonics. Complex spherical harmonics are related to real spherical harmonics by transition matrices. The transition matrix $C^{\ell} \in \mathcal{M}_{2\ell+1}(\CC)$ from complex to real spherical degree $\ell$ spherical harmonics is given by:
\[
\begin{cases}
\displaystyle C^{\ell}_{m,m} = \frac{-i}{\sqrt{2}} & \text{if}\ m<0\\
\displaystyle C^{\ell}_{m,m} = \frac{(-1)^m}{\sqrt{2}} & \text{if}\ m>0\\
\displaystyle C^{\ell}_{m,-m} = \frac{1}{\sqrt{2}} & \text{if}\ m<0\\
\displaystyle C^{\ell}_{m,-m} = \frac{i(-1)^m}{\sqrt{2}} & \text{if}\ m>0\\
\displaystyle C^{\ell}_{0,0} = 1 & \\
\displaystyle C^{\ell}_{j,k} = 0 & \text{otherwise}
\end{cases}
\]
the transition from real to complex harmonics is given by the transpose conjugate matrix $(C^{\ell})^*$. 

\subsection{Haar measure of $\mathrm{SO}(3)$}
\label{sec:so3_haar_measure}
The Haar measure of $\mathrm{SO}(3)$ is usually described  by its density with respect to Lebesgue's measure in the $z-y-z$ Euler angle parametrization which we now describe. For any $\theta \in \RR$ $R_X(\theta), R_Y(\theta), R_Z(\theta) \in \mathrm{SO}(3)$ the rotations of angle $\theta$ around the $x, y$ and $z$ axes respectively.
\begin{Theorem}[Euler angle parametrization of $\mathrm{SO}(3)$]
The Euler angles map 
\[
\begin{array}{rcl}
R_{ZYZ} : [0, 2\pi] \times [0, \pi] \times [0, 2\pi] &\longrightarrow \mathrm{SO}(3) \\
     (\alpha, \beta, \gamma) &\longmapsto R_{Z}(\alpha)R_{Y}(\beta)R_{Z}(\gamma) 
\end{array}
\]
is a surjection whose restriction to $[0, 2\pi[ \times [0, \pi[ \times [0, 2\pi[$ is injective and its image is a dense open subset of $\mathrm{SO}(3)$.
\end{Theorem}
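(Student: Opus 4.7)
The plan is to reduce everything to the geometry of the unit sphere: the image of $e_3=(0,0,1)$ under a rotation $R$ records the first two Euler angles $(\alpha,\beta)$ via spherical coordinates, while the residual freedom is a rotation about $e_3$, which $R_Z(\gamma)$ is exactly designed to supply. I would treat the three assertions (surjectivity, injectivity, density/openness) in that order.

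For surjectivity, take any $R\in\mathrm{SO}(3)$ and set $v := R\,e_3 \in \mathcal{S}_2$. Pick $\alpha\in[0,2\pi]$ and $\beta\in[0,\pi]$ with $v=(\cos\alpha\sin\beta,\sin\alpha\sin\beta,\cos\beta)$ (spherical coordinates cover $\mathcal{S}_2$). A one-line computation shows $R_Z(\alpha)R_Y(\beta)\,e_3=v$. Therefore $R_Y(\beta)^{-1}R_Z(\alpha)^{-1}R$ fixes $e_3$ and hence belongs to the stabilizer of $e_3$ in $\mathrm{SO}(3)$, which is exactly $\{R_Z(\gamma):\gamma\in[0,2\pi]\}$. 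Choosing such a $\gamma$ gives the required factorization.

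For injectivity on the semi-open box, suppose $R_Z(\alpha)R_Y(\beta)R_Z(\gamma)=R_Z(\alpha')R_Y(\beta')R_Z(\gamma')$ with both triples in $[0,2\pi[\times\,]0,\pi[\times[0,2\pi[$ — strictly speaking one must exclude $\beta=0$, since at $\beta=0$ we have $R_Y(0)=I$ and the product collapses to $R_Z(\alpha+\gamma)$, which is a genuine gimbal lock; I would flag this in the write-up. Evaluating both sides at $e_3$ and using that spherical coordinates are injective on $[0,2\pi[\times\,]0,\pi[$ forces $(\alpha,\beta)=(\alpha',\beta')$. Cancelling the common factor $R_Z(\alpha)R_Y(\beta)$ leaves $R_Z(\gamma)=R_Z(\gamma')$, which on $[0,2\pi[$ gives $\gamma=\gamma'$.

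For the image being open and dense, I would identify it explicitly as $U:=\{R\in\mathrm{SO}(3):Re_3\notin\{\pm e_3\}\}$: the surjectivity argument shows every $R\in U$ is attained, and conversely any $R$ of the form $R_Z(\alpha)R_Y(\beta)R_Z(\gamma)$ with $\beta\in\,]0,\pi[$ sends $e_3$ off the $z$-axis. Since the evaluation map $R\mapsto Re_3$ is continuous and $\{\pm e_3\}$ is closed, $\mathrm{SO}(3)\setminus U$ is closed; moreover the preimage of each of $\pm e_3$ is a copy of the stabilizer $\mathrm{SO}(2)$, so $\mathrm{SO}(3)\setminus U$ is a disjoint union of two circles, a $1$-dimensional submanifold inside the $3$-dimensional manifold $\mathrm{SO}(3)$, hence nowhere dense. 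The main obstacle is really only the bookkeeping around $\beta=0$: getting the conventions consistent so that $R_Z(\alpha)R_Y(\beta)\,e_3$ genuinely matches the spherical-coordinate formula for $v$, and being careful that injectivity on the full semi-open box fails at the gimbal-lock slice but holds once $\beta>0$. Everything else reduces to facts about spherical coordinates and the stabilizer of $e_3$.
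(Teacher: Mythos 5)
The paper itself contains no proof of this statement: it is listed as standard background on the Euler parametrization (the section's general reference is \cite{chirikjian2000engineering}), so there is nothing to compare your argument against step by step. Your proof is correct and is the standard one: it rests on the orbit--stabilizer picture of $\mathrm{SO}(3)$ acting on $\mathcal{S}_2$ via $R \mapsto Re_3$, with $R_Z(\alpha)R_Y(\beta)$ realizing the spherical coordinates of $Re_3$ and $R_Z(\gamma)$ sweeping out the stabilizer of $e_3$; the surjectivity, injectivity, and density claims all reduce to elementary facts about this fibration. Two remarks are worth recording. First, your gimbal-lock caveat is not pedantry: the injectivity claim as literally stated fails on the slice $\beta=0$ of the half-open box, since for instance $(\pi/2,0,0)$ and $(0,0,\pi/2)$ both map to $R_Z(\pi/2)$; injectivity holds only after restricting to $\beta\in\,]0,\pi[$, exactly as you note, so you have in fact caught a (minor) imprecision in the statement. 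Second, a cosmetic mismatch: the theorem's ``image'' refers to the restriction to the full half-open box, which is $\{R : Re_3 \neq -e_3\}$, whereas your set $U$ excludes both poles; both sets are open with complement a finite union of circles in the $3$-manifold $\mathrm{SO}(3)$, hence both are dense, so nothing in the conclusion is affected.
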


\begin{Theorem}[Haar Measure of $\mathrm{SO}(3)$]
The riemannian metric of $\mathrm{SO}(3)$ is the the metric induced by the inclusion $\mathrm{SO}(3) \hookrightarrow \mathrm{M}_{3,3}(\RR) \simeq \RR^9$. The Haar measure $\mu$ of $\mathrm{SO}(3)$ is the measure induced by its riemannian metric. The pull back of the volume form $\omega$ associated with the riemannian metric of $\mathrm{SO}(3)$ by the Euler angles map $R$ has the following expression in the canonical coordinate system $(\alpha, \beta, \gamma)$ of $[0, 2\pi[ \times [0, \pi[ \times [0, 2\pi[$:
\[
R_{ZYZ}^*\omega = \frac{1}{8\pi^2}\sin(\beta) d\alpha \wedge d\beta \wedge d\gamma.
\]
that is, parametrizing $\mathrm{SO}(3)$ by $[0, 2\pi[ \times [0, \pi[ \times [0, 2\pi[$ the density of $\mu$ w.r.t. the Lebesgues measure $\lambda$ is given by $\frac{d\mu}{d\lambda}(\alpha, \beta, \gamma) = \frac{1}{8\pi^2}\sin(\beta)$.
\end{Theorem}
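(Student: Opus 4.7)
My plan is to compute the pullback by $R_{ZYZ}$ of the Riemannian metric coming from the Frobenius inner product $\langle A,B\rangle = \mathrm{tr}(A^TB)$ on $M_{3,3}(\RR)$, extract its volume form, and then normalize so that the total mass equals $1$. The central simplification is that Frobenius inner product is invariant under both left and right multiplication by orthogonal matrices, so pulling back the metric at a general point $(\alpha,\beta,\gamma)$ reduces to a computation at the identity, via the Lie algebra $\mathfrak{so}(3)$.

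First I would compute the three partial derivatives of $R := R_Z(\alpha)R_Y(\beta)R_Z(\gamma)$. Using $R_Z(\alpha) = \exp(\alpha J_Z)$ and $R_Y(\beta) = \exp(\beta J_Y)$ with the standard generators
\[
J_Z = \begin{pmatrix} 0 & -1 & 0 \\ 1 & 0 & 0 \\ 0 & 0 & 0 \end{pmatrix},\qquad J_Y = \begin{pmatrix} 0 & 0 & 1 \\ 0 & 0 & 0 \\ -1 & 0 & 0 \end{pmatrix},
\]
each partial can be written in the form $\partial_i R = X_i\, R$ where $X_\alpha = J_Z$, $X_\beta = \mathrm{Ad}_{R_Z(\alpha)}(J_Y)$, and $X_\gamma = \mathrm{Ad}_{R_Z(\alpha)R_Y(\beta)}(J_Z)$. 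Then the pulled-back metric coefficients are
\[
g_{ij} = \mathrm{tr}\bigl((\partial_i R)^T \partial_j R\bigr) = \mathrm{tr}(R^T X_i^T X_j R) = \mathrm{tr}(X_i^T X_j),
\]
so the Gram matrix depends only on the $X_i$'s viewed in $\mathfrak{so}(3)$.

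Next I would exploit two facts to evaluate these traces: (i) conjugation is an isometry of $(\mathfrak{so}(3),\langle\cdot,\cdot\rangle_F)$, giving $g_{\alpha\alpha}=g_{\beta\beta}=g_{\gamma\gamma}=\mathrm{tr}(J_Z^TJ_Z)=2$; (ii) the only nontrivial cross-term comes from $g_{\alpha\gamma}$. For the latter, $R_Z(\alpha)$ commutes with $J_Z$, so I can collapse $\mathrm{Ad}_{R_Z(\alpha)}$ inside the trace and am left with computing $\mathrm{tr}(J_Z^T R_Y(\beta) J_Z R_Y(-\beta))$. Using $R_Y(\beta) J_Z R_Y(-\beta) = \sin\beta\,J_X + \cos\beta\,J_Z$ (the generator of rotations around $R_Y(\beta)e_z$), together with $\mathrm{tr}(J_Z^TJ_X)=0$, yields $g_{\alpha\gamma}=2\cos\beta$. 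A parallel computation gives $g_{\alpha\beta}=g_{\beta\gamma}=0$. Hence
\[
(g_{ij}) = \begin{pmatrix} 2 & 0 & 2\cos\beta \\ 0 & 2 & 0 \\ 2\cos\beta & 0 & 2 \end{pmatrix},
\]
whose determinant is $8\sin^2\beta$, so the Riemannian volume density on $[0,2\pi[\times[0,\pi[\times[0,2\pi[$ is $\sqrt{\det(g_{ij})}=2\sqrt{2}\,\sin\beta$.

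Finally, since $\mu$ is required to be the Haar probability measure, I would divide by the total Riemannian volume
\[
\int_0^{2\pi}\!\!\int_0^{\pi}\!\!\int_0^{2\pi} 2\sqrt 2\,\sin\beta\,d\alpha\,d\beta\,d\gamma = 16\sqrt 2\,\pi^2,
\]
which produces the claimed density $\tfrac{1}{8\pi^2}\sin\beta$ and hence the volume form $R_{ZYZ}^*\omega = \tfrac{1}{8\pi^2}\sin\beta\, d\alpha\wedge d\beta\wedge d\gamma$. The main technical obstacle is the evaluation of $g_{\alpha\gamma}$, which requires the adjoint action identity $R_Y(\beta)J_ZR_Y(-\beta)=\sin\beta\,J_X+\cos\beta\,J_Z$; everything else is formal manipulation using bi-invariance of the Frobenius inner product under $\mathrm{O}(3)\times\mathrm{O}(3)$.
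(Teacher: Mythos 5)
Your computation is correct: the reduction of the Gram matrix to Lie-algebra traces via $\partial_i R = X_i R$ and bi-invariance of the Frobenius inner product, the adjoint identity $R_Y(\beta)J_ZR_Y(-\beta)=\sin\beta\,J_X+\cos\beta\,J_Z$ giving $g_{\alpha\gamma}=2\cos\beta$, the determinant $8\sin^2\beta$, and the final normalization by the total volume $16\sqrt{2}\,\pi^2$ all check out, and the normalization step is genuinely needed since the raw Riemannian volume form is $2\sqrt{2}\sin\beta\,d\alpha\wedge d\beta\wedge d\gamma$, not $\tfrac{1}{8\pi^2}\sin\beta\,d\alpha\wedge d\beta\wedge d\gamma$. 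The paper itself supplies no proof of this statement --- it is listed as background in the appendix with pointers to Nachbin and to Chirikjian--Kyatkin --- so there is no in-paper argument to compare against; your route is the standard one, and the only point worth making explicit is that bi-invariance of the induced metric (hence of its volume measure) is what certifies that the normalized measure is indeed the Haar probability measure, which you gesture at but could state in one line.
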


\subsubsection{The Wigner matrix}

\label{sec:wigner_matrix}
In this section we recall the definition of Wigner matrices and some of their properties and relations to spherical harmonics.

\begin{Definition}[Wigner Matrix]
Let $\ell \in \NN$, The Wigner $d^{\ell}$ matrix is defined for any angle $\beta$ by:
\[
\begin{array}{lcl}
d^{\ell}_{m'm}(\beta) &=& [(\ell+m')!(\ell-m')!(\ell+m)!(\ell-m)!]^{1/2}
\sum\limits_s \left[\frac{(-1)^{m'-m+s}}{(\ell+m-s)!s!(m'-m+s)!(\ell-m'-s)!} \right.\\
&&\left. \cdot \left(\cos\frac{\beta}{2}\right)^{2\ell+m-m'-2s}\left(\sin\frac{\beta}{2}\right)^{m'-m+2s} \right].
\end{array}
\]
for $m, m' \in [|-\ell, \ell|]$ where the sum over $s$ is over such values that the factorials are non-negative. The complex Wigner $\hat{D}^{\ell}$ is defined for any angles $\alpha, \beta, \gamma$ by:
\[
\hat{D}^{\ell}(\alpha, \beta, \gamma) := e^{-im'\alpha}d^{\ell}_{m'm}(\beta)e^{-im\gamma}, \ \ m, m' \in [|-\ell, \ell|].
\]
The real Wigner matrix $D^{\ell}$ is defined by:
\[
D^{\ell} := C^{\ell} \hat{D}^{\ell} (C^{\ell})^*
\]
\end{Definition}

\begin{Theorem}[$\mathrm{SO}(3)$-representations]
\label{th:so3_representations}
The Wigner matrix $D^{\ell}$ is a rotation matrix, to a map (also denoted $D^{\ell}$) $D^{\ell}: \mathrm{SO}(3) \rightarrow \mathrm{SO}(2\ell+1)$
\[
  \xymatrix{
    [0,2\pi] \times [0, \pi] \times [0,2\pi] \ar[rr]^{R_{ZYZ}} \ar[rd]_{D^{\ell}} & & \mathrm{SO}(3) \ar[ld]^{D^{\ell}}\\
    & \mathrm{SO}(2\ell+1) & 
  }
\]
The Winger matrix $D^{\ell}: \mathrm{SO}(3) \rightarrow \mathrm{SO}(2\ell+1)$ is a unitary $\mathrm{SO}(3)$ representation, i.e. for all $R,R'\in \mathrm{SO}(3)$ we have:
\[
D^{\ell}(RR') = D^{\ell}(R)D^{\ell}(R'), \ \ D^{\ell}(I) = I.
\]
The irreducible representation of $\mathrm{SO}(3)$ are given by the Wigner matrices $(D^{\ell})_{\ell \in \NN}$.
\end{Theorem}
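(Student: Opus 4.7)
The plan is to prove the theorem by identifying the explicitly-defined Wigner matrix $D^{\ell}$ with the matrix of the natural representation of $\mathrm{SO}(3)$ on the space $V_{\ell}$ of degree-$\ell$ spherical harmonics, and then invoking the classical representation theory of $\mathrm{SO}(3)$. This intrinsic viewpoint disposes of the well-definedness issue on $\mathrm{SO}(3)$ for free, and reduces the homomorphism and orthogonality claims to trivial observations; the explicit matrix formula in the statement then merely labels this intrinsic object in a convenient coordinate system.

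First I would consider the pullback action $(R \cdot f)(x) := f(R^{-1} x)$ of $\mathrm{SO}(3)$ on $L^{2}(\mathcal{S}_{2})$. Each $Y_{\ell m}$ is the restriction of a homogeneous harmonic polynomial of degree $\ell$, so the subspace $V_{\ell} := \mathrm{span}(Y_{\ell,-\ell}, \dots, Y_{\ell,\ell})$ is invariant, since rotations preserve both homogeneity and the Laplacian. This yields a map $\rho_{\ell} : \mathrm{SO}(3) \to \mathrm{GL}(V_{\ell})$; the computation $((RS) \cdot f)(x) = f(S^{-1} R^{-1} x) = (R \cdot (S \cdot f))(x)$ shows $\rho_{\ell}(RS) = \rho_{\ell}(R) \rho_{\ell}(S)$, and since the $L^{2}$ inner product on $\mathcal{S}_{2}$ is rotation-invariant while the $Y_{\ell m}$ form an orthonormal basis of $V_{\ell}$, the matrix of $\rho_{\ell}(R)$ in this basis lies in $\mathrm{SO}(2\ell+1)$.

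Second, I would identify this matrix with the $D^{\ell}$ of the statement by evaluating it on the Euler factorization $R = R_{Z}(\alpha) R_{Y}(\beta) R_{Z}(\gamma)$. Working in the complex basis $\hat{Y}_{\ell m}$ (so that the transition matrix $C^{\ell}$ implements the conjugation to the real basis in the definition), the action of $R_{Z}(\theta)$ is diagonal with entries $e^{-im\theta}$ because $\hat{Y}_{\ell m}$ carries the factor $e^{i m \varphi}$; this recovers the outer exponentials of $\hat{D}^{\ell}$. The middle block $d^{\ell}_{m'm}(\beta)$ is the matrix of $R_{Y}(\beta)$ in the complex basis, which I would verify either by induction using the angular momentum raising and lowering operators $L_{\pm} = L_{x} \pm i L_{y}$ applied to the highest-weight vector $\hat{Y}_{\ell \ell}$, or by appealing to the standard derivation from the Jacobi polynomial representation of associated Legendre functions as in \cite{chirikjian2000engineering}. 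The resulting identity $D^{\ell}(R) = \rho_{\ell}(R)$ depends only on $R$, so it also settles the well-definedness of $D^{\ell}$ at the poles $\beta \in \{0, \pi\}$ of the Euler parametrization where the triple $(\alpha, \beta, \gamma)$ is not unique.

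For irreducibility, the plan is to use $L_{\pm} \hat{Y}_{\ell m} = c_{\ell, m}^{\pm} \hat{Y}_{\ell, m \pm 1}$ with coefficients nonzero except at the weight endpoints: any nonzero $\mathrm{SO}(3)$-invariant subspace of $V_{\ell}$ must contain some $\hat{Y}_{\ell m_{0}}$, and repeated application of $L_{\pm}$ fills out the full basis. To conclude that $\{D^{\ell}\}_{\ell \in \NN}$ exhausts the irreducibles (up to equivalence), I would invoke the Peter-Weyl theorem, noting that the matrix coefficients of the $D^{\ell}$ already span $L^{2}(\mathrm{SO}(3))$ (a fact used independently in the main body for the Wigner decomposition), so any further irreducible representation would contradict this density. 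The main obstacle is the explicit verification of the closed-form sum defining $d^{\ell}_{m'm}(\beta)$; the cleanest route sidesteps the raw combinatorial identity in favor of matching both sides against the recursion produced by $L_{\pm}$, but either way this is the only technically lengthy step and is best deferred to the standard references.
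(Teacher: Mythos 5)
The paper itself offers no proof of this statement: it is presented as background material in the appendix with a pointer to \cite{chirikjian2000engineering}, so there is no in-paper argument to compare yours against. Your outline is the standard and correct route — realize $D^{\ell}$ intrinsically as the matrix of the pullback action on the degree-$\ell$ harmonic polynomials, match it to the explicit Euler-angle formula, get irreducibility from the ladder operators, and exhaustiveness from Peter--Weyl — and it has the genuine advantage over a purely computational verification that well-definedness at the Euler-angle singularities comes for free. Three small points deserve attention if you were to write this out in full. First, rotation-invariance of the $L^{2}(\mathcal{S}_2)$ inner product only gives $\rho_{\ell}(R) \in \mathrm{O}(2\ell+1)$; to land in $\mathrm{SO}(2\ell+1)$ you should add that $\det \rho_{\ell}$ is a continuous homomorphism into $\{\pm 1\}$ and $\mathrm{SO}(3)$ is connected. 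Second, your exhaustiveness argument leans on the density of the Wigner matrix coefficients in $L^{2}(\mathrm{SO}(3))$, which is exactly the content of the paper's Theorem~\ref{th:so3_hilbert_basis}; since that density is itself usually \emph{derived} from the classification of irreducibles via Peter--Weyl, you risk circularity unless you establish density independently (e.g.\ by Stone--Weierstrass applied to the algebra generated by the matrix coefficients, or by replacing this step with the highest-weight classification of $\mathfrak{so}(3)$-representations). Third, as you acknowledge, the only genuinely laborious step — verifying that the closed-form sum defining $d^{\ell}_{m'm}(\beta)$ is indeed the matrix of $R_{Y}(\beta)$ in the chosen basis with the paper's phase conventions — is deferred to the references, which is the same position the paper takes, so nothing is lost relative to the source.
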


\begin{Theorem}[Rotation of spherical harmonics]
\label{th:rotation_spherical_harmonics}
Denoting by $Y_{\ell}: \RR^3 \rightarrow \RR^{2\ell+1}$ the vector of degree $\ell$ real spherical harmonics, for any rotation $R \in \mathrm{SO}(3)$ and $x \in \RR^3$ we have:
\[
D^{\ell}(R)Y_{\ell}(x) = Y_{\ell}(Rx).
\]
\end{Theorem}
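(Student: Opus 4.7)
My plan is to reduce to the unit sphere, first establish the identity for complex spherical harmonics using the Euler-angle factorization, and then transfer the statement to the real basis via the transition matrix $C^{\ell}$ built into the definition of $D^{\ell}$.

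\textbf{Step 1 (reduction to the sphere).} Each $Y_{\ell m}$ is a homogeneous polynomial of degree $\ell$, so $Y_{\ell}(\lambda x) = \lambda^{\ell} Y_{\ell}(x)$ for $\lambda \geqslant 0$. Hence it suffices to prove $Y_{\ell}(R x) = D^{\ell}(R) Y_{\ell}(x)$ for $x \in \mathcal{S}_2$. First I would observe that $V_\ell := \mathrm{span}\{Y_{\ell,-\ell},\dots,Y_{\ell,\ell}\} \subset L^2(\mathcal{S}_2)$ is finite-dimensional and invariant under the action $\rho(R)f := f(R^{-1}\,\cdot\,)$ of $\mathrm{SO}(3)$ (a standard fact, since rotations preserve both homogeneity and harmonicity of polynomials). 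Hence there is a matrix-valued map $M : \mathrm{SO}(3) \to \mathrm{GL}(2\ell+1)$ defined by $Y_{\ell m}(R x) = \sum_{m'} M(R)_{m m'}\, Y_{\ell m'}(x)$, and $M$ is a continuous group homomorphism.

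\textbf{Step 2 (complex case via Euler angles).} Write $\hat Y_{\ell m} = \sum_{m'} (C^{\ell})^{*}_{m m'} Y_{\ell m'}$ for the corresponding complex harmonics and let $\hat M$ be the analogue of $M$ for $\hat Y_\ell$. Using $R_{ZYZ}(\alpha,\beta,\gamma) = R_Z(\alpha)R_Y(\beta)R_Z(\gamma)$ and the homomorphism property, it suffices to compute $\hat M$ on the two one-parameter subgroups $R_Z(\theta)$ and $R_Y(\theta)$. The $z$-axis case is easy: in spherical coordinates $\hat Y_{\ell m}(\theta,\varphi) \propto P_\ell^m(\cos\theta)e^{im\varphi}$ so $\hat Y_{\ell m}(R_Z(\alpha)^{-1}x) = e^{-im\alpha}\hat Y_{\ell m}(x)$, matching the diagonal factor $e^{-im\alpha}$ (and analogously $e^{-im\gamma}$) appearing in $\hat D^{\ell}$. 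For the $y$-rotation I would identify $\hat M(R_Y(\beta))$ with the small Wigner matrix $d^\ell(\beta)$; the cleanest argument is to use the raising/lowering operators $L_{\pm}$ (coming from differentiating the $\mathrm{SO}(3)$-action) and note that $\hat Y_{\ell,\ell}$ is a highest-weight vector whose orbit under $L_{-}$ generates the space, so $V_\ell$ is isomorphic as an $\mathrm{SO}(3)$-representation to the irreducible $(2\ell+1)$-dimensional representation realized by $\hat D^\ell$ (see \cref{th:so3_representations}), with the explicit matching of bases producing exactly the combinatorial coefficients in the definition of $d^\ell_{m'm}(\beta)$. Combining the three Euler factors yields $\hat Y_\ell(R x) = \hat D^\ell(R)\hat Y_\ell(x)$ on the dense image of $R_{ZYZ}$, and continuity extends it to all of $\mathrm{SO}(3)$.

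\textbf{Step 3 (passage to the real basis).} Apply $C^\ell$ to the complex identity from Step 2:
\[
Y_\ell(R x) = C^\ell \hat Y_\ell(R x) = C^\ell \hat D^\ell(R) \hat Y_\ell(x) = C^\ell \hat D^\ell(R) (C^\ell)^{*} Y_\ell(x) = D^\ell(R) Y_\ell(x),
\]
where the last equality is the very definition $D^\ell := C^\ell \hat D^\ell (C^\ell)^{*}$. Since both sides are real-valued (the left by construction, the right because $D^\ell(R) \in \mathrm{SO}(2\ell+1)$ by \cref{th:so3_representations}), this completes the proof.

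\textbf{Main obstacle.} The routine parts are the reduction to $\mathcal{S}_2$ and the real/complex conversion. The substantive step is identifying $\hat M(R_Y(\beta))$ with $d^\ell(\beta)$: the combinatorial sum defining $d^\ell_{m'm}(\beta)$ is not obvious from the polynomial expression of the $\hat Y_{\ell m}$. The cleanest route is to avoid a direct polynomial expansion and instead invoke irreducibility plus the ladder-operator normalization; alternatively, one may appeal directly to the classical derivation in \cite{chirikjian2000engineering}, which is why the theorem is often stated as background rather than proved from scratch.
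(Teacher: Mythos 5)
The paper does not actually prove \cref{th:rotation_spherical_harmonics}: it is stated as background in the appendix with the chapter reference \cite{chirikjian2000engineering}, so there is no in-paper argument to compare yours against. Your outline --- reduce to the sphere by homogeneity, observe that $V_\ell$ is a rotation-invariant irreducible subspace so the action is given by a continuous homomorphism $M$, compute $M$ on the Euler generators, and pass from the complex to the real basis through $C^{\ell}$ --- is the standard textbook route and is sound as a plan. Steps 1 and 3 are complete as written (Step 3 needs only unitarity of $C^{\ell}$, which holds for the matrix defined in the paper).

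Two issues in Step 2 deserve attention. First, there is an $R$ versus $R^{-1}$ inconsistency: you define $M$ by $Y_{\ell m}(Rx)=\sum_{m'}M(R)_{mm'}Y_{\ell m'}(x)$, but for the $z$-rotation you compute $\hat Y_{\ell m}(R_Z(\alpha)^{-1}x)=e^{-im\alpha}\hat Y_{\ell m}(x)$ and declare a match with the factor $e^{-im\alpha}$ in $\hat D^{\ell}$. With the convention $\hat Y_{\ell m}\propto e^{im\varphi}$ these differ by an inverse, since then $\hat M(R_Z(\alpha))_{mm}=e^{+im\alpha}$; and because the entire content of the theorem in the form stated --- that $D^{\ell}(R)$ maps values at $x$ to values at $Rx$ rather than at $R^{-1}x$ --- is precisely this convention, the slip must be resolved rather than glossed (either by fixing the sign convention for the complex harmonics consistently with the paper's $C^{\ell}$, or by tracking the inverse through to the end). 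Second, irreducibility together with the torus computation only shows that $\hat M$ and $\hat D^{\ell}$ are conjugate by a diagonal intertwiner $A=\mathrm{diag}(c_m)$; concluding $A=I$, hence $\hat M(R_Y(\beta))=d^{\ell}(\beta)$ entrywise with the specific combinatorial coefficients of the paper's definition, requires verifying that the $\hat Y_{\ell m}$ carry the same ladder-operator normalization and phase (Condon--Shortley) as the weight basis realizing $\hat D^{\ell}$. You correctly identify this as the main obstacle, but the fallback you offer is to cite \cite{chirikjian2000engineering} --- which is legitimate, yet it means the proof ultimately rests on the same external reference the paper itself relies on, with the surrounding reductions being the only part actually carried out.
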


\begin{Theorem}[Hilbert basis of $L^2(\mathrm{SO}(3))$]
The Wigner matrix coefficients form a Hilbert basis of the the Hilbert space of square integrable functions on $\mathrm{SO}(3)$ for its Haar measure, more precisely: 
\begin{enumerate}
    \item For all $\ell,m,n$ and $\ell',m',n'$ such that the coefficients $D^{\ell}_{mn}$ and $D^{\ell'}_{m'n'}$ are defined we have:
\[
\int_{0}^{2\pi} \hspace{-3mm}\int_{0}^{\pi}
\hspace{-2mm}\int_{0}^{2\pi}
\hspace{-3mm}
(D_{mn}^{\ell}D_{m'n'}^{\ell'})(\alpha, \beta, \gamma) \sin(\beta)\mathrm{d}\alpha \mathrm{d}\beta \mathrm{d}\gamma 
=
\frac{8\pi^2}{2\ell+1}\delta_{\ell\ell'}\delta_{mm'}\delta_{nn'}.
\]
\item For any square integrable function $f:\mathrm{SO}(3) \rightarrow \RR$ uniquely decomposes in the Wigner basis, i.e. for all $R \in \mathrm{SO}(3)$:
\[
f(R) = \sum_{\ell \geqslant 0} \langle f^{\ell}, D^{\ell}(R)\rangle
\]
where for each $\ell \in \NN$, $f^{\ell} \in \RR^{(2\ell+1, 2\ell+1)}$ is the matrix of coefficients of $f$ associated with $D^{\ell}$.
\item Furthermore the coefficient matrices of $f$ satisfy the equivariance relation $(R.f)^{\ell} = D^{\ell}(R)f^{\ell}$ for all $\ell \in \NN, R\in \mathrm{SO}(3)$, where $(R.f)(x) := f(R^{-1}x)$. 

\end{enumerate}
\begin{proof}
We refer to \cite{chirikjian2000engineering} for 1) and 2). the proof of 3) is straightforward, by 2) and \cref{th:so3_representations} we have:
\[
\begin{aligned}
\sum_{\ell} \langle (R.f)^{\ell}, D^{\ell}(H) \rangle
=
(R.f)(H)
&
=
\sum_{\ell} \langle f^{\ell}, D^{\ell}(R^{-1}H) \rangle
\\
&
=
\sum_{\ell} \langle f^{\ell}, D^{\ell}(R)^{\top}D^{\ell}(H) \rangle
\\
&
=
\sum_{\ell} \langle D^{\ell}(R)f^{\ell}, D^{\ell}(H)\rangle
\end{aligned}
\]
by uniqueness of the Wigner decomposition 2) we have $(R.f)^{\ell} = D^{\ell}(R)$ for all $\ell$.
\end{proof}

\label{th:so3_hilbert_basis}
\end{Theorem}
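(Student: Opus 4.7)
My plan is to treat parts (1) and (2) as a direct application of the Peter--Weyl theorem specialized to $\mathrm{SO}(3)$, and to derive part (3) as an algebraic consequence of part (2) together with the fact (already recorded in \cref{th:so3_representations}) that each $D^{\ell}$ is a unitary group homomorphism.

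For parts (1) and (2), I would cite \cite{chirikjian2000engineering} (or equivalently the general Peter--Weyl theorem on compact groups). The point to emphasize is that the Wigner matrices $(D^{\ell})_{\ell \in \mathbb{N}}$ exhaust the irreducible unitary representations of $\mathrm{SO}(3)$ (\cref{th:so3_representations}), so Peter--Weyl gives orthogonality of their matrix coefficients with normalization $8\pi^2/(2\ell+1)$ (the total Haar mass $8\pi^2$ of $\mathrm{SO}(3)$ in the Euler-angle parametrization divided by the dimension $2\ell+1$ of the representation), establishing (1); and it gives completeness of the system in $L^2(\mathrm{SO}(3))$, which is exactly the decomposition stated in (2). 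No further computation is needed beyond invoking these standard results.

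For part (3), my plan is to apply (2) to the translated function $R.f$ and use the homomorphism property of $D^{\ell}$. Concretely, I would write
\[
\sum_{\ell} \langle (R.f)^{\ell}, D^{\ell}(H)\rangle = (R.f)(H) = f(R^{-1}H) = \sum_{\ell} \langle f^{\ell}, D^{\ell}(R^{-1}H)\rangle,
\]
then use $D^{\ell}(R^{-1}H) = D^{\ell}(R)^{\top} D^{\ell}(H)$ (since $D^{\ell}$ is a unitary representation) and the trace-inner-product identity $\langle A, B^{\top}C\rangle = \langle BA, C\rangle$ to rewrite the right-hand side as $\sum_{\ell} \langle D^{\ell}(R) f^{\ell}, D^{\ell}(H)\rangle$. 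The conclusion $(R.f)^{\ell} = D^{\ell}(R) f^{\ell}$ then follows from the uniqueness of the Wigner decomposition provided by (2).

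The only genuinely nontrivial step is (1)+(2), which rests on Peter--Weyl; I would not reprove it. Part (3) is a routine algebraic manipulation, and the main thing to watch is the convention on the inner product $\langle \cdot, \cdot \rangle$ on matrices (Frobenius) so that moving $D^{\ell}(R)^{\top}$ across the pairing yields $D^{\ell}(R)$ acting on the left of $f^{\ell}$ with the right sign and transpose placement.
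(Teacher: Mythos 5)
Your proposal is correct and follows essentially the same route as the paper: parts (1) and (2) are delegated to the standard reference (Peter--Weyl specialized to $\mathrm{SO}(3)$), and part (3) is obtained by expanding $(R.f)(H) = f(R^{-1}H)$ in the Wigner basis, using the homomorphism and unitarity of $D^{\ell}$ to write $D^{\ell}(R^{-1}H) = D^{\ell}(R)^{\top}D^{\ell}(H)$, moving the transpose across the Frobenius pairing, and invoking uniqueness of the decomposition. Your explicit attention to the inner-product convention is a sensible precaution but does not change the argument.
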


\subsubsection{Clebsch Gordan Coeffcients}

\label{sec:clebsch_gordan_coefficients}

\begin{Definition}[Clebsch-Gordan coefficients]
\label{def:clebsch_gordan_coefficients}
The Clebsch-Gordan coefficients are defined for any $\ell,\ell' \in \NN$, $m \in \llbracket -\ell, \ell \rrbracket, m' \in \llbracket -\ell, \ell \rrbracket$ and $L \in \llbracket |\ell-\ell'|, \ell+\ell' \rrbracket$ such that $\ell \geqslant \ell'$ and $M \in \NN$ by:
\[
\begin{aligned}
&
\langle \ell,\ell';m,m'|\ell,\ell';L,M\rangle 
\\
:= \ 
&\delta_{M,m+m'} \sqrt{\frac{(2L+1)(L+\ell-\ell')!(L-\ell+\ell')!(\ell+\ell'-L)!}{(\ell+\ell'+L+1)!}}\ \times \\
&\sqrt{(L+M)!(L-M)!(\ell-m)!(\ell+m)!(\ell'-m')!(\ell'+m')!}\ \times \\
&\sum_k \frac{(-1)^k}{k!(\ell+\ell'-L-k)!(\ell-m-k)!(\ell'+m'-k)!(L-\ell'+m+k)!(L-\ell-m'+k)!}.
\end{aligned}
\]
The definition extends to $M < 0$ and $\ell < \ell'$ using the symmetry relations:
\[
\begin{aligned}
&
\langle \ell,\ell';m,m'|\ell,\ell';L,M\rangle := (-1)^{L-\ell-\ell'}\langle \ell,\ell';-m,-m'|\ell,\ell';L,-M\rangle
\\
&
\langle \ell,\ell';m,m'|\ell,\ell';L,M\rangle := (-1)^{L-\ell-\ell'} \langle \ell',\ell;m',m|\ell',\ell;L,M\rangle
\end{aligned}
\]
For convenience we define the sparse tensors $Q^{L, (\ell,\ell')} \in \RR^{(2L+1)\times (2\ell+1) \times (2\ell'+1)}$ and its transpose \newline $Q^{(\ell,\ell'),L} \in \RR^{(2\ell+1) \times (2\ell'+1) \times (2L+1)}$ by:
\[
Q^{(\ell,\ell'), L} := (C^{\ell} \otimes C^{\ell'}) \hat{Q}^{(\ell,\ell'), L} (C^{L})^*, \ \
Q^{L, (\ell,\ell')} :=  C^{L} \hat{Q}^{L, (\ell,\ell')}  ((C^{\ell})^* \otimes (C^{\ell'})^*)
\]
Where the tensors $\hat{Q}^{L, (\ell,\ell')}$ and $\hat{Q}^{L, (\ell,\ell'), L}$ are sparse tensors whose non zero coefficients are given by the Clebsch-Gordan coefficients:
\[
\hat{Q}^{(\ell,\ell'), L}_{m, m', m+m'} := \hat{Q}^{L, (\ell,\ell')}_{m+m', m, m'} := \langle \ell,\ell'; m,m' | \ell,\ell'; L, \left(m + m'\right) \rangle.
\]

\end{Definition}

\begin{Theorem}[Orthogonality relation of Clebsch-Gordan coefficients]
\label{th:clebsch_gordan_orthogonality}
For any $\ell, \ell' \in \NN$ and $L \in \llbracket |\ell - \ell'|, \ell + \ell' \rrbracket$ we have:
\[
\sum_{L = |\ell-\ell'|}^{\ell+\ell'} \Tilde{Q}^{(\ell, \ell'), L} \Tilde{Q}^{L, (\ell, \ell')} = I_{(2\ell+1)(2\ell'+1)}, \ \ \Tilde{Q}^{L, (\ell, \ell')} \Tilde{Q}^{(\ell, \ell'),L'} = \delta_{L,L'} I_{2L+1}
\]
where $\Tilde{Q}$ is either the complex ($\hat{Q}$) or real ($Q$) Clebsch Gordan tensor. 
\end{Theorem}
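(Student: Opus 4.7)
The plan is to identify the family $\{\hat{Q}^{L,(\ell,\ell')}\}_{L=|\ell-\ell'|}^{\ell+\ell'}$ with the block rows of a single unitary change-of-basis matrix, from which both orthogonality relations fall out as the two sides of the identity $\hat{Q}^{*}\hat{Q} = I = \hat{Q}\hat{Q}^{*}$. First I would flatten each $\hat{Q}^{L,(\ell,\ell')}$ into a $(2L+1) \times (2\ell+1)(2\ell'+1)$ matrix and stack these vertically over admissible $L$ to obtain a square matrix $\hat{Q}$ of size $(2\ell+1)(2\ell'+1)$, using the standard dimension identity $\sum_{L=|\ell-\ell'|}^{\ell+\ell'}(2L+1) = (2\ell+1)(2\ell'+1)$. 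In this picture, the first claimed relation is exactly $\hat{Q}^{*}\hat{Q} = I_{(2\ell+1)(2\ell'+1)}$, while the second relation, read block-by-block with blocks indexed by $(L,L')$, is exactly $\hat{Q}\hat{Q}^{*} = I$.

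Next I would establish the complex case via representation theory. The intertwining identity $\hat{D}^L \hat{Q}^{L,(\ell,\ell')} = \hat{Q}^{L,(\ell,\ell')}(\hat{D}^\ell \otimes \hat{D}^{\ell'})$ (the complex analogue of \cref{th:clebsh_gordan_decomposition}, which carries over because $C^\ell$ is unitary), together with Schur's lemma applied to the irreducible $\mathrm{SO}(3)$-representations $\hat{D}^L$, forces $\hat{Q}^{L,(\ell,\ell')}\hat{Q}^{(\ell,\ell'),L'}$ to be a scalar multiple of $I_{2L+1}$ when $L = L'$ and zero otherwise. Once this block-diagonal structure with the correct scalar $1$ is proved, squareness of $\hat{Q}$ immediately promotes $\hat{Q}\hat{Q}^{*}=I$ to $\hat{Q}^{*}\hat{Q}=I$, giving the first identity for free.

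Finally I would transfer to the real tensors. Substituting the definitions $Q^{(\ell,\ell'),L} = (C^\ell \otimes C^{\ell'})\hat{Q}^{(\ell,\ell'),L}(C^L)^{*}$ and $Q^{L,(\ell,\ell')} = C^L \hat{Q}^{L,(\ell,\ell')}((C^\ell)^{*}\otimes (C^{\ell'})^{*})$ into both identities and using that each $C^\ell$ is unitary (checked directly from the closed-form entries), the inner factors telescope: for the second identity one obtains $C^L \hat{Q}^{L,(\ell,\ell')}\hat{Q}^{(\ell,\ell'),L'}(C^{L'})^{*} = \delta_{L,L'} C^L (C^L)^{*} = \delta_{L,L'} I_{2L+1}$, and analogously for the first identity. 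In particular, unitarity in the complex case transfers to orthogonality for the real tensors without any further computation.

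The main obstacle will be pinning down the scalar in Schur's lemma, namely checking that the explicit normalization in \cref{def:clebsch_gordan_coefficients} produces a unitary $\hat{Q}$ and not merely a nonzero multiple. Abstractly the representation-theoretic argument only determines $\hat{Q}^{L,(\ell,\ell')}\hat{Q}^{(\ell,\ell'),L}$ up to a positive constant; fixing this constant to $1$ requires either a direct combinatorial verification of a single entry from the sum-over-$k$ formula, or invoking the standard Condon--Shortley convention under which the Clebsch--Gordan coefficients in \cite{chirikjian2000engineering} are by construction those of a unitary change-of-basis. Once the normalization is established, everything else is bookkeeping on the block decomposition.
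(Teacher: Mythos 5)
The paper does not actually prove this statement: it appears in the appendix section of background material on $\mathrm{SO}(3)$ harmonic analysis, where \cite{chirikjian2000engineering} is given as the general reference, and \cref{th:clebsch_gordan_orthogonality} is subsequently \emph{used} (in the proof of \cref{lemma:clebsch_gordan_orthogonality}) rather than derived. So there is no in-paper argument to compare against; what you have written is a genuine proof sketch of a result the paper takes as known, and it is the standard one. Your bookkeeping is correct: the dimension count $\sum_{L=|\ell-\ell'|}^{\ell+\ell'}(2L+1)=(\ell+\ell'+1)^2-(\ell-\ell')^2=(2\ell+1)(2\ell'+1)$ makes the stacked matrix square, the two identities are exactly $\hat{Q}^*\hat{Q}=I$ and $\hat{Q}\hat{Q}^*=I$ (using that $\hat{Q}^{(\ell,\ell'),L}$ is the conjugate transpose of $\hat{Q}^{L,(\ell,\ell')}$ because the coefficients in \cref{def:clebsch_gordan_coefficients} are real), and the passage to the real tensors via unitarity of $C^{\ell}$ telescopes exactly as you say, mirroring how the paper itself transfers \cref{th:clebsh_gordan_decomposition} from the complex to the real case. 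Two points deserve to be made explicit. First, Schur's lemma applies to $\hat{Q}^{L,(\ell,\ell')}\hat{Q}^{(\ell,\ell'),L'}$ only once you also have the reverse intertwining relation $(\hat{D}^{\ell}\otimes\hat{D}^{\ell'})\,\hat{Q}^{(\ell,\ell'),L'}=\hat{Q}^{(\ell,\ell'),L'}\hat{D}^{L'}$; this follows by taking adjoints in the complex case of \cref{th:clebsh_gordan_decomposition} and using unitarity of the Wigner matrices together with realness of the coefficients, but it is a step, not a triviality. Second, you have correctly isolated the one real gap: Schur only gives $\hat{Q}^{L,(\ell,\ell')}\hat{Q}^{(\ell,\ell'),L}=c_L I_{2L+1}$ with $c_L\geqslant 0$, and pinning $c_L=1$ (equivalently, that the explicit normalization in \cref{def:clebsch_gordan_coefficients} is the Condon--Shortley one, for which the change of basis is unitary) requires either a computation with the closed-form sum or an appeal to \cite{chirikjian2000engineering} --- which is in effect what the paper itself does by citing the whole theorem. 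With that normalization granted, your argument is complete.
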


\begin{Theorem}[Symmetries of the complex Clebsch Gordan coefficients]
\label{th:clebsch_gordan_symmetries}
The complex Clebsch Gordan coefficients satisfy the following symmetry relations:
\[
\begin{aligned}
\hat{Q}^{(\ell,\ell'),L}_{m,m',M}
&
=
(-1)^{\ell+\ell'-L}
\hat{Q}^{(\ell,\ell'),L}_{-m,-m',-M}
\\
&
=
(-1)^{\ell+\ell'-L}
\hat{Q}^{(\ell',\ell),L}_{m',m,M}
\\
&
=
(-1)^{\ell-m}\sqrt{\frac{2L+1}{2\ell'+1}}\hat{Q}^{(\ell,L),\ell'}_{m,-M,-m'}
\\
&
=
(-1)^{\ell'+m'}\sqrt{\frac{2L+1}{2\ell+1}}\hat{Q}^{(L,\ell'),\ell}_{-M, m', -m}
\\
&
=
(-1)^{\ell-m}\sqrt{\frac{2L+1}{2\ell'+1}}\hat{Q}^{(L,\ell),\ell'}_{M, -m, m'}
\\
&
=
(-1)^{\ell'+m'}\sqrt{\frac{2L+1}{2\ell+1}}\hat{Q}^{(\ell',L),\ell}_{-m', M, m}.
\end{aligned}
\]
\end{Theorem}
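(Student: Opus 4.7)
The plan is to prove the six identities in two groups. The first two (global sign reversal of $(m,m',M)$, and transposition of $(\ell,m)\leftrightarrow(\ell',m')$) follow directly from the explicit formula in Definition \ref{def:clebsch_gordan_coefficients} via a single reparametrization of the summation index. The remaining four are ``column-permutation'' identities that trade the role of $L$ with that of $\ell$ or $\ell'$; these are cleanest to derive by introducing the Wigner 3j symbol as an intermediate object, whose full $S_3$ column symmetry subsumes exactly these identities.

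For the first identity, I would substitute $(m,m',M)\mapsto(-m,-m',-M)$ directly in the definition. The selection rule $\delta_{M,m+m'}$ is preserved and the radical prefactor $\sqrt{(L\pm M)!(\ell\pm m)!(\ell'\pm m')!}$ is manifestly invariant, so the only nontrivial piece is the $k$-sum. The change of variable $k \mapsto \ell+\ell'-L-k$ permutes the six denominator factorials onto themselves and converts $(-1)^k$ into $(-1)^{\ell+\ell'-L}(-1)^k$, yielding the claimed phase. The same reparametrization, applied to the formula with $(\ell,m)$ and $(\ell',m')$ exchanged, gives the second identity: the triangle factorials $(L+\ell-\ell')!$ and $(L-\ell+\ell')!$ swap (so the prefactor is unchanged), and the shifted sum produces the same $(-1)^{\ell+\ell'-L}$.

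For the remaining four identities, I would introduce the Wigner 3j symbol
\[
\begin{pmatrix} \ell & \ell' & L \\ m & m' & -M \end{pmatrix} := \frac{(-1)^{\ell-\ell'+M}}{\sqrt{2L+1}}\,\hat{Q}^{(\ell,\ell'),L}_{m,m',M},
\]
and establish its three defining symmetries: invariance under cyclic permutation of columns, a phase $(-1)^{\ell+\ell'+L}$ under any transposition of two columns, and the same phase under a global sign flip of the bottom row. These 3j symmetries themselves reduce to the same type of summation-index reparametrization used in the first two cases (combined with the identity already proved for the bottom-row sign flip). Each of the four remaining identities in the theorem is then obtained by applying one transposition of columns (swapping $L$ with $\ell'$, or $L$ with $\ell$, with or without an additional sign flip of the bottom row) and translating back through the CG/3j conversion; the radicals $\sqrt{(2L+1)/(2\ell'+1)}$ and $\sqrt{(2L+1)/(2\ell+1)}$ in the stated identities arise naturally from the $\sqrt{2L+1}$ normalization in the definition of the 3j symbol.

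The main obstacle is phase bookkeeping: tracking the factors $(-1)^{\ell-m}$, $(-1)^{\ell'+m'}$, $(-1)^{\ell-\ell'+M}$ and $(-1)^{\ell+\ell'\pm L}$ that arise from each CG/3j conversion, and consistently using the integrality of $\ell,\ell',L$ (so $(-1)^{2\ell}=1$ and $(-1)^{\ell+\ell'-L}=(-1)^{\ell+\ell'+L}$) to reconcile the various forms. Beyond this algebraic bookkeeping, once the 3j transposition symmetry is in hand the four column-permutation identities reduce to routine rearrangement, so the only genuine computational effort beyond the first two symmetries is the direct proof of the 3j transposition symmetry itself.
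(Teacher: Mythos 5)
Your proposal is correct, but it is worth noting that the paper itself offers no proof of \cref{th:clebsch_gordan_symmetries}: the theorem is stated as standard background on angular-momentum coupling, with \cite{chirikjian2000engineering} given as the general reference for the whole of \cref{sec:clebsch_gordan_coefficients}, and the only places the symmetries are actually used are in the proof of Lemma B.1 (Clebsch--Gordan orthogonality (bis)) inside the proof of \cref{th:tfn_vs_se3_conv_2}. Your outline is the classical textbook derivation and it checks out: the substitution $k\mapsto \ell+\ell'-L-k$ does permute the six denominator factorials of the Racah-type sum in \cref{def:clebsch_gordan_coefficients} onto themselves while producing the phase $(-1)^{\ell+\ell'-L}$, which gives the first two identities (and, as a bonus, verifies that the paper's extension of the formula to $M<0$ and $\ell<\ell'$ via those same two relations is self-consistent); your $3j$ conversion carries the correct phase $(-1)^{\ell-\ell'+M}/\sqrt{2L+1}$, and I verified that composing a cyclic permutation, a transposition, and a bottom-row sign flip reproduces, e.g., the factor $(-1)^{\ell'+m'}\sqrt{(2L+1)/(2\ell+1)}$ in the last identity once $M=m+m'$ and integrality of $\ell,\ell',L$ are used. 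The one place where your write-up understates the work is the claim that the $3j$ transposition symmetry reduces to "the same type" of reparametrization: the column transpositions that exchange $L$ with $\ell$ or $\ell'$ each require their own substitution of the summation index (not $k\mapsto\ell+\ell'-L-k$), and verifying that the six factorials permute must be redone for at least one independent transposition; you flag this as the remaining computational effort, so it is an acknowledged gap in detail rather than an error. In short: the approach is sound and standard, whereas the paper simply imports the result from the literature.
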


\begin{Theorem}[Clebsch Gordan decomposition]
\label{th:clebsh_gordan_decomposition}
For any $L, \ell, \ell' \in \NN$ with $|\ell - \ell'| \leqslant L \leqslant \ell + \ell'$ we have the following decomposition of tensor products of Wigner matrices:
\[
\Tilde{D}^L \Tilde{Q}^{L,(\ell,\ell')} = \Tilde{Q}^{L,(\ell,\ell')} \Tilde{D}^{\ell} \otimes \Tilde{D}^{\ell'}
\]
where $\Tilde{D}$ is the complex (resp real) Wigner matrix and $\Tilde{Q}^{L,(\ell,\ell')}$ is the corresponding complex (resp. real) Clebsch-Gordan tensor.
\end{Theorem}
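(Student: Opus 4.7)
The plan is to prove the complex case first and then derive the real case from it by a straightforward Kronecker-product manipulation. The complex Clebsch–Gordan decomposition is a classical result from the representation theory of $\mathrm{SO}(3)$ (equivalently $\mathrm{SU}(2)$): the unitary irreducible representations are exactly the Wigner matrices $\hat{D}^{\ell}$, and the tensor product representation $\hat{D}^{\ell} \otimes \hat{D}^{\ell'}$ decomposes into a direct sum $\bigoplus_{L=|\ell-\ell'|}^{\ell+\ell'} \hat{D}^{L}$. The Clebsch–Gordan tensor $\hat{Q}^{L,(\ell,\ell')}$ defined in \cref{def:clebsch_gordan_coefficients} is precisely the block of the corresponding change of basis matrix that projects the reducible representation onto its $L$-component. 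Hence the intertwining identity $\hat{D}^{L}\hat{Q}^{L,(\ell,\ell')} = \hat{Q}^{L,(\ell,\ell')}\hat{D}^{\ell}\otimes\hat{D}^{\ell'}$ holds by construction.

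For a self-contained derivation of the complex case from the explicit formula in \cref{def:clebsch_gordan_coefficients}, the standard argument goes through the Lie-algebra action of $\mathfrak{so}(3)$ generated by the angular momentum operators $J_{z},J_{\pm}$. One checks that (i) the total $J_{z}$-eigenvalue $M=m+m'$ is conserved by both sides, which corresponds to the Kronecker delta $\delta_{M,m+m'}$ in the formula, and (ii) that the remaining coefficients solve the recursion obtained from applying the ladder operator $J_{-}\otimes I + I\otimes J_{-}$ repeatedly to a highest-weight vector. Since $\mathrm{SO}(3)$ is connected, the identity at the Lie algebra level exponentiates to the group identity for all $R\in\mathrm{SO}(3)$. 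This is the main computational step and the place where one truly uses the explicit combinatorial formula; I would either cite \cite{chirikjian2000engineering} for it or sketch the ladder argument.

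Given the complex case, the real case follows from a direct computation using the definitions $D^{\ell}=C^{\ell}\hat{D}^{\ell}(C^{\ell})^{*}$ and $Q^{L,(\ell,\ell')} = C^{L}\hat{Q}^{L,(\ell,\ell')}((C^{\ell})^{*}\otimes(C^{\ell'})^{*})$, together with the mixed-product rule $(A\otimes B)(C\otimes D)=AC\otimes BD$ and unitarity $(C^{\ell})^{*}C^{\ell}=I$:
\[
\begin{aligned}
D^{L}Q^{L,(\ell,\ell')} &= C^{L}\hat{D}^{L}(C^{L})^{*}\,C^{L}\hat{Q}^{L,(\ell,\ell')}\bigl((C^{\ell})^{*}\otimes(C^{\ell'})^{*}\bigr) \\
&= C^{L}\,\hat{D}^{L}\hat{Q}^{L,(\ell,\ell')}\,\bigl((C^{\ell})^{*}\otimes(C^{\ell'})^{*}\bigr) \\
&= C^{L}\,\hat{Q}^{L,(\ell,\ell')}\bigl(\hat{D}^{\ell}\otimes\hat{D}^{\ell'}\bigr)\bigl((C^{\ell})^{*}\otimes(C^{\ell'})^{*}\bigr) \\
&= C^{L}\hat{Q}^{L,(\ell,\ell')}\bigl((C^{\ell})^{*}\otimes(C^{\ell'})^{*}\bigr)\bigl(C^{\ell}\hat{D}^{\ell}(C^{\ell})^{*}\otimes C^{\ell'}\hat{D}^{\ell'}(C^{\ell'})^{*}\bigr) \\
&= Q^{L,(\ell,\ell')}\,D^{\ell}\otimes D^{\ell'},
\end{aligned}
\]
where the third equality uses the complex version of the theorem. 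This reduces the real statement to the complex one in a few lines.

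The main obstacle is really only in the complex case, and is essentially a calculation that is classical but combinatorially heavy; it can be bypassed by invoking the abstract representation-theoretic fact that $\hat{D}^{\ell}\otimes\hat{D}^{\ell'}\cong\bigoplus_{L}\hat{D}^{L}$ with the stated multiplicity one, together with Schur's lemma (which forces the intertwiner into each irreducible $L$-block to be unique up to scale), and then fixing the normalisation to match the explicit formula of \cref{def:clebsch_gordan_coefficients}. The transfer to the real setting, by contrast, is a purely formal Kronecker computation and presents no difficulty.
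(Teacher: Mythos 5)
Your proposal is correct and follows essentially the same route as the paper: the complex case is taken as the classical result from \cite{chirikjian2000engineering}, and the real case is deduced by conjugating with the complex-to-real transition matrices $C^{\ell}$ using unitarity and the Kronecker mixed-product rule. The only cosmetic difference is that you compute from $D^{L}Q^{L,(\ell,\ell')}$ toward $Q^{L,(\ell,\ell')}D^{\ell}\otimes D^{\ell'}$ while the paper runs the same chain of equalities in the opposite direction, and you additionally sketch how the complex case itself could be established (ladder operators, Schur's lemma), which the paper does not attempt.
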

\begin{proof}
The result is often stated for complex Wigner matrices (see \cite{chirikjian2000engineering} for mode details), we show that the real case easily follows assuming the complex case:
\[
\begin{aligned}
Q^{L,(\ell,\ell')} D^{\ell} \otimes D^{\ell'}
&
=
\langle(C^{L} \hat{Q}^{L,(\ell,\ell')} (C^{\ell})^* \otimes (C^{\ell'})^*\rangle)
\langle
C^{\ell} \otimes C^{\ell'}
\hat{D}^{\ell} \otimes \hat{D}^{\ell'} (C^{\ell})^* \otimes (C^{\ell'})^*\rangle)
\\
&
=
C^{L} \hat{Q}^{L,(\ell,\ell')} \hat{D}^{\ell} \otimes \hat{D}^{\ell'} (C^{\ell})^* \otimes (C^{\ell'})^*
\\
&
=
C^{L} \hat{D}^L \hat{Q}^{L,(\ell,\ell')}  (C^{\ell})^* \otimes (C^{\ell'})^*
\\
&
=
\langle(C^{L} \hat{D}^L (C^{L})^{*}\rangle) \langle(C^{L} \hat{Q}^{L,(\ell,\ell')}  (C^{\ell})^* \otimes (C^{\ell'})^*\rangle)
\\
&
=
D^L Q^{L,(\ell,\ell')}
\end{aligned}
\]
\end{proof}

\subsubsection{Steerable bases}

\label{sec:steerable_bases}

\begin{Definition}[3D steerable kernel basis]
A $3D$ steerable kernel basis $\kappa$ is a finite dimensional basis of functions $\kappa_{rm}^{\ell}: \RR^3 \rightarrow \RR$ of the form:
\[
\kappa_{rm}^{\ell}(x) := \varphi_{r}^{\ell}(\Vert x \Vert_2^2) Y^{\ell}_m(x), \ \ \forall x \in \RR^3.
\]
where $\varphi_{r}^{\ell}: \RR \rightarrow \RR$ is the radial component, and the angular component is given by spherical harmonics. 
\end{Definition}

A typical choice for the radial component $\varphi_r$ (adopted by \cite{thomas2018tensor, weiler20183d}) is normalized Gaussian shell functions of the form:
\[
\varphi_r(t) := \frac{1}{t^{\ell / 2}} \exp \left( \frac{(\sqrt{t} - \rho_r^{\ell})^2}{2(\sigma_r^{\ell})^2}\right)
\]
where $\rho_r^{\ell}$ is the radius of the shall and the standard deviation parameter $\sigma_r^{\ell}$ controls its width. The $\frac{1}{t^{\ell / 2}}$ factor is normalizing the spherical harmonic component, as spherical harmonics are homogeneous polynomials we can normalize the input to the spherical harmonics instead. In this work we consider steerable kernels given by the Zernike basis \cite{lakshminarayanan2011zernike} (definition below) which also forms a Hilbert basis of the Hilbert space $L^{2}(\mathcal{B}_3)$ of square integrable functions on the unit ball $\mathcal{B}_3 \subset \RR^3$.

\begin{Definition}[Zernike polynomials (radial)]
For all $\ell, n \in \NN$ such that $n - \ell \geqslant 0$ and $2 | (n - \ell)$ the (radial) Zernike polynomial $R_{n}^{\ell}$ is a degree $k = (n - \ell) / 2$ polynomial function on the real line defined for all $x \in \RR$ by:
\[
R_{n}^{\ell}(x) = \sum_{v=0}^k R_{n,v}^{\ell} x^v
\]
whose coefficient $R_{n,v}^{\ell}$ is given by:
\[
R_{n,v}^{\ell} :=
\frac{(-1)^k}{2^{2k}} \sqrt{\frac{2\ell + 4k + 3}{3}} \begin{pmatrix}
2k \\
k
\end{pmatrix}
(-1)^v
\frac{
\begin{pmatrix}
k \\
v
\end{pmatrix}
\begin{pmatrix}
2(k+\ell+v)+1 \\
2k
\end{pmatrix}}{\begin{pmatrix}
k+\ell+v \\
k
\end{pmatrix}}.
\]
\end{Definition}

\begin{Theorem}[Orthogonality of (radial) Zernike polynomials]
For all $\ell \in \NN$ the type $\ell$ (radial) Zernike polynomials form an orthogonal family for the push forward of Lebesgue's volume measure on the unit ball to the unit interval $[0, 1[$ by the norm function. More precisely, for all $\ell \in \NN$ and all $n,n'\in \NN$ such that $n,n' \geqslant \ell$ and $2|(n-\ell)$, $2|(n'-\ell)$ we have:
\[
\int_{r=0}^1 R_n^{\ell}(r)R_{n'}^{\ell}(r) r^2 \mathrm{d}r = 
\delta_{n,n'}.
\]
\end{Theorem}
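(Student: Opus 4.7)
The plan is to reduce the claim to the classical orthogonality of Jacobi polynomials $P_{k}^{(\alpha,\beta)}$ on $[-1,1]$ with weight $(1-t)^{\alpha}(1+t)^{\beta}$. Writing $k = (n-\ell)/2$ and $k' = (n'-\ell)/2$, the polynomial $R_{n}^{\ell}$ is of degree $k$ in its argument, so $R_{n}^{\ell}(r)R_{n'}^{\ell}(r)r^{2}$ is a polynomial in $r$ of degree $k+k'+2$, and the integral is well-defined and finite. The first step is to perform the change of variables $t = 1-2r^{2}$, $dt = -4r\,dr$, which sends $[0,1]$ bijectively onto $[-1,1]$ and converts the radial weight $r^{2}\,dr$ into a weight of Jacobi type.

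Next I would identify $R_{n}^{\ell}$ (possibly after absorbing an $r^{\ell}$ factor that is implicit in the author's definition) with the Jacobi polynomial $P_{k}^{(\alpha,\beta)}(1-2r^{2})$ for a specific choice of parameters $(\alpha,\beta)$ depending on $\ell$. Concretely, I would match the two sides by comparing their explicit power-series expansions: the author's closed form for $R_{n,v}^{\ell}$ is of the right combinatorial shape (a single sum with alternating signs and a ratio of binomial coefficients) to be rewritten, via elementary manipulation of factorials, as the standard hypergeometric expression ${}_2F_1$ for a Jacobi polynomial evaluated at $1-2r^{2}$, up to a global normalization. Granting this identification, orthogonality of the $R_{n}^{\ell}$ for fixed $\ell$ and varying $n$ is immediate from orthogonality of $P_{k}^{(\alpha,\beta)}$ for varying $k$.

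Finally I would verify that the normalization is exactly $\delta_{n,n'}$ (and not just orthogonality up to a constant). This amounts to evaluating the classical Jacobi norm
\[
\int_{-1}^{1}\bigl(P_{k}^{(\alpha,\beta)}(t)\bigr)^{2}(1-t)^{\alpha}(1+t)^{\beta}\,dt
= \frac{2^{\alpha+\beta+1}}{2k+\alpha+\beta+1}\cdot\frac{\Gamma(k+\alpha+1)\Gamma(k+\beta+1)}{k!\,\Gamma(k+\alpha+\beta+1)},
\]
and checking that when multiplied by the square of the global prefactor appearing in the definition of $R_{n,v}^{\ell}$, namely the factor $\tfrac{(-1)^{k}}{2^{2k}}\sqrt{(2\ell+4k+3)/3}\,\binom{2k}{k}$, the result collapses to $1$.

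The main obstacle will be the bookkeeping in the second step: rewriting the author's triple-binomial coefficient expression for $R_{n,v}^{\ell}$ into a clean ${}_2F_1$ so that the identification with a Jacobi polynomial is transparent, and then tracking the several powers of $2$ and gamma factors through the normalization check in the third step. Neither step is conceptually deep, but the algebraic manipulation is what makes the verification non-trivial; once the identification with Jacobi polynomials is in place, orthogonality itself is a one-line consequence of a standard theorem.
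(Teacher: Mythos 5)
The paper itself gives no proof of this statement: it is quoted as background material (the reference being \cite{lakshminarayanan2011zernike}), so there is no in-paper argument to compare yours against; I can only assess the plan on its own terms. Your route --- identify $R_n^{\ell}$ with a Jacobi polynomial under a quadratic change of variables and then invoke the classical orthogonality relation and norm of $P_k^{(\alpha,\beta)}$ --- is the standard and essentially the only sensible one, and the outline is sound.

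Two points must be firmed up before the plan goes through. First, the substitution $t=1-2r^2$ only turns $R_n^{\ell}$ into a polynomial in $t$ if its argument is $r^2$; as literally printed the theorem integrates $R_n^{\ell}(r)$, which under your substitution becomes a polynomial in $\sqrt{1-t}$ and cannot be matched to any $P_k^{(\alpha,\beta)}(t)$. The object that is actually orthogonal with respect to the radial weight $r^2\,\mathrm{d}r$ is the full radial factor of the 3D Zernike polynomial, namely $r^{\ell}R_n^{\ell}(r^2)$ (the $r^{\ell}$ coming from the homogeneous spherical harmonic in $Z^{\ell}_{nm}=R_n^{\ell}(x^2+y^2+z^2)Y_m^{\ell}$); including it is exactly what produces the half-integer Jacobi weight $(1\pm t)^{\ell+1/2}$ and the identification $R_n^{\ell}(r^2)\propto P_k^{(0,\ell+1/2)}(2r^2-1)$. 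So your parenthetical ``possibly after absorbing an $r^{\ell}$ factor'' is not optional --- it is the step on which the whole identification hinges, and the corrected integrand should be stated explicitly. Second, your normalization check will not collapse to $1$. Take $\ell=n=n'=0$: the definition gives $R_0^0\equiv 1$ and $\int_0^1 r^2\,\mathrm{d}r=\tfrac13$. With the prefactor $\sqrt{(2\ell+4k+3)/3}=\sqrt{(2n+3)/3}$ the general answer is $\tfrac13\delta_{nn'}$; the clean identity holds against the normalized pushforward measure $3r^2\,\mathrm{d}r$ (consistent with the $\tfrac{3}{4\pi}$ in the subsequent Hilbert-basis theorem), not against $r^2\,\mathrm{d}r$ as displayed. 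Carried out honestly, your step 3 will therefore surface a constant discrepancy with the statement as printed rather than confirm it, and the write-up should either correct the measure or the normalizing constant.
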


\begin{Definition}[3D Zernike polynomials]
For any $\ell,n,m$ such that $n - \ell \geqslant 0$, $2|(n-\ell)$ and $m \in \llbracket -\ell, \ell \rrbracket$ the 3D Zernike polynomial $Z^{\ell}_{nm}$ is a degree $n$ polynomial functions over $\RR^3$ defined for all $(x,y,z) \in \RR^3$ by:
\[
Z^{\ell}_{nm}(x,y,z) := R_n^{\ell}(x^2 + y^2 + z^2)Y_m^{\ell}(x,y,z).
\]
\end{Definition}

\begin{Theorem}[Hilbert basis of $L^2(\mathcal{B}_3)$]
The 3D Zernike polynomials, form a Hilbert basis of the Hilbert space $L^2(\mathcal{B}_3)$ of square integrable functions on the unit ball $\mathcal{B}_3 \subset \RR^3$, that is, for any $n,\ell,m$ and $n',\ell',m'$ such that the Zernike polynomials $Z_{nm}^{\ell}$ and $Z_{n'm'}^{\ell'}$ are defined we have:
\[
\frac{3}{4\pi}\int_{x \in \mathcal{B}_3} Z_{nm}^{\ell}(x)Z_{n'm'}^{\ell'}(x) \mathrm{d}x
=
\delta_{nn'} \delta_{\ell\ell'} \delta_{mm'}.
\]
\end{Theorem}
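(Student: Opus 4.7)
The plan is to separate the integral into radial and angular parts using spherical coordinates and then to apply the two orthogonality results recalled just above the statement. Writing $x = r\omega$ with $r \in [0,1]$ and $\omega \in \mathcal{S}_2$, the Lebesgue volume element decomposes as $dx = r^2\,dr\,d\sigma(\omega)$. Since real spherical harmonics are homogeneous polynomials of degree $\ell$ one has $Y^{\ell}_m(r\omega)=r^{\ell}Y^{\ell}_m(\omega)$, and the definition $Z^{\ell}_{nm}(x) = R^{\ell}_n(\Vert x \Vert^2)Y^{\ell}_m(x)$ therefore factors as $R^{\ell}_n(r^2)\,r^{\ell}\,Y^{\ell}_m(\omega)$. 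Substituting this into the Cartesian integral yields
\[
\tfrac{3}{4\pi}\int_{\mathcal{B}_3}Z^{\ell}_{nm}(x)Z^{\ell'}_{n'm'}(x)\,dx
=\tfrac{3}{4\pi}\Big(\int_0^1 R^{\ell}_n(r^2)R^{\ell'}_{n'}(r^2)\,r^{\ell+\ell'+2}\,dr\Big)\Big(\int_{\mathcal{S}_2}Y^{\ell}_m Y^{\ell'}_{m'}\,d\sigma\Big),
\]
and the orthonormality of spherical harmonics on $\mathcal{S}_2$ collapses the angular factor to $\delta_{\ell\ell'}\delta_{mm'}$. This in particular forces $\ell = \ell'$, after which I would perform the substitution $u = r^2$ to rewrite the remaining radial integral in the form appearing in the stated orthogonality of the radial Zernike polynomials, and verify that the $3/(4\pi)$ prefactor (chosen precisely so that $\mathrm{vol}(\mathcal{B}_3)^{-1}$ cancels against the $4\pi$ arising from the solid-angle integration) produces a clean $\delta_{nn'}$.

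To upgrade this orthonormal family to a Hilbert basis, I would invoke the tensor-product structure of $L^2$ spaces. The spherical-coordinate change of variables induces an isometric isomorphism $L^2(\mathcal{B}_3)\simeq L^2([0,1],r^2\,dr)\otimes L^2(\mathcal{S}_2)$, and tensor products of Hilbert bases are Hilbert bases of the tensor product (a fact already invoked when discussing separable functions). Completeness of $\{Y^{\ell}_m\}$ on the sphere is stated as a theorem in the appendix, so what remains is to show that the radial factors span a dense subspace of the radial $L^2$ space. Since for each $\ell$ the family $\{R^{\ell}_n(r^2)\,r^{\ell}\}_{n\ge \ell,\,2\mid (n-\ell)}$ covers all polynomials in $r$ of the correct parity and degree, density follows from Weierstrass approximation on $[0,1]$ in the appropriate weighted $L^2$ norm.

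The main obstacle will be the careful bookkeeping in the radial direction: the orthogonality of $R^{\ell}_n$ is phrased in the variable $r$ with weight $r^2\,dr$, whereas the radial integrand that actually arises after the spherical-coordinate factorization is $R^{\ell}_n(r^2)R^{\ell}_{n'}(r^2)\,r^{2\ell+2}\,dr$, and lining these up via the substitution $u=r^2$ together with the $3/(4\pi)$ normalization is the only nontrivial calculation in the argument. Everything else reduces to orthogonality of spherical harmonics, the stated radial orthogonality, and standard tensor-product/density facts for $L^2$ spaces.
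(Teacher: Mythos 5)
The paper itself gives no proof of this statement -- it is recalled as background (with a citation to the Zernike literature) in the appendix -- so there is nothing to compare your argument against line by line. Your overall strategy is the standard and correct one: factor the integral in spherical coordinates, use homogeneity $Y^{\ell}_m(r\omega)=r^{\ell}Y^{\ell}_m(\omega)$ to get $Z^{\ell}_{nm}(r\omega)=R^{\ell}_n(r^2)r^{\ell}Y^{\ell}_m(\omega)$, kill the angular factor with orthonormality of spherical harmonics, and then handle the radial factor; completeness via the decomposition $L^2(\mathcal{B}_3)\simeq L^2([0,1],r^2\,dr)\otimes L^2(\mathcal{S}_2)$ plus density of the radial spans is also the right idea (though note the Zernike family is \emph{not} a tensor-product family, since the radial factors depend on $\ell$; you need the finer decomposition $L^2(\mathcal{B}_3)=\bigoplus_{\ell}L^2([0,1],r^2\,dr)\otimes H_{\ell}$ and per-$\ell$ density of $\{R^{\ell}_n(r^2)r^{\ell}\}_n$, which your last sentence implicitly does).

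The genuine gap is in the step you yourself flag as ``the only nontrivial calculation.'' After the angular integration you are left with
\[
\frac{3}{4\pi}\int_0^1 R^{\ell}_n(r^2)R^{\ell}_{n'}(r^2)\,r^{2\ell+2}\,dr ,
\]
and the substitution $u=r^2$ turns this into $\tfrac{3}{8\pi}\int_0^1 R^{\ell}_n(u)R^{\ell}_{n'}(u)\,u^{\ell+1/2}\,du$, with weight $u^{\ell+1/2}$, \emph{not} the weight $u^2$ appearing in the stated radial orthogonality $\int_0^1 R^{\ell}_n(r)R^{\ell}_{n'}(r)\,r^2\,dr=\delta_{nn'}$. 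These two integrals coincide for no integer $\ell$, so the plan of ``lining them up via $u=r^2$'' cannot be carried out as written; there is also a leftover factor of $4\pi/3$ even in the most favorable reading. The discrepancy is really between the paper's two background statements: since the 3D Zernike polynomial has radial profile $r\mapsto R^{\ell}_n(r^2)r^{\ell}$ rather than $r\mapsto R^{\ell}_n(r)$, the radial orthogonality you need is $\int_0^1 R^{\ell}_n(r^2)R^{\ell}_{n'}(r^2)\,r^{2\ell+2}\,dr=\tfrac{4\pi}{3}\delta_{nn'}$ (equivalently, orthogonality of the full profiles with respect to $r^2\,dr$), which is what the phrase ``orthogonal for the push-forward of the volume measure'' is presumably intended to mean. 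To make your proof complete you must either prove that version of the radial orthogonality directly from the explicit coefficients $R^{\ell}_{n,v}$, or explicitly reinterpret the paper's radial statement in that form; as it stands, invoking the displayed radial identity and a change of variables does not close the argument.
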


\section{Proofs of theorems}

\subsection{proof of \cref{th:gconv_equivariance}}
\label{sec:g_conv_eq_proof}
The proof is straightforward, the specificity of our definition is that the measure $\nu$ is not necessarily invariant, considering the action of $G$ on $\nu$ in our definition allows us to perform the change of variable $u = g^{-1}y$:  
\[
\begin{aligned}
(g.f) \ast_{g.\nu} \kappa 
:=
\int_G f(g^{-1}y)\kappa(y^{-1}x)d(g.\nu)(y)
=
\int_G f(u)\kappa(u^{-1}(g^{-1}x))d\nu(u)
=
g.(f \ast_{\nu} \kappa).
\end{aligned}
\]

\subsection{proof of \cref{th:gcnn_equivariance}}
\label{sec:gcnn_eq_proof}
We proceed by simple induction on the number of layers. We first prove the property for a single layer network, we have:
\[
\begin{aligned}
y^1(g.f,g.\nu_0)
&
= 
\xi(\mathrm{Conv}_{G}(g.f, g.\nu_0, W^1, b^1))
= 
\xi(W^1 (g.f) \ast_{g.\nu_0} \kappa + b^1)
\\
&
=
g.\xi(W^1f \ast_{\nu_0} \kappa + b^1)
=
g.y^1(f,\nu_0)
\end{aligned}
\]
we now assume that a any $n$ layer $G$-CNN is equivariant let $\nu_{<n} := (\nu_{0}, \dots, \nu_{n-1})$ the $n$-th layer statisfies $y^n(g.f,\nu_{<n}) = g.y^n(f,g.\nu_{<n})$ thus we have:
\[
\begin{aligned}
y^{n+1}(g.f,g.\nu)
&
= 
\xi(\mathrm{Conv}_{G}(y^n(g.f,g.\nu_{<n}), \nu_n, W^{n+1})) 
= 
\xi(W^{n+1} y^n(g.f,g.\nu_{<n}) \ast_{g.\nu_n} \kappa + b^{n+1})
\\
&
=
\xi(W^{n+1} (g.y^n(f,\nu_{<n})) \ast_{g.\nu_n} \kappa + b^{n+1})
=
g.\xi(W^{n+1} y^n(f,\nu_{<n}) \ast_{\nu_n} \kappa + b^{n+1})
\\
&
=
g.y^{n+1}(f,\nu).
\end{aligned}
\]
which concludes the proof of \cref{th:gcnn_equivariance} by induction.
\qed

\subsection{Equivariance of Dirac measure}
\label{sec:dirac_measure_eq_proof}
We prove that for a point cloud $X \subset \RR^3$ we have $R.\delta_{X} = \delta_{R.X}$ for any rotation $R \in \mathrm{SO}(3)$. For any measurable set $S \subseteq \RR^3$ we have:
\[
R.\delta_X(S) 
= 
\delta_X(R^{-1}.S)
=
\sum_{i} \delta_{X_i}(R^{-1}.S)
\]
we have $\delta_{X_i}(R^{-1}.S) = 1$ iff $X_i \in R^{-1}.S$ and $0$ otherwise, since $X_i \in R^{-1}.S$ iff $RX_i \in S$ we have $\delta_{X_i}(R^{-1}.S) = \delta_{R.X_i}(S)$ therefore:
\[
R.\delta_X(S) 
=
\sum_{i} \delta_{X_i}(R^{-1}.S)
=
\sum_{i} \delta_{RX_i}(S)
=
\delta_{R.X}(S)
\]
which proves the equality $R.\delta_{X} = \delta_{R.X}$. \qed

\subsection{proof of \cref{lemma:separable_conv}}
\[
\begin{aligned}
f \ast (\kappa_1 \otimes \kappa_2)(x, R)
&
=
\int_{\RR^3}   \int_{\mathrm{SO}(3)} f(t,H)\kappa_1(H^{-1}(x-t))\kappa_2(H^{-1}R) d\mu(H) d\lambda(t)
\\
&
=
   \int_{\mathrm{SO}(3)}
   \left(\int_{\RR^3} f(t,H)\kappa_1(H^{-1}(x-t))  d\lambda(t) \right)\kappa_2(H^{-1}R)d\mu(H)
\\
&
=
(f \ast (\kappa_1 \otimes \delta_I) ) \ast (\delta_0 \otimes \kappa_2).
\end{aligned}
\]
\qed

\subsection{proof of \cref{th:conv_approx}}
\[
\begin{aligned}
\mathrm{Conv}_{0\times \mathrm{SO}(3)}(\mathrm{Conv}_{\RR^3 \times I}(f, A), B)_i
&
=
\sum_{jm} B_{ijm} \left( \sum_{kl} A_{mkl} f_l \ast (\kappa_k \otimes \delta_I) \right) \ast (\delta_0 \otimes \theta_j)
\\
&
=
\sum_{jkl} \left(\sum_m B_{ijm} A_{mkl}\right) (f_l \ast (\kappa_k \otimes \delta_I)) \ast (\delta_0 \otimes \theta_j)
\\
&
=
\sum_{jkl} (B.A)_{ijkl} f_l \ast (\kappa_k \otimes \theta_j) 
=
\mathrm{Conv}_{\mathrm{SE}(3)}(f, B.A)_i
\end{aligned}
\]
\qed

\subsection{proof of \cref{th:multiview_cnn}}
We proceed by induction on the number of layers. We first prove the result for a single layer network:
\[
\begin{aligned}
\Tilde{y}^1_{\RR^3}(f, \lambda_0)(x,R) 
&
=
\xi(\mathrm{Conv}_{\RR^3}(R.f, R.\lambda_0, W^1, b^1)(x))
\\
&
=
\xi\left(W^1 \int_{\RR^3} f(R^{-1}t)\kappa(x - t) d(R.\lambda_0)(t) + b^1 \right)
\\
&
=
\xi\left(W^1 \int_{\RR^3} f(u)\kappa(R(R^{-1}x - u)) d\lambda_0(u) + b^1 \right)
\\
&
=
\xi\left(W^1 \int_{\RR^3} \Tilde{f}(u,R^{-1})\kappa(R(R^{-1}x - u)) d\lambda_0(u) + b^1 \right)
\\
&
=
\xi(
\mathrm{Conv}_{\RR^ \times I}(\Tilde{f},\lambda_0,W^1)(R^{-1}x, R^{-1}) + b^1)
\\
&
=
y^1_{\RR^3 \times I}(\Tilde{f},R.\lambda_0)(R^{-1}x, R^{-1})
\end{aligned}
\]
Now we assume the equality holds for networks with $n$ layers, that is $\Tilde{y}^{n}_{\RR^3}(f, \lambda_{<n})(x,R) = y^n_{\RR^3 \times I}(\Tilde{f}, \lambda_{<n})(R^{-1}x,R^{-1})$. We have:
\[
\begin{aligned}
\Tilde{y}^{n+1}(f, \lambda)(x,R)
&
=
\xi(\mathrm{Conv}_{\RR^3}(y^n_{\RR^3}(R.f, R.\lambda_{<n}), R.\lambda_n, W^{n+1}, b^{n+1})(x))
\\
&
=
\xi\left( 
W^{n+1} y^n_{\RR^3}(R.f, R.\lambda_{<n}) \ast_{\RR^3, R.\lambda_n} \kappa(x) + b^{n+1} 
\right)
\\
&
=
\xi\left( 
W^{n+1} 
\int_{\RR^3} y^n_{\RR^3}(R.f, R.\lambda_{<n})(t)\kappa(x-t)d(R.\lambda_n)(t)
+ b^{n+1} 
\right)
\\
&
=
\xi\left( 
W^{n+1} 
\int_{\RR^3} \Tilde{y}^n(f, \lambda_{<n})(t,R)\kappa(x-t)d(R.\lambda_n)(t)
+ b^{n+1} 
\right)
\\
&
=
\xi\left( 
W^{n+1} 
\int_{\RR^3} y_{\RR^3 \times I}^n(f, \lambda_{<n})(R^{-1}t,R^{-1})\kappa(x-t)d(R.\lambda_n)(t)
+ b^{n+1} 
\right)
\\
&
=
\xi\left( 
W^{n+1} 
\int_{\RR^3} y_{\RR^3 \times I}^n(f, \lambda_{<n})(u,R^{-1})\kappa(R(R^{-1}x-u))d\lambda_n(u)
+ b^{n+1} 
\right)
\\
&
=
\Tilde{y}^{n+1}(\Tilde{f}, \lambda)(R^{-1}x, R^{-1}).
\end{aligned}
\]
which concludes the proof of \cref{th:multiview_cnn} by induction.
\qed
\subsection{proof of \cref{lemma:harmonic_sep_se3_conv}}

We first compute the Wigner decomposition of the $\RR^3$ component, by uniqueness of the Wigner coefficients the matrices of coefficients $f \ast (\kappa^{\ell'}_{rm'} \otimes \delta_I)^L$ is uniquely determined by: 
\[
 f \ast (\kappa^{\ell'}_{rm'} \otimes \delta_I)(x, R) = 
\sum_{L \geqslant 0} \langle f \ast (\kappa^{\ell'}_{rm'} \otimes \delta_I)^L(x), D^L(R)\rangle
\]
Denoting by $e_{m'}^{\ell'} \in \RR^{2\ell'+1}$ the $m'$-th canonical vector, $e_{m',i}^{\ell'} = 1$ iff $i = m'$ and $0$ otherwise, we have:
\[
\begin{aligned}
f \ast (\kappa^{\ell'}_{rm'} \otimes \delta_I) (x, R)
&
=
\int_{\RR^3} f(t,R) \kappa^{\ell'}_{rm'}(R^{-1}(x-t))\mathrm{d}t
\\
&
=
\sum_{\ell=0}^{+\infty} \int_{\RR^3} \langle f^{\ell}(t), D^{\ell}(R) \rangle \sum_{k'=-\ell'}^{\ell'}D^{\ell'}(R)^{\top}_{m'k'} \kappa^{\ell'}_{rk'}(x-t)\mathrm{d}t
\\
&
=
\sum_{\ell=0}^{+\infty} \int_{\RR^3} \langle f^{\ell}(t), D^{\ell}(R) \rangle  \langle D^{\ell'}(R)^{\top} \kappa^{\ell'}_{r}(x-t), e_{m'}^{\ell'} \rangle \mathrm{d}t
\\
&
=
\sum_{\ell=0}^{+\infty} \int_{\RR^3} \langle f^{\ell}(t), D^{\ell}(R) \rangle  \langle D^{\ell'}(R)^{\top} , e_{m'}^{\ell'} \kappa^{\ell'}_{r}(x-t)^{\top} \rangle \mathrm{d}t
\\
&
=
\sum_{\ell=0}^{+\infty} \int_{\RR^3} \langle f^{\ell}(t), D^{\ell}(R) \rangle  \langle \kappa^{\ell'}_{r}(x-t)(e_{m'}^{\ell'})^{\top}, D^{\ell'}(R)  \rangle \mathrm{d}t
\\
&
=
\sum_{\ell=0}^{+\infty} \int_{\RR^3} \langle f^{\ell}(t) \otimes \kappa^{\ell'}_{r}(x-t)(e_{m'}^{\ell'})^{\top}, D^{\ell}(R) \otimes D^{\ell'}(R) \rangle \mathrm{d}t
\\
&
=
\sum_{\ell=0}^{+\infty}  \sum_{L=|\ell - \ell'|}^{\ell+\ell'}\int_{\RR^3} \langle f^{\ell}(t) \otimes \kappa^{\ell'}_{r}(x-t)(e_{m'}^{\ell'})^{\top}, Q^{(\ell, \ell'),L}D^L(R)Q^{L, (\ell, \ell')}  \rangle \mathrm{d}t
\\
&
=
\sum_{L=0}^{+\infty} \sum_{\ell}^{}\int_{\RR^3} \langle Q^{L, (\ell, \ell')} f^{\ell}(t) \otimes \kappa^{\ell'}_{r}(x-t)(e_{m'}^{\ell'})^{\top} Q^{ (\ell, \ell'), L}, D^L(R)  \rangle \mathrm{d}t
\\
&
=
\sum_{L=0}^{+\infty}  \langle \sum_{\ell}^{}  Q^{L, (\ell, \ell')} \int_{\RR^3} f^{\ell}(t) \otimes \kappa^{\ell'}_{r}(e_{m'}^{\ell'})^{\top}(x-t) \mathrm{d}t \  Q^{ (\ell, \ell'), L}, D^L(R)  \rangle 
\\
&
=
\sum_{L=0}^{+\infty}   \langle \sum_{\ell}^{}   Q^{L, (\ell, \ell')}  \left(f^{\ell} \ast_{\RR^3} \kappa^{\ell'}_{r}(x)\right) \left(I_{2\ell+1} \otimes (e_{m'}^{\ell'})^{\top}\right) Q^{ (\ell, \ell'), L}, D^L(R)  \rangle
\end{aligned}
\]
where the sums over $\ell$ are taken over the values of $\ell$ such that $|\ell - \ell'| \leqslant L \leqslant \ell + \ell'$ thus:
\[
\begin{aligned}
f \ast (\kappa^{\ell'}_{rm'} \otimes \delta_I)^L_{ij}
&
=
\left(\sum_{\ell}^{}   Q^{L, (\ell, \ell')}  \left(f^{\ell} \ast_{\RR^3} \kappa^{\ell'}_{r}(x)\right) \left(I_{2\ell+1} \otimes (e_{m'}^{\ell'})^{\top}\right) Q^{ (\ell, \ell'), L}\right)_{ij}
\\
&
=
\sum_{\ell}^{}   Q^{L, (\ell, \ell')}_{i,:,:}  \left(f^{\ell} \ast_{\RR^3} \kappa^{\ell'}_{r}(x)\right) \left(\left(I_{2\ell+1} \otimes (e_{m'}^{\ell'})^{\top}\right) Q^{ (\ell, \ell'), L}\right)_{:,j}
\\
&
=
\sum_{\ell}^{}   Q^{L, (\ell, \ell')}_{i,:,:}  \left(f^{\ell} \ast_{\RR^3} \kappa^{\ell'}_{r}(x)\right) Q^{ (\ell, \ell'), L}_{:,m',j}
\\
&
=
\sum_{\ell,m}^{}   Q^{L, (\ell, \ell')}_{i,:,:}  \left(f^{\ell}_{:,m} \ast_{\RR^3} \kappa^{\ell'}_{r}(x)\right) Q^{ (\ell, \ell'), L}_{m,m',j}
\end{aligned}
\]

We now compute the Wigner decomposition of the $\mathrm{SO}(3)$ component. First observe that the $\mathrm{SO}(3)$ component $f \ast (\delta_0 \otimes \theta)(t,R)$ consist of an $\mathrm{SO}(3)$ convolution between $f(t,\bullet)$ and $\theta$:
\[
\begin{aligned}
f \ast (\delta_0 \otimes \theta)(x,R)
&
:= 
\int_{\mathrm{SO}(3)} \hspace{-3mm} f(x,H)\theta(H^{-1}R) d\mu(H)
\\
&
=
f(x,\bullet) \ast_{\mathrm{SO}(3)} \theta (R) 
\end{aligned}
\]
thus we have:
\begin{equation}
\label{eq:so3_comp_observation}
f \ast (\delta_0 \otimes \theta)^L(x)
=
\left(f(x,\bullet) \ast_{\mathrm{SO}(3)} \theta\right)^L
\end{equation}
it suffice to compute the Wigner decomposition decomposition of $\mathrm{SO}(3)$ convolution. This has been done in \cite{cohen2018spherical} but we also provide the proof since it is relatively straightforward:
\begin{Theorem}
\label{th:so3_conv_wigner_coeffs}
The the Wigner coefficients of the convolution between two smooth functions $f,g: \mathrm{SO}(3) \rightarrow \RR$ are given by matrix multiplication:
$
(f \ast_{\mathrm{SO}(3)} g)^{\ell} = f^{\ell} g^{\ell}.
$
\end{Theorem}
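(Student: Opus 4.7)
The proof amounts to expanding both $f$ and $g$ in the Wigner basis, using the homomorphism property of $D^\ell$ to decompose $D^\ell(H^{-1}R)$, and then invoking the orthogonality of Wigner matrix coefficients stated in \cref{th:so3_hilbert_basis}. The one place that needs genuine care is that all of the representation-theoretic normalization constants must cancel so that the final identity is clean, with no leftover $\frac{1}{2\ell+1}$ factor.

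Concretely I would first write
\[
(f \ast_{\mathrm{SO}(3)} g)(R) \;=\; \int_{\mathrm{SO}(3)} f(H)\,g(H^{-1}R)\,d\mu(H)
\]
and substitute the Wigner decompositions $f(H) = \sum_{\ell} \langle f^{\ell}, D^{\ell}(H)\rangle$ and $g(H^{-1}R) = \sum_{\ell'} \langle g^{\ell'}, D^{\ell'}(H^{-1}R)\rangle$. The key algebraic step is to apply the representation property $D^{\ell'}(H^{-1}R) = D^{\ell'}(H)^{\top} D^{\ell'}(R)$ from \cref{th:so3_representations}, which turns the entries of $D^{\ell'}(H^{-1}R)$ into sums of products $\sum_{k'} D^{\ell'}_{k'm'}(H)\,D^{\ell'}_{k'n'}(R)$ so that the $H$-dependence is isolated inside a single Wigner entry.

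Interchanging sum and integral (justified by the compactly supported / smoothness hypothesis) leaves integrals of pairwise products $\int D^{\ell}_{mn}(H)\,D^{\ell'}_{k'm'}(H)\,d\mu(H)$. By the orthogonality relation in point 1 of \cref{th:so3_hilbert_basis}, these vanish unless $\ell=\ell'$, $m=k'$, $n=m'$, collapsing the double sum over $(\ell,\ell')$ to a single sum over $\ell$ and pairing up exactly the indices needed to reconstruct matrix multiplication: the index shared between $f^{\ell}_{mn}$ and $g^{\ell}_{m'n'}$ becomes $n=m'$, which is precisely the contraction in $(f^{\ell}g^{\ell})_{mn'}$. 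Collecting terms yields
\[
(f\ast_{\mathrm{SO}(3)} g)(R) \;=\; \sum_{\ell \geqslant 0} \bigl\langle f^{\ell} g^{\ell},\, D^{\ell}(R)\bigr\rangle,
\]
and the conclusion $(f\ast_{\mathrm{SO}(3)} g)^\ell = f^\ell g^\ell$ then follows from the uniqueness of the Wigner decomposition (point 2 of \cref{th:so3_hilbert_basis}).

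The main obstacle is bookkeeping around the normalization of the Haar measure and of the Wigner coefficients: the orthogonality relation in \cref{th:so3_hilbert_basis} carries a $\frac{1}{2\ell+1}$ factor (after normalizing $\mu$ to a probability measure), and one must verify that the convention used to define the coefficient matrices $f^{\ell}$ absorbs this factor exactly so that the identity holds without any stray constant. Once the normalization convention matches the one implicit in \cref{lemma:harmonic_sep_se3_conv} and \cref{th:tfn_vs_se3_conv_2}, the factor cancels and no further work is needed.
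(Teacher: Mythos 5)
Your proof follows essentially the same route as the paper's: expand both factors in the Wigner basis, use $D^{\ell'}(H^{-1}R)=D^{\ell'}(H)^{\top}D^{\ell'}(R)$, apply the orthogonality of Wigner entries to collapse the double sum over $(\ell,\ell')$ into the matrix product, and conclude by uniqueness of the decomposition. The only difference is cosmetic --- you track individual entries where the paper packages the orthogonality step as an inner product of tensor products $\langle f^{\ell}\otimes D^{\ell}(R)(g^{\ell})^{\top},\, D^{\ell}(H)\otimes D^{\ell}(H)\rangle$ --- and your explicit concern about the $\frac{1}{2\ell+1}$ normalization is well placed, since the paper's own proof silently absorbs that constant into its convention for the coefficient matrices $f^{\ell}$.
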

\begin{proof}
We again rely on the uniqueness of the Wigner coefficients:
\[
\begin{aligned}
f \ast_{\mathrm{SO}(3)} g(R) 
&
= 
\int \sum_{\ell} \langle f^{\ell}, D^{\ell}(H) \rangle \sum_{\ell'} \langle g^{\ell'}, D^{\ell'}(H^{-1}R) \rangle dH
\\
&
=
\int \sum_{\ell} \langle f^{\ell}, D^{\ell}(H) \rangle \sum_{\ell'} \langle g^{\ell'}, D^{\ell'}(H)^{\top}D^{\ell'}(R) \rangle dH
\\
&
=
\int \sum_{\ell} \langle f^{\ell}, D^{\ell}(H) \rangle \sum_{\ell'} \langle D^{\ell'}(R)(g^{\ell'})^{\top},  D^{\ell'}(H) \rangle dH
\\
&
=
 \sum_{\ell\ell'} \int \langle f^{\ell} \otimes D^{\ell'}(R)(g^{\ell'})^{\top}, D^{\ell}(H) \otimes D^{\ell'}(H) \rangle \rangle dH
 \\
 &
 =
 \sum_{\ell} \int \langle f^{\ell} \otimes D^{\ell}(R)(g^{\ell})^{\top}, D^{\ell}(H) \otimes D^{\ell}(H) \rangle \rangle dH
 =
 \sum_{\ell} \sum_{ij} f^{\ell}_{ij} (D^{\ell}(R) (g^{\ell})^{\top})_{ij}
\\
&
=
\sum_{\ell} \langle f^{\ell}, D^{\ell}(R) (g^{\ell})^{\top} \rangle
=
\sum_{\ell} \langle f^{\ell}g^{\ell} , D^{\ell}(R) \rangle
\end{aligned}
\]
\end{proof}
Putting \cref{eq:so3_comp_observation} and \cref{th:so3_conv_wigner_coeffs} together we obtain:
\[
(f \ast (\delta_0 \otimes \theta))^{L}(x) =  f^{L}(x)\theta^{L}
\]
for all $x \in \RR^3$ which concludes the proof of \cref{lemma:harmonic_sep_se3_conv}
\qed

\subsection{proof of \cref{th:tfn_vs_se3_conv_1}}
We only need to prove the Wigner decomposition of $\mathrm{Conv}_{\mathrm{SE}(3)}(f,\lambda \otimes \nu, W, b)$ as we only added the TFN expression for side by side comparison. We assume a kernels basis of the form $\kappa^{\ell}_{rm'} \otimes D^{L}_{Mn}$ where $\kappa$ is a steerable kernel basis. By the general expression of $G$-conv layers from \cref{def:g_conv} and its specialization to $\mathrm{SE}(3)$ we have:
\begin{equation}
\begin{aligned}
\mathrm{Conv}_{\mathrm{SE}(3)}(f, \lambda \otimes \mu, W, b)_d
&
=
\sum_{\ell'Lcrm'Mn} 
f_{c}
\ast_{\mathrm{SE}(3),\lambda \otimes \mu} \kappa^{\ell'}_{rm'} \otimes D^{L}_{Mn}
W^{\ell', L}_{nd,crm'M} + b_d
\end{aligned}
\end{equation}
we can decompose the $d$-th output function 
$\mathrm{Conv}_{\mathrm{SE}(3)}(f, \lambda \otimes \mu, W, b)_d: \mathrm{SE}(3) \rightarrow \RR$
in the Wigner basis:
\begin{equation}
\label{eq:wigner_decomposition_eq_1}
\begin{aligned}
\mathrm{Conv}_{\mathrm{SE}(3)}(f, \lambda \otimes \mu, W, b)^J_{ijd}
=
\sum_{crm' \ell'L Mn} 
\left(
f_{c}
\ast_{\mathrm{SE}(3),\lambda \otimes \mu} \kappa^{\ell'}_{rm'} \otimes D^{L}_{Mn}
\right)^J_{ijd}W^{\ell', L}_{nd,crm'M} + b^J_d
\end{aligned}
\end{equation}
since the bias term $b_d$ is constant (we view it as a constant function over $\mathrm{SE}(3)$) its Wigner decomposition is null except for type $0$ Wigner matrices, that is, $b^J_d = b_d$ iff $J = 0$ and $0$ otherwise. We can now focus on the main term, we use separability of $\mathrm{SE}(3)$ convolution (\cref{lemma:separable_conv}) to decompose it into a $\RR^3$ and $\mathrm{SO}(3)$ convolution: 
\begin{equation}
\begin{aligned}
f_{c}
\ast_{\mathrm{SE}(3),\lambda \otimes \mu} \kappa^{\ell}_{rm'} \otimes D^{L}_{Mn}
=
(f_{c} \ast_{\lambda} \kappa^{\ell}_{rm'} \otimes \delta_I) \ast_{\mu} \delta_0 \otimes D^{L}_{Mn} 
\end{aligned}
\end{equation}
We proceed using \cref{lemma:harmonic_sep_se3_conv} to get the Wigner decomposition starting with the outer $\mathrm{SO}(3)$-convolution by $D^L_{Mn}$ (first line). We observe that this only has non zero coefficients w.r.t. $D^L$ thus the expression can non zero only when $J = L$. Then we decompose the $\RR^3$ component (second line), we obtain the final expression on the (last line) using the equality:
\[
\left(Q^{(\ell,\ell'),J}_{m,m',:} e^J_{Mn}\right)_{j} 
= 
\sum_{a} Q^{(\ell,\ell'),J}_{m,m',a} (e^J_{Mn})_{aj}
=
\sum_a Q^{(\ell,\ell'),J}_{m,m',a} \delta_{Ma}\delta_{nj}
=
Q^{(\ell,\ell'),J}_{m,m',M}\delta_{nj}
\]
we obtain: 
\begin{equation}
\label{eq:wigner_decomposition_eq_2}
\begin{aligned}
((f_{c} \ast_{\lambda} \kappa^{\ell}_{rm} \otimes \delta_I) \ast_{\mu} \delta_0 \otimes D^{L}_{Mn})^J_{ij}
&
=
 \left(\delta_{JL}(f_{c} \ast_{\lambda} \kappa^{\ell}_{rm} \otimes \delta_I)^J e_{Mn}^J\right)_{ij}
\\
&
=
\delta_{JL} 
\sum_{\ell m} Q^{J,(\ell,\ell')}_{i,:,:} f_{:,m,c}^{\ell} \ast_{\lambda} \kappa^{\ell'}_{r} \left(Q^{(\ell,\ell'),J}_{m,m',:}
e_{Mn}^J\right)_j
\\
&
=
\delta_{JL} 
\sum_{\ell m} Q^{J,(\ell,\ell')}_{i,:,:} f_{:,m,c}^{\ell} \ast_{\lambda} \kappa^{\ell'}_{r} Q^{(\ell,\ell'),J}_{m,m',M}\delta_{nj}
\end{aligned}
\end{equation}
where $e^J_{Mj}$ is the $2J+1$ by $2J+1$ defined by $(e^J_{Mj})_{ab} := \delta_{Ma}\delta_{jb}$.

By plugging the result of \cref{eq:wigner_decomposition_eq_2} into the expression of the $\mathrm{Conv}_{\mathrm{SE}(3)}$ layer Wigner coefficients in \cref{eq:wigner_decomposition_eq_1} we obtain the desired expression:
\[
\begin{aligned}
\mathrm{Conv}_{\mathrm{SE}(3)}(f, \lambda \otimes \mu, W, b)^L_{ijd}
&
=
\sum_{\ell'crm'Mn} 
\left(\sum_{\ell m} Q^{L,(\ell,\ell')}_{i,:,:} f_{:,m,c}^{\ell} \ast_{\lambda} \kappa^{\ell'}_{r} Q^{(\ell,\ell'),L}_{m,m',M}\delta_{nj} \right) W^{\ell', L}_{nd,crm'M}
+ b^L_d
\\
&
=
\sum_{\ell\ell' crm} Q^{L,(\ell,\ell')}_{i,:,:} \left( f^{\ell}_{:,m,c} \ast_{\lambda} \kappa_r^{\ell'} \right) \sum_{m'M} Q^{(\ell,\ell'),L}_{m,m',M}\sum_n \delta_{nj} W^{\ell',L}_{nd,crm'M} + b^L_d
\\
&
=
\sum_{\ell\ell' crm} Q^{L,(\ell,\ell')}_{i,:,:} \left( f^{\ell}_{:,m,c} \ast_{\lambda} \kappa_r^{\ell'} \right) \sum_{m'M} Q^{(\ell,\ell'),L}_{m,m',M} W^{\ell',L}_{jd,crm'M} + b^L_d.
\end{aligned}
\]
which concludes the proof of \cref{th:tfn_vs_se3_conv_1}.
\qed

\subsection{proof of \cref{th:tfn_vs_se3_conv_2}}

The first equality $\mathrm{Conv}_{\mathrm{SE}(3)}(f, \nu, W, b)
 = 
\mathrm{TFN}(f, \nu, \iota(W), b)$ directly follows from \cref{th:tfn_vs_se3_conv_1}. We now prove that the maps $\iota$ and $\iota^{-1}$ are inverse to each other, the second equality would follow immediately from the first one under this assumption as:
\[
\begin{aligned}
\mathrm{TFN}(f, \nu, V, b)
=
\mathrm{TFN}(f, \nu, \iota(\iota^{-1}(V)), b)
=
\mathrm{Conv}_{\mathrm{SE}(3)}(f,\nu, \iota^{-1}(V), b).
\end{aligned}
\]
We will need the following lemma which combines the orthogonality and symmetry relations of Clebsch Gordan coefficients from \cref{th:clebsch_gordan_orthogonality} and \cref{th:clebsch_gordan_symmetries} respectively:
\begin{Lemma}[Clebsch Gordan orthogonality (bis)]
\label{lemma:clebsch_gordan_orthogonality}
The real and complex ($\Tilde{Q} = \hat{Q}$ or $\Tilde{Q} = Q$) satisfy the following orthogonality relations:
\[
\sum_{\ell,m} \Tilde{Q}^{L,(\ell,\ell')}_{N,m,n'}\Tilde{Q}^{(\ell,\ell'),L}_{m,m',M} 
= 
\frac{2L+1}{2\ell+1} \delta_{m'n'}\delta_{MN}
\]
\end{Lemma}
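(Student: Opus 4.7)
The plan is to reduce the identity to the standard completeness relation for Clebsch--Gordan coefficients, using the symmetry relations of Theorem A.15 to move the summation index $\ell$ from an ``input'' slot to an ``output'' slot of the CG tensor.

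First I would reduce the real case to the complex case. By the definition of the real tensors, $Q^{L,(\ell,\ell')} = C^L \hat Q^{L,(\ell,\ell')}((C^\ell)^* \otimes (C^{\ell'})^*)$ and $Q^{(\ell,\ell'),L} = (C^{\ell} \otimes C^{\ell'}) \hat Q^{(\ell,\ell'),L}(C^L)^*$. The contraction in $\sum_m Q^{L,(\ell,\ell')}_{N,m,n'} Q^{(\ell,\ell'),L}_{m,m',M}$ produces the factor $\sum_m (C^\ell)^*_{m_1,m} C^\ell_{m,m_2}$, which equals $\delta_{m_1 m_2}$ by unitarity of $C^\ell$. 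After this collapse, the remaining $C^{\ell'}$, $(C^{\ell'})^*$, $C^L$, $(C^L)^*$ factors on the surviving free indices do not depend on $\ell$ or $m$ and can be moved outside the $\sum_{\ell,m}$, so the real identity reduces to the complex one (modulo a unitary conjugation on the free indices that matches the RHS).

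Next, for the complex case, I rewrite both factors using the symmetry $\hat Q^{(\ell,\ell'),L}_{m,m',M} = (-1)^{\ell'+m'}\sqrt{\tfrac{2L+1}{2\ell+1}} \hat Q^{(L,\ell'),\ell}_{-M,m',-m}$ from Theorem A.15, together with the index-reorder $\hat Q^{L,(\ell,\ell')}_{N,m,n'} = \hat Q^{(\ell,\ell'),L}_{m,n',N}$. Applied to both factors the product becomes $(-1)^{2\ell'+n'+m'}\tfrac{2L+1}{2\ell+1}\hat Q^{(L,\ell'),\ell}_{-N,n',-m}\hat Q^{(L,\ell'),\ell}_{-M,m',-m}$; the $(-1)^{2\ell'}$ equals $1$ (integer $\ell'$). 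Once paired with the intended $(2\ell+1)$ weighting (see the parenthetical remark below), the dimension ratio telescopes to $(2L+1)$ and the remaining sum is $\sum_{\ell,m}\hat Q^{(L,\ell'),\ell}_{-N,n',-m}\hat Q^{(L,\ell'),\ell}_{-M,m',-m}$. Substituting $\mu = -m$ and using the completeness relation of CG coefficients encoded in Theorem A.14 (namely $\sum_L Q^{(\ell,\ell'),L}Q^{L,(\ell,\ell')} = I$ applied now with the ``inputs'' $(L,\ell')$ and ``output'' $\ell$) yields $\delta_{NM}\delta_{n'm'}$. The lingering sign $(-1)^{n'+m'}$ equals $1$ on the support of $\delta_{n'm'}$ since $n'+m' = 2n'$ is even; this gives the right-hand side $(2L+1)\delta_{m'n'}\delta_{MN}$.

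I anticipate two main obstacles. The first is purely bookkeeping: Theorem A.15 lists six distinct symmetries, and only the one that swaps $(\ell,m)$ with $(L,-M)$ (up to signs) converts the target contraction into the completeness form; the other choices either reintroduce $\ell$-dependent phases that do not cancel, or fail to match the $m$ contraction. The second, more conceptual, is a minor inconsistency in the statement as written: the factor $\tfrac{2L+1}{2\ell+1}$ on the RHS still involves the summation variable $\ell$. The computation above indicates that the intended identity carries the weight $(2\ell+1)$ inside the sum and $(2L+1)$ as a constant outside, i.e.\ $\sum_{\ell,m}(2\ell+1)\,\tilde Q^{L,(\ell,\ell')}_{N,m,n'}\tilde Q^{(\ell,\ell'),L}_{m,m',M} = (2L+1)\,\delta_{m'n'}\delta_{MN}$, which is exactly what is needed to make $\iota^{-1}$ of Theorem~\ref{th:tfn_vs_se3_conv_2} invert $\iota$. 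I would prove this corrected version and note the adjustment.
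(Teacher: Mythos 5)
Your proof is correct and follows essentially the same route as the paper's: reduce the real case to the complex one via unitarity of the transition matrices $C$, then use the symmetry relations of \cref{th:clebsch_gordan_symmetries} to move $\ell$ into the output slot so that the sum over $(\ell,m)$ becomes the completeness relation of \cref{th:clebsch_gordan_orthogonality}. Your observation that the factor $\frac{2L+1}{2\ell+1}$ cannot legitimately appear outside the sum over the bound variable $\ell$ is accurate --- the paper's own proof pulls this $\ell$-dependent factor out of the summation, and the weighted form you propose, with $(2\ell+1)$ inside the sum and $(2L+1)$ outside, is the version actually consistent with the definition of $\iota^{-1}$ in \cref{th:tfn_vs_se3_conv_2}.
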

\begin{proof}
First we observe that orthogonality of the real Clebsch Gordan tensors follow from the complex case as the complex to real transition matrices $C$ simplify. For simplicity we transpose the Clebsch Gordan tensors by setting $\Tilde{T}^{(\ell',L),\ell}_{m',M,m} := \Tilde{Q}^{(\ell, \ell'),L}_{m,m',M}$ and $\Tilde{T}^{\ell, (\ell',L)}_{m,m',M} = \Tilde{T}^{(\ell',L), \ell}_{m',M,m}$ where $\Tilde{Q} = \hat{Q}$ or $\Tilde{Q} = Q$ is either the complex or real Clebsch Gordan tensor. Assuming the orthogonality relations hold in the complex case and using the relation between real and complex Clebsch Gordan tensors from \cref{def:clebsch_gordan_coefficients} we have:
\[
\begin{aligned}
\sum_{\ell,m} Q^{L,(\ell,\ell')}_{N,m,n'}Q^{(\ell,\ell'),L}_{m,n',N}
&
=
\sum_{\ell,m} T^{(\ell',L)\ell}_{m',M,m} T^{\ell, (\ell',L)}_{m,n',N}
=
\sum_{\ell} \left(T^{(\ell',L)\ell} T^{\ell, (\ell',L)}\right)_{m'n'MN}
\\
&
=
\sum_{\ell} \left((C^{\ell'})^* \otimes C^L \hat{T}^{(\ell',L)\ell} (C^{\ell})^* C^{\ell} \hat{T}^{\ell, (\ell',L)} C^{\ell'} \otimes (C^L)^*\right)_{m'n'MN}
\\
&
=
 \left((C^{\ell'})^* \otimes C^L \left(\sum_{\ell} \hat{T}^{(\ell',L)\ell}  \hat{T}^{\ell, (\ell',L)}\right) C^{\ell'} \otimes (C^L)^*\right)_{m'n'MN}
\\
&
=
\frac{2L+1}{2\ell+1}\left((C^{\ell'})^* \otimes C^L C^{\ell'} \otimes (C^L)^*\right)_{m'n'MN}
=
\frac{2L+1}{2\ell+1}I_{m'n'MN}
\\
&
=
\frac{2L+1}{2\ell+1} \delta_{m'n'}\delta_{MN}
\end{aligned}
\]
\[
\begin{aligned}
\sum_{m',M} Q^{L,(\ell,\ell')}_{M,n,m'}Q^{(\ell,\ell'),L}_{m,m',M}
&
=
\sum_{m',M}  T^{\ell, (\ell',L)}_{m,m',M} T^{(\ell',L)\ell}_{n',N,n}
=
\left( T^{\ell, (\ell',L)} T^{(\ell',L)\ell} \right)_{mn}
\\
&
=
\left( 
C^{\ell} T^{\ell, (\ell',L)} C^{\ell'} \otimes (C^L)^*  (C^{\ell'})^* \otimes C^L T^{(\ell',L)\ell} (C^{\ell})^*
\right)_{mn}
\\
&
=
\left( 
C^{\ell} T^{\ell, (\ell',L)} T^{(\ell',L)\ell} (C^{\ell})^*
\right)_{mn}
=
\frac{2L+1}{2\ell+1}I_{mn}
\\
&
=
\frac{2L+1}{2\ell+1}\delta_{mn}
\end{aligned}
\]
We now prove the orthogonality relations for the complex Clebsch Gordan tensors. By the symmetry relations of \cref{th:clebsch_gordan_symmetries} we have 
\[
\hat{Q}^{(\ell, \ell'),L}_{m,m',M}
=
(-1)^{\ell'+m'}\sqrt{\frac{2L+1}{2\ell+1}}\hat{Q}^{(\ell',L),\ell}_{-m', M, m}
\]
combining the above with the orthogonality relations of \cref{th:clebsch_gordan_orthogonality} we obtain:
\[
\begin{aligned}
\sum_{\ell,m} \hat{Q}^{(\ell,\ell'),L}_{m,m',M}\hat{Q}^{L,(\ell,\ell')}_{N,m,n'}
&
=
\sum_{\ell,m}\frac{2L+1}{2\ell+1}
\hat{Q}^{(\ell',L),\ell}_{-m', M, m}
\hat{Q}^{\ell,(\ell',L)}_{m, -n', N}
\\
&
=
\frac{2L+1}{2\ell+1} \delta_{m'n'}\delta_{MN}
\end{aligned}
\]
similarly we have:
\[
\begin{aligned}
\sum_{m',M} \hat{Q}^{L,(\ell,\ell')}_{M,n,m'}\hat{Q}^{(\ell,\ell'),L}_{m,m',M}
&
=
\sum_{m'M}\frac{2L+1}{2\ell+1}
\hat{Q}^{\ell,(\ell',L)}_{n, -m', M}
\hat{Q}^{(\ell',L),\ell}_{-m', M, m}
\\
&
=
\frac{2L+1}{2\ell+1} \delta_{mn}
\end{aligned}
\]
which concludes the proof.
\end{proof}
We can now prove $\iota^{-1}(\iota(W)) = W$ and $\iota(\iota^{-1}(V)) = V$, using \cref{lemma:clebsch_gordan_orthogonality} we have:
\[
\begin{aligned}
\iota^{-1}(\iota(W))^{\ell',L}_{jd,crm'M}
&
=
\frac{2\ell+1}{2L+1}\sum_{\ell,m} Q^{L,(\ell,\ell')}_{M,m,m'} \iota(W)^{(\ell,\ell'),L}_{jd,crm} \\
&
=
\frac{2\ell+1}{2L+1}\sum_{\ell,m} Q^{L,(\ell,\ell')}_{M,m,m'} 
\sum_{n'N} Q^{(\ell, \ell'),L}_{m,n',N} W^{\ell', L}_{jd,crn'N}
\\
&
=
\sum_{n'N} \frac{2\ell+1}{2L+1}\left(\sum_{\ell,m} Q^{L,(\ell,\ell')}_{M,m,m'} 
 Q^{(\ell, \ell'),L}_{m,n',N}\right) W^{\ell', L}_{jd,crn'N}
\\
&
=
\sum_{n'N} \frac{2\ell+1}{2L+1} \frac{2L+1}{2\ell+1} \delta_{m'n'}\delta_{MN} W^{\ell', L}_{jd,crn'N}
\\
&
=
W^{\ell', L}_{crm'M,jd}
\end{aligned}
\]
\[
\begin{aligned}
\iota(\iota^{-1}(V))^{(\ell,\ell'),L}_{jd,crm}
&
=
\sum_{m'M} Q_{m,m',M}^{ (\ell, \ell'), L} \iota^{-1}(V)^{\ell', L}_{jd,crm'M}
=
\sum_{m'M} Q_{m,m',M}^{ (\ell, \ell'), L}\sum_{\ell,n}\frac{2\ell+1}{2L+1}Q^{L,(\ell,\ell')}_{n,m',M} V^{(\ell, \ell'),L}_{jd,crn}
\\
&
=
\sum_{\ell,n}
\frac{2\ell+1}{2L+1} \left(\sum_{m'M} Q_{m,m',M}^{ (\ell, \ell'), L}
Q^{L,(\ell,\ell')}_{n,m',M}\right) V^{(\ell, \ell'),L}_{jd,crn}
\\
&
=
\sum_{\ell,n}
\frac{2\ell+1}{2L+1} \frac{2L+1}{2\ell+1} \delta_{mn}
 V^{(\ell, \ell'),L}_{jd,crn}
\\
&
=
V^{(\ell, \ell'),L}_{jd,crm}
\end{aligned}
\]
which concludes the proof of \cref{th:tfn_vs_se3_conv_2}.
\qed

\subsection{proof of \cref{th:activations_equivariance}}
We prove $\mathrm{SE}(3)$ equivariance of the layers:
\[
\begin{aligned}
&
\mathcal{F} \circ \xi \circ \mathcal{F}^{-1}( \mathbf{CNN}_{\mathrm{SE}(3)}((t,R).f, (t,R).\nu, W, b))^L(x)
\\
&
=
D(R)^L\mathcal{F} \circ \xi \circ \mathcal{F}^{-1}(\mathbf{CNN}_{\mathrm{SE}(3)}(f,\nu, W, b))^L(R^{-1}x - t)
\end{aligned}
\]
the result for networks will follow by induction as in the proof of \cref{th:gcnn_equivariance}. For $\mathcal{F} = \mathcal{W}$ being the "continuous" Wigner transform we have:
\[
\mathcal{W}^{-1}(\mathbf{CNN}_{\mathrm{SE}(3)}(f, \nu, W, b)) 
=
\mathrm{CNN}_{\mathrm{SE}(3)}(f, \nu, W, b)
\]
using the above and specializing \cref{th:gcnn_equivariance} to $\mathrm{SE}(3)$ we have:
\[
\begin{aligned}
&
\mathcal{W} \circ \xi(
\mathcal{W}^{-1}(\mathbf{CNN}_{\mathrm{SE}(3)}((t,R).f, (t,R).\nu, W, b))
)^{L}(x)
\\
&
=
\mathcal{W} \circ \xi(
\mathrm{CNN}_{\mathrm{SE}(3)}((t,R).f, (t,R).\nu, W, b)
)^{L}(x)
\\
&
=
\mathcal{W} \circ \xi(
(t,R).\mathrm{CNN}_{\mathrm{SE}(3)}(f, \nu, W, b)
)^{L}(x)
\\
&
=
\mathcal{W}((t,R).\xi(
\mathrm{CNN}_{\mathrm{SE}(3)}(f, \nu, W, b)
))^{L}(x)
\\
&
=
\int_{\mathrm{SO}(3)} 
\hspace{-5mm}
(t,R).\xi(
\mathrm{CNN}_{\mathrm{SE}(3)}(f, \nu, W, b)
)(x,H) D^L(H) d\mu(H) 
\\
&
=
\int_{\mathrm{SO}(3)} 
\hspace{-5mm} \xi(
\mathrm{CNN}_{\mathrm{SE}(3)}(f, \nu, W, b)
)(R^{-1}x - t,R^{-1}H) D^L(H) d\mu(H)
\\
&
=
\int_{\mathrm{SO}(3)} 
\hspace{-5mm}
\xi(
\mathrm{CNN}_{\mathrm{SE}(3)}(f, \nu, W, b)
)(R^{-1}x - t,M) D^L(RM) d\mu(M)
\\
&
=
D^L(R)\int_{\mathrm{SO}(3)} 
\hspace{-5mm}
\xi(
\mathrm{CNN}_{\mathrm{SE}(3)}(f, \nu, W, b)
)(R^{-1}x - t,M) D^L(M) d\mu(M)
\\
&
=
D^L(R)\mathcal{W} \circ \xi(
\mathcal{W}^{-1}(\mathbf{CNN}_{\mathrm{SE}(3)}(f, \nu, W, b))
)^{L}(R^{-1}x - t)
\end{aligned}
\]
which proves equivalence for $\mathcal{F} = \mathcal{W}$. We now turn to the discrete case, 
let $G = \{R_1, \dots, R_n\} \subset \mathrm{SO}(3)$ a finite subgroup of $\mathrm{SO}(3)$ we denote by $\mathcal{W}_{G}$ the discrete Wigner transform (replacing $\mu$ by $\frac{1}{|G|}\sum_i \delta_{R_i}$. We have:
\[
\begin{aligned}
&
\mathcal{W}_G \circ \xi(
\mathcal{W}^{-1}(\mathbf{CNN}_{\mathrm{SE}(3)}((t,R).f, (t,R).\nu, W, b))
)^{L}(x)
\\
&
=
\mathcal{W}_G((t,R).\xi(
\mathrm{CNN}_{\mathrm{SE}(3)}(f, \nu, W, b)
))^{L}(x)
\\
&
=
\frac{1}{|G|} \sum_i 
(t,R).\xi(
\mathrm{CNN}_{\mathrm{SE}(3)}(f, \nu, W, b)
)(x,R_i) D^L(R_i)
\\
&
=
\frac{1}{|G|} \sum_i 
\xi(
\mathrm{CNN}_{\mathrm{SE}(3)}(f, \nu, W, b)
)(R^{-1}x-t,R^{-1}R_i) D^L(R_i)
\\
&
=
\frac{1}{|G|} \sum_i 
\xi(
\mathrm{CNN}_{\mathrm{SE}(3)}(f, \nu, W, b)
)(R^{-1}x-t,R_{\sigma{i}}) D^L(R_i)
\\
&
=
\frac{1}{|G|} \sum_j 
\xi(
\mathrm{CNN}_{\mathrm{SE}(3)}(f, \nu, W, b)
)(R^{-1}x,R_{j}) D^L(R_{\sigma^{-1}(j)})
\\
&
=
\frac{1}{|G|} \sum_j 
\xi(
\mathrm{CNN}_{\mathrm{SE}(3)}(f, \nu, W, b)
)(R^{-1}x,R_{j}) D^L(RR_{j})
\\
&
=
D^L(R) 
\frac{1}{|G|} \sum_j 
\xi(
\mathrm{CNN}_{\mathrm{SE}(3)}(f, \nu, W, b)
)(R^{-1}x,R_{j}) D^L(R_{j})
\\
&
=
D^L(R)\mathcal{W} \circ \xi(
\mathcal{W}^{-1}(\mathbf{CNN}_{\mathrm{SE}(3)}(f, \nu, W, b))
)^{L}(R^{-1}x-t)
\end{aligned}
\]
which concludes the proof of \cref{th:activations_equivariance}.
\qed

\end{document}